\DeclareMathOperator{\tr}{tr}
\title{Quantum Maximin Surfaces}
\author[1]{Chris Akers,}
\author[1]{Netta Engelhardt,}
\author[2]{Geoff Penington,}
\author[3]{and Mykhaylo Usatyuk}
\affiliation[1]{Center for Theoretical Physics,\\
Massachusetts Institute of Technology, Cambridge, MA 02139, USA}
\affiliation[2]{Stanford Institute for Theoretical Physics,\\ Stanford University, Stanford, CA 94305 USA}
\affiliation[3]{Center for Theoretical Physics and Department of Physics,\\
University of California, Berkeley, CA 94720, U.S.A. and}
\emailAdd{cakers@mit.edu}
\emailAdd{engeln@mit.edu}
\emailAdd{geoffp@stanford.edu}
\emailAdd{musatyuk@berkeley.edu}
\abstract{We formulate a quantum generalization of maximin surfaces and show that a quantum maximin surface is identical to the minimal quantum extremal surface, introduced in the EW prescription. We discuss various subtleties and complications associated to a maximinimization of the bulk von Neumann entropy due to corners and unboundedness and present arguments that nonetheless a maximinimization of the UV-finite generalized entropy should be well-defined. We give the first general proof that the EW prescription satisfies entanglement wedge nesting and the strong subadditivity inequality. In addition, we apply the quantum maximin technology to prove that recently proposed generalizations of the EW prescription to nonholographic subsystems (including the so-called ``quantum extremal islands'') also satisfy entanglement wedge nesting and strong subadditivity. Our results hold in the regime where backreaction of bulk quantum fields can be treated perturbatively in $G_{N}\hbar$, but we emphasize that they are valid even when \textit{gradients} of the bulk entropy are of the same order as variations in the area, a regime recently investigated in new models of black hole evaporation in AdS/CFT.
}
\begin{document}
\maketitle
\section{Introduction}\label{sec:Intro}
Holographic entanglement entropy has served as a valuable tool for decoding numerous facets of the AdS/CFT correspondence, from subregion/subregion duality~\cite{CzeKar12, Wal12, DonHar16} to holographic quantum error correction~\cite{AlmDon14, Har16, HaPPY}. Recently, it has become clear that highly nontrivial aspects of the correspondence manifest under inclusion of perturbative quantum backreaction on the bulk geometry. Non-complementary recovery of the bulk~\cite{HayPen18, AkeLei19}, critical aspects of approximate quantum error correction~\cite{HayPen18}, and most recently the computation of a unitary Page curve for evaporating black holes~\cite{Pen19, AEMM} are all examples of phenomena that show up exclusively in the quantum regime. 

The EW prescription for the quantum-corrected holographic entanglement entropy~\cite{EngWal14} calls for an extremization of a ``quantum-corrected area'' functional, the generalized entropy~\cite{Bek72}:
\be
S_{\mathrm{gen}}= \braket{S_{\mathrm{grav}}} +S_{\mathrm{vN}},
\ee
where $S_{\mathrm{gen}}$ is evaluated on some surface $\sigma$; $S_{\mathrm{grav}}$ is the higher-derivative corrected area (the generalization of the Wald entropy~\cite{IyeWal94}), and $S_{\mathrm{vN}}$ is the von Neumann entropy on one side of $\sigma$ (more precise definitions to follow in Section~\ref{sec:SemiclassicalPreliminaries}).
We write $\braket{S_\mathrm{grav}}$ as an expectation value to emphasize that it consists of curvature operators, which must be smeared to be well-defined~\cite{Lei17, AkeCha17}. This will be discussed at length in the body of the paper.
A surface that extremizes this quantity is called a \textit{quantum extremal surface}~\cite{EngWal14}. The EW prescription then states that
\be
-\tr \rho_{R}\ln \rho_{R} = S_{\mathrm{vN}}[\rho_{R}] = S_{\mathrm{gen}}[{\cal X}_{R}],
\ee
where $\rho_{R}$ is the reduced density matrix of a boundary region $R$ and ${\cal X}_{R}$ is the quantum extremal surface (QES) homologous to $R$. If there is more than one QES, we pick the one with minimal generalized entropy. The entanglement wedge of $R$, denoted $W_{E}[R]$, is then the region which is in between ${\cal X}_{R}$ and $R$ and spacelike to ${\cal X}_{R}$. See Figure~\ref{fig:EntanglementWedge}. Subregion/subregion duality for the EW prescription is the hypothesized equivalence between $\rho_R$ and $W_{E}[R]$.\footnote{Here we really mean that $\rho_R$ is dual to the state of all bulk quantum fields in the entanglement wedge (including the metric).} 

\begin{figure}
\centering
\includegraphics{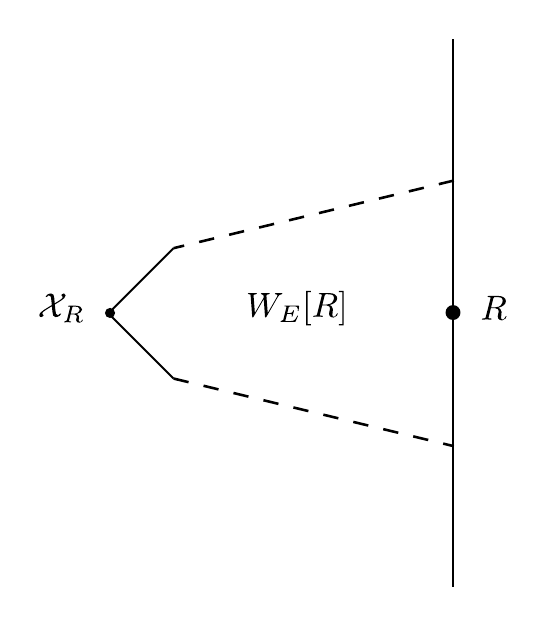}
\caption{The quantum extremal surface ${\cal X}_{R}$ for boundary subregion $R$ with associated entanglement wedge $W_{E}[R]$. The dashed lines are spacelike as a result of caustics, which occur generically. }
\label{fig:EntanglementWedge}
\end{figure}

Consistency requires that the dominant QES satisfy at least three requirements: (1) always lie behind the appropriate causal horizon, (2) boundary CFT entropy of three regions $R_{1},\ R_{2}, \ R_{3}$ computed in this way satisfies the strong subadditivity (SSA) inequality:
\be
S_{\mathrm{vN}}[R_{1}R_{2}]+S_{\mathrm{vN}}[R_{2}R_{3}]\geq S_{\mathrm{vN}}[R_{1}R_{2}R_{3}]+S_{\mathrm{vN}}[R_{2}]~,
\ee
that is, the generalized entropies must satisfy:
\be
S_{\mathrm{gen}}[{\cal X}_{R_{1}R_{2}}]+S_{\mathrm{gen}}[{\cal X}_{R_{2}R_{3}}]\geq S_{\mathrm{gen}}[{\cal X}_{{}R_{1}R_{2}R_{3}}]+S_{\mathrm{gen}}[{\cal X}_{R_{2}}]~,
\ee
and (3) the entanglement wedges of two nested boundary regions $R'\subset R$ also nest: $W_{E}[R']\subset W_{E}[R]$. 

Let us spend a moment explaining the reasoning for (1) and (3), as (2) is self-evident. If (1) were false, then it would be possible to modify the entanglement entropy of $\rho_{R}$ by turning on some local unitary operator, in violation of invariance of $S_{\mathrm{vN}}$ under unitary transformations. (3) is a bit more subtle: since subregion/subregion duality conjectures an isomorphism between bulk and boundary operator algebras, and since $R'\subset R$ implies that the operator algebras are also nested under inclusion ${\cal A}_{R'}\subset {\cal A}_{R}$, it immediately implies that the same must be true of the bulk. But clearly if $W_{E}[R']$ contains points that are not  in $W_{E}[R]$, then there are local operators in the operator algebra of $R'$ that are not in the operator algebra of $R$. 

Point (1) was proven in~\cite{EngWal14} using the Generalized Second Law (GSL), which states that the generalized entropy is nondecreasing towards the future along causal horizons (see~\cite{Wal10Proofs} for a review). Points (2) and (3), however, remain unproven. When it was thought that in the semiclassical regime the dominant QES always lies within a Planck length of the minimal classical extremal surfae (the HRT surface~\cite{RyuTak06, HubRan07}), this seemed less problematic, since any violations would at most be Planck-scale and only appear in non-generic cases where the classical spacetime saturates the conditions. However, it was recently shown~\cite{Pen19, AEMM} that QESs can lie arbitrarily far from classical extremal surfaces. This puts a new urgency on proving (2) and (3) for QESs. 

The reason that the proofs have remained missing for so long  -- the EW prescription is now over five years old -- is that at the classical level the proofs require the vastly different machinery of Wall's \textit{maximin} construction~\cite{Wal12}, an alternative (but equivalent) formulation of the HRT proposal. Maximin surfaces are defined operationally in the following way: given a boundary region $R$, we consider all bulk Cauchy slices that contain $R$.\footnote{In the original maximin proposal~\cite{Wal12} all Cauchy slices that contained $\partial R$ were allowed in the maximinimization. In the restricted maximin proposal~\cite{MarWal19} only Cauchy slices fixed to a chosen boundary slice $C_{\partial}$, with $R \subset C_{\partial}$, were maximinimized over.} On each such Cauchy slice $C$, we find the minimal area surface homologous to $R$, denoted $\min [C, R]$. We then maximize the area of $\min[C, R]$ over the set of all Cauchy slices $C$; the resulting surface is the so-called maximin surface of $R$. Under certain assumptions, Wall then argues that the resulting surface is the HRT surface of $R$. The minimality of the maximin on some Cauchy slice is a powerful tool that, in conjunction with the Null Convergence Condition ($R_{ab}k^{a}k^{b}\geq 0$ for null $k^{a}$), NCC for short, can be used to show that HRT surfaces satisfy SSA and that their entanglement wedges nest~\cite{Wal12, HeaTak, EngHarTA}.\footnote{The argument uses maximin technology and focusing to reduce the proof of the covariant formulation to the proof in the static case~\cite{HeaTak,EngHarTA}.}

Partly to close the nesting and SSA gap in the QES literature, partly to justify a component of the arguments of~\cite{Pen19, BouCha19}, and partly for newfound applications to nonholographic systems coupled to the bulk, we develop the quantum analogue of maximin surfaces -- quantum maximin surfaces. These are operationally defined in much the same way as the original maximin, but with the generalized entropy replacing the area as the quantity to be minimized and subsequently maximized. 

Need a quantum maximin surface always exist? Even at the classical level, this is not known in full rigor and broad generality. At the quantum level, new subtleties appear involving UV divergences and oscillations of surfaces on transplanckian scales. We discuss these various subtleties in Section~\ref{sec:existence} and conclude that nonetheless the overwhelming evidence is in favor of the existence of such surfaces. 

The proofs of nesting and SSA do not generalize quite immediately from the classical maximin case, especially in situations (which we do \textit{not} exclude) in which the bulk is coupled to an external system and allowed to evolve as a mixed state on its own. In adapting the arguments to the quantum regime, we necessarily need to replace the NCC with an analogous statement that guarantees a quantum analogue of gravitational lensing -- the quantum focusing conjecture (QFC)~\cite{BouFis15}. However, a naive application of the latter to situations in which the bulk is allowed to exchange radiation with an external system yields wrong results. Great care must be taken in applying the QFC to such setups. We must also make use, in both proofs, of strong subadditivity of bulk entropy. 

We additionally investigate another inequality on the classical entropy cone: monogamy of mutual information (MMI). This is the inequality:

\begin{equation}
	S_{\mathrm{vN}}[R_1 R_2] + S_{\mathrm{vN}}[R_1 R_3] + S_{\mathrm{vN}}[R_2 R_3] \ge S_{\mathrm{vN}}[R_1] + S_{\mathrm{vN}}[R_2] + S_{\mathrm{vN}}[R_3] + S_{\mathrm{vN}}[R_1 R_2 R_3]~,	
\end{equation}
for boundary subregions $R_i$. This has been established by~\cite{HayHeaMal13} in the classical static case and by the maximin formalism in the covariant formulation. However, it is an inequality which is not in general true of quantum field theories. We therefore should not expect it necessarily to hold for holographic states once we include perturbative quantum corrections. Indeed, it is easy to show that boundary MMI can be violated if the bulk quantum fields themselves violate MMI.
We point out that the converse is less obvious. While proving boundary SSA involves an inequality on three bulk regions, which turns out to be bulk SSA, the inequality for boundary MMI involves {\it seven} bulk regions.
We leave to future work the question of whether the assumption of bulk MMI is sufficient to derive this seven-party inequality (and more generally what assumptions on bulk entropies are sufficient for any holographic inequality satisfied classically to remain valid under bulk quantum corrections).

The quantum maximin formalism turns out to be powerful enough to accommodate a new twist on the EW prescription that includes a nonholographic external system coupled to our holographic system. This idea was originally introduced in \cite{HayPen18}, where it was shown that a rigorous derivation of entanglement wedge reconstruction, using the full machinery of approximate quantum error correction, requires considering quantum extremal surfaces for the \emph{combination} of a boundary region $R$ and an entangled nonholographic auxiliary system $Q$.\footnote{We use notation here consistent with the rest of this paper. Slightly confusingly, in \cite{HayPen18}, the boundary region was denoted by $\bar{A}$, while the auxiliary system was denoted by $R$.}  

Surprisingly, as shown in \cite{Pen19, AEMM}, this new type of quantum extremal surface may be nonempty, even when the boundary region $R$ is empty (i.e. we only look at the entangled nonholographic system $Q$). This idea was developed further in~\cite{AMMZ}, where it was called the ``quantum extremal island conjecture.'' 

Formally, a quantum extremal surface ${\cal X}_{R,Q}$ for the combination of boundary region $R$ and nonholographic system $Q$ is a codimension-two bulk surface that is an extremum of a generalization of generalized entropy, which we shall dub the \textit{hybrid entropy}:
\be
S_{\mathrm{hyb}}[R,Q]\equiv \mathrm{ext}_{{\cal X}_{R,Q}} \left [S_{vN}[H_{R,Q} \cup Q] + \braket{S_{\mathrm{grav}}[{\cal X}_{R,Q}]} \right]~,
\ee
where $\partial H_{R,Q} = {\cal X}_{R,Q} \cup R$ and $S_{vN}[H_{R,Q} \cup Q]$ is the von Neumann entropy of $Q$ together with bulk fields in $H_{R,Q}$.\footnote{This formula generalizes the formulas given in, for example, Eqn. (4.14) of \cite{HayPen18} and Eqn. (15) of ~\cite{AMMZ} to include higher derivative corrections.} If $R$ is empty, $H_{R,Q}$ forms an `island' in spacetime, bounded by ${\cal X}_{R,Q}$.
The proposal of \cite{HayPen18,Pen19,AMMZ} is that the entropy of $R \cup Q$ is given by the hybrid entropy of the minimal-hybrid-entropy quantum extremal surface. For certain classes of matter fields and two bulk spacetime dimensions, a replica argument was recently used to argue for the ``island'' formula ~\cite{PenShe19, AlmHar19}.

For this prescription to have any hope of being successful (especially in more than two spacetime dimensions, where it is conjectured to hold~\cite{AlmMahSan19}), it must be the case the the hybrid entropy satisfies SSA. Because the new proposals also suggest that the combination of $R$ and $Q$ encodes an entanglement wedge, nesting is necessary as well. By implementing a small modification of the definition of a quantum maximin surface, we are able to show that this is indeed the case: extrema of the hybrid entropy do satisfy both SSA and nesting. 

The paper is structured as follows. In Section \ref{sec:2}, we review the necessary semiclassical gravity preliminaries, define notation, and discuss the types of bulk boundary conditions that we will consider. We comment on subtleties involved in evaluating the generalized entropy of surfaces with open and absorbing boundary conditions.

In Section \ref{sec:3}, we start by reviewing the classical maximin construction. We then define and argue for the existence of quantum maximin surfaces and show that they are equivalent to quantum extremal surfaces.

In Section \ref{sec:applications}, we use the quantum maximin construction to prove SSA, nesting, and that the entanglement wedge contains the causal wedge. We also comment on other holographic entropy inequalities, providing a counterexample to MMI at next to leading order.

In Section \ref{sec:nonholographic}, we extend the quantum maximin construction to the case where nonholographic degrees of freedom are entangled with a holographic system. We argue that the modified quantum maximin surface exists, and prove that it is equivalent to the hybrid entropy extremal surface. Using a slightly modified quantum maximin construction we prove SSA and EWN for the hybrid entropy, providing a crucial consistency check on the conjecture. 

In the appendix, we comment on subtleties related to the existence of quantum maximin surfaces in the presence of higher curvature corrections. We focus on situations where subleading terms in the gravitational entropy compete with the area term, and argue that the naive gravitational entropy is not well defined for surfaces with transplanckian fluctuations even when all extrinsic curvatures are small relative to the Planck scale.

\section{Preliminaries}\label{sec:2}
In this section, we describe various conventions, assumptions, and concepts that will be used throughout. Any conventions not explicitly stated are as in~\cite{Wald}.

\subsection{Bulk and Semiclassical Gravity Preliminaries}
\label{sec:SemiclassicalPreliminaries}

We work on a smooth manifold $M$ with a $C^{2}$ background metric $g^{(0)}$. Throughout the rest of this paper, we assume the semiclassical expansion, in which we include the backreaction due to quantum fields propagating on our spacetime; as a consequence the spacetime metric admits a description as a perturbative series in some small parameter around the fixed classical background $g^{(0)}_{ab}$:
\be \label{eq:semiclassical}
g_{ab} = g^{(0)}_{ab} + g^{(1/2)}_{ab}+ g^{(1)}_{ab}+ g^{(3/2)}_{ab}+\cdots
\ee
where $g^{(n)}_{ab}$ is the term in the expansion which is of order $n$ in the small parameter. (The fractional powers are a result of graviton contributions.) We will consider an expansion in both $\alpha'$ and $G_{N}\hbar$, including both higher derivative and quantum corrections, since the latter normally incurs the former. We assume that the spacetime $(M,g)$ is stably causal and connected (but possibly with multiple asymptotic boundaries). We will generally assume that $(M, g^{(0)})$ is asymptotically locally AdS unless otherwise stated.

We will largely work with the conventions of~\cite{EngFis19}, where a surface is defined as an achronal, embedded, codimension-two submanifold $\sigma$.\footnote{That is, a surface is an embedding map $f:U\rightarrow M$ from a manifold $U$ into $M$ such that the image $\sigma\equiv f(U)$ in $M$ is spacelike and achronal.} Since $(M,g)$ is stably causal, we may define a time function on it whose level sets are constant time slices. In an abuse of notation (since we are not assuming global hyperbolicity), we refer to such slices as Cauchy slices. A surface is said to be \textit{Cauchy-splitting} if it divides a Cauchy slice $C$ into two disjoint components $C_{\mathrm{in}}$ and $C_{\mathrm{out}}$, where $C=C_{\mathrm{in}}\cup \sigma\cup C_{\mathrm{out}}$. We will add Cauchy splitting to the definition of a surface, so that all surfaces discussed are Cauchy-splitting. We define a hypersurface (by contrast with a surface) as an embedded codimension-one submanifold. 

We will be interested in the EW prescription, which is the quantum generalization of the HRT proposal for computing the entropy of boundary subregions. To this end, let us first state our conventions about the boundary and then review the HRT and EW proposals in detail.  We will assume that the metric on $\partial M$ is geodesically complete\footnote{In any conformal class it is possible to engineer a pathological conformal factor that makes $\partial M$ geodesically incomplete; we assume that there exists a representative in the conformal class of the boundary metric where this is not true, and we work with that conformal factor.} and smooth. We define a \textit{region} $R$ on the boundary as a finite union of smooth connected codimension-one subsets of $\partial M$ where the entire union is acausal. 

We will say that a surface $\sigma$ is \textit{homologous} to a boundary region $R$ if there exists an acausal hypersurface $H$ such that $\partial H = \sigma\cup R$. We call $H$ the homology hypersurface of $\sigma$ to $R$~\cite{EngHarTA}. If $\sigma$ is the quantum extremal surface picked out by RT/HRT/EW prescriptions~\cite{RyuTak06, HubRan07, EngWal14}, (see below for definitions), we denote $H$ as $H_{R}$, and the domain of dependence $D[H_{R}]$ is the entanglement wedge of $R$, denoted $W_{E}[R]$.

Let us now remind the reader of the HRT proposal. The proposal relates the von Neumann entropy $S_{\mathrm{vN}}=-\tr \rho_{R} \ln \rho_{R}$ of a boundary subregion $R$ to the area of the minimal area \textit{extremal} surface $\sigma_{R}$ homologous to a boundary region $R$. Here \textit{extremal} means that the mean curvature vector $K^{a}$ vanishes. This vector is the trace of the extrinsic curvature, and is defined for an arbitrary surface $\sigma_{R}$ as:
\be \label{eq:classicalcurvature}
K^{a}\equiv K^{a}_{bc}h^{bc} = -h^{bc} h^{d}_{b}h^{e}_{c} \nabla_{d}h_{e}^{a}
\ee
(here $h_{ab}$ is the induced metric on $\sigma_{R}$). Physically speaking, the sign of $K^{a}n_{a}$ computes whether the area of $\sigma_{R}$ increases or decreases as we deform $\sigma_{R}$ along some vector field $n_{a}$. For a null direction $k^{a}$, the contraction $K^{a}k_{a}[\sigma_{R},y]$ is called the null expansion of $\sigma_{R}$ at internal coordinate $y$, and it takes the form
\be 
\theta[\sigma,y]_{k}=K^{a}[\sigma,y]k_{a}\propto \frac{d \mathrm{\delta A}}{d\lambda}
\ee
where $\delta A$ is an infinitesimal area element and $\lambda$ is the affine parameter along $k^{a}$. Thus the sign of $\theta[\sigma,y]_{k}$ encodes whether the area is increasing or decreasing as we evolve from $\sigma$ along $k^{a}$. As explained above, an extremal surface is defined as a surface where $K^{a}=0$ everywhere. To reiterate, the \textit{HRT surface} is the minimal area extremal surface homologous to $R$. 

The quantum generalization of this prescription follows a long tradition of replacing the area in a classical statement by the generalized entropy to obtain a quantum analogue. Let us briefly review the motivation behind this replacement. Standard theorems about areas of surfaces such as the focusing theorem or the Hawking area theorem ordinarily use the NCC, and in particular the null energy condition ($T_{ab} k^{a}k^{b}\geq 0$, which is equivalent to the NCC when the semiclassical Einstein equation holds), which is routinely violated in quantum field theory~\cite{Haw75, EpsGla65, Cas48, Dav76, DavUnr76, Dav77}. The technique for fortifying such statements against quantum correction calls for a replacement of the area by the so-called generalized entropy (originally by~\cite{Bek73, Haw75}, since then this has been a remarkably successful research program; see~\cite{Wal10QST, EngWal14, BouFis15, BouEng15c} for just a few examples).

Let $\sigma$ be a Cauchy-splitting surface as defined above, homologous to $R$ with homology hypersurface $H$. The \textit{generalized entropy} of $\sigma$ is defined:
\be \label{eq:Sgen}
S_{\mathrm{gen}}[\sigma] = \braket{S_{\mathrm{grav}}[\sigma]} + S_{\mathrm{vN}}[\sigma].
\ee
Here $S_{\mathrm{grav}}[\sigma]$ is the higher derivative functional which replaces the area~\cite{Don13,GuoMia14,Cam13} (similar but more general than the Wald entropy~\cite{IyeWal94}) and $S_{\mathrm{vN}}[\sigma]=-\mathrm{tr} \left ( \rho_{H} \ln  \rho_{H}\right)$ is the von Neumann entropy of the reduced density matrix on $H$.

We now define the quantum mean curvature~\cite{EngFis19}, a generalization of $K^{a}$ that includes quantum corrections by replacing area with generalized entropy.~\footnote{Technically,~\cite{EngFis19} only showed that the quantum mean curvature is well-defined as a distributional tensor for quantum corrections, not including higher derivative corrections; here we will include higher derivative corrections. The functional is not known in general closed form, but we provide examples to illustrate the point.}
\be\label{eqn:qmc}
    {\cal K}_a[\sigma,y] = 4 G_{N} \hbar \frac{{\cal D} S_\mathrm{gen}}{{\cal D} X^a[y]}~.
\ee
where $X^a[y]$ are the embedding coordinates of $\sigma$ and ${\cal D}/{\cal D}X^{a}$ is a \textit{functional covariant derivative}; see~\cite{EngFis19} for details. The quantum expansion (originally defined in~\cite{BouFis15}, though see also~\cite{Wal10QST, EngWal14} for earlier work) can then simply be defined as:
\be
\Theta[\sigma,y]_k = {\cal K}_a[\sigma,y] k^a~.
\ee
This encodes how the generalized entropy changes with deformations of $\sigma$ in the $k^a$ direction. We will further assume a quantum null generic condition about the quantum expansion: that it cannot remain zero for a finite amount of affine parameter.

A \textit{quantum extremal surface} (QES) is a surface ${\cal X}_{R}$ with vanishing quantum mean curvature. The EW prescription relates the von Neumann entropy $S_{\mathrm{vN}}[\rho_{R}]$ of a boundary region $R$ to the quantum extremal surface homologous to $R$ with smallest $S_{\mathrm{gen}}$:
\be
S_{\mathrm{vN}}[\rho_{R}] = S_{\mathrm{gen}}[{\cal X}_{R}]~,
\ee
where ${\cal X}_{R}$ is the minimal $S_{\mathrm{gen}}$ quantum extremal surface homologous to $R$.\footnote{For recent work on proving this prescription see~\cite{DonLew17}.} The entanglement wedge in this quantum corrected prescription is $D[H_{R}]$, the domain of dependence of the homology hypersurface of ${\cal X}_{R}$, which in general will not coincide with the entanglement wedge defined by the homology hypersurface of the minimal area \textit{classical} extremal surface homologous to $R$. Indeed, the recent interest in QESs stems from the discovery that they can be displaced from the classical (HRT) surface by a large amount~\cite{Pen19, AEMM}, or exist in situations where the HRT surface is the empty set.  

Is the generalized entropy of an arbitrary surface well-defined and UV-finite? This question cannot be answered completely satisfactorily without a direct formulation of nonperturbative quantum gravity. However, there is mounting evidence that the generalized entropy is well-defined and UV-finite for general surfaces (see~\cite{BouFis15} for a review and references within, as well as~\cite{DonMar19} for a recent discussion). Historically, it was initially noted that, for minimally coupled scalars, the renormalization of $G_{N}$ due to radiative corrections was essentially the counterterm to the von Neumann entropy divergence~\cite{FurSol94}. Since then, various theories from non-minimally coupled scalars to spinors have been studied where divergences of the matter entropy were absorbed in renormalization of $G_{N}$ (though notably it is not yet clear how graviton contributions to $S_{\mathrm{vN}}$ compare with the renormalized $G_{N}$, as gravitons are difficult to define off-shell). The higher derivative corrections are likewise significant, as they contain extrinsic curvature terms which are expected to balance contributions to $S_{\mathrm{vN}}$ which may diverge due to e.g. corner divergences (see~\cite{BouFis15, DonMar19} and references therein).

In this work we will make the assumption that the generalized entropy is UV-finite. The more mathematically-inclined reader may find this well-motivated but admittedly unproven statement objectionable; however, we follow a mainstream approach in the literature in which the mounting wall of evidence in favor of the finiteness of generalized entropy is used as a motivator for simply asserting the desired assumption. Another potential complaint is that surfaces do not have a sharply-defined location once we consider perturbations to the geometry. This Planck-scale fuzziness suggests that we should only consider surfaces as localized objects up to Planckian scales. Under the implementation of a UV cutoff at energy scale no higher than the Planck scale, this however leaves the generalized entropy of a surface well-defined. This will be discussed at greater length in Sec.~\ref{sec:existence} for surfaces which minimize the generalized entropy. 

It is also worth noting that even the classical area can be divergent if the surface $\sigma$ is not compact, for example if  the boundary region $R$ has nonempty boundary $\partial R$. This divergence needs to be regulated by cutting off the bulk spacetime at some finite radius. The higher-curvature corrections and bulk von Neumann entropy $S_{\mathrm{vN}} [\sigma]$ will also have IR divergences that are regulated in the same way.

Returning to the quantum expansion $\Theta[\sigma,y]_k$, let us consider a specific example to gain intuition about the relative contribution of the geometric and quantum terms. 
This quantity was computed explicitly in \cite{AkeCha17} for the special case of four-derivative gravity, 
\be
    I_\mathrm{grav} = \frac{1}{16\pi G_{N}} \int \sqrt{g} \big(R + \ell^2 \lambda_1 R^2 + \ell^2 \lambda_2 R_{ab}R^{ab} + \ell^2 \lambda_{GB} {\cal L}_\mathrm{GB}\big)~,
\ee
where ${\cal L}_\mathrm{GB} = R^2_{abcd}-4R^2_{ab}+R^2$ is the Gauss-Bonnet curvature, $\ell$ is the the cutoff length scale of the effective field theory, and $\lambda_1,\lambda_2,\lambda_\mathrm{GB}$ are assumed to be renormalized. 
The generalized entropy functional for this theory coupled to matter has been computed using replica methods \cite{Don13},
\begin{align}
    S_\mathrm{gen}[\sigma] = \frac{A[\sigma]}{4 G_{N}} + \frac{\ell^2}{4 G_{N}}\int_\sigma \sqrt{h} \big[ 2 \lambda_1 R + \lambda_2 \big(R_{ab} N^{ab} - \frac{1}{2} K_a K^a \big) + 2 \lambda_\mathrm{GB} r \big] + S_\mathrm{vN,ren}~,
\end{align}
where $N^{ab}$ is the projector onto the normal space of $\sigma$, $r$ is the intrinsic Ricci scalar of $\sigma$, and $S_\mathrm{vN,ren}$ is the renormalized von Neumann entropy.
With this, the authors of \cite{AkeCha17} computed 
\begin{equation}
\begin{split}
    \Theta[\sigma]_k =& \theta[\sigma]_k + \ell^2\bigg[ 2\lambda_1 ( \theta[\sigma]_k R + \nabla_k R) + \lambda_2 \big( (D_i - \omega_i)^2 \theta[\sigma]_k + K_{a}K^{aij}K_{ij}^k \\
    &+\theta[\sigma]_k R_{klkl} + \nabla_k R - 2 \nabla_l R_{kk} + \theta[\sigma]_k R_{kl} - \theta[\sigma]_l R_{kk} + 2 K^{kij}R_{ij}\big)\\
    &-4 \lambda_\mathrm{GB}\big( r^{ij}K^k_{ij} - \frac{1}{2} r \theta[\sigma_k]\big) \bigg] + 4 G_{N} \frac{k^a}{\sqrt{h}} \frac{{\cal D} S_\mathrm{vN,ren}}{{\cal D} X^a}~,
\end{split}
\end{equation}
where indices $i,j$ are coordinates intrinsic to the surface $\sigma$, indices $k,l$ denote contraction of ambient indices $a,b$ with the respective orthogonal null direction $k$ or its normal $l$, $D_i$ is the covariant derivative intrinsic to $\sigma$, $\omega_i \equiv l_a D_i k^a$ is the normal connection (also known as the twist potential), and $X^a$ are the embedding coordinates of $\sigma$. 
We have suppressed the dependence of every term on the point $y$. 



The quantum expansion obeys a powerful lemma, first proven in \cite{Wal10QST}, which we will use in the proofs of nesting in Sections \ref{sec:applications} and \ref{sec:nonholographic}. 
\begin{lem}\label{Wallemma} (Wall's Lemma)~\cite{Wal10QST}
Let $\sigma,\sigma'$ be two co-dimension two surfaces that contain the point $p \in \sigma \cap \sigma'$ such that they are also tangent at $p$. By definition, both surfaces split a Cauchy surface in two. We will arbitrarily choose one side of $\sigma$ on some Cauchy slice $C$ to be In$[\sigma]$ and take $\sigma'$ to be a surface that does not intersect Out$[\sigma]$ anywhere. We will make a choice of labels of In and Out for $\sigma'$ which is consistent with the choice for $\sigma$:  $D[In[\sigma']]\subset D[In[\sigma]]$. Let $\ell^{a}$ be a future-directed null vector field on $\sigma$, which, when projected onto a Cauchy slice $C$ containing $\sigma$, point towards In$[\sigma]$. We will also define $\ell^{a}$ on $\sigma'$ analogously (if $\sigma$ is the surface in the EW prescription, this is the future generator of the entanglement wedge of one side of $\sigma$). Let $\theta$ represent the null expansion of $\ell^{a}$ at $p$, see Figure~\ref{fig:WallsLemma}. 
In the classical regime, an older proof of Galloway's~\cite{Gal99} shows that
\begin{align}
    \theta[\sigma] \ge \theta[\sigma']~.
\end{align}
In the semiclassical regime, Wall improves this result (and gives an alternative proof of the classical version) to bound the quantum expansions 
\begin{align}
    \Theta[\sigma] \ge \Theta[\sigma']~.
\end{align} \end{lem}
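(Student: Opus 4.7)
The natural approach is to split the quantum expansion as $\Theta = \theta + (4G_{N}\hbar/\sqrt{h})\,\ell^{a}\,\mathcal{D}S_{\mathrm{vN}}/\mathcal{D}X^{a}$ and prove $\Theta[\sigma](p) \ge \Theta[\sigma'](p)$ term by term. The classical piece $\theta[\sigma] \ge \theta[\sigma']$ is exactly Galloway's statement and may be quoted: tangency at $p$ together with $\sigma' \cap \mathrm{Out}[\sigma] = \emptyset$ orders the $\ell^{a}$-projected second fundamental forms of $\sigma$ and $\sigma'$ in the correct way. Higher-derivative contributions to $\langle S_{\mathrm{grav}}\rangle$ can be folded into this classical piece by the same pointwise comparison of intrinsic and extrinsic curvature invariants at $p$, whose orderings are fixed by tangency plus containment. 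The content of the lemma is therefore concentrated in the von Neumann piece.

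For the entropy contribution, I would realize both functional derivatives as limits of a common deformation. Fix a non-negative bump function $\phi$ supported in a small neighborhood $U$ of $p$ and define $\sigma_{\epsilon}$, $\sigma'_{\epsilon}$ by translating $\sigma$, $\sigma'$ along the null vector field $\epsilon\phi\,\ell^{a}$. Because the deformation is identical at $p$, both tangency and the containment hypothesis survive: $\mathrm{In}[\sigma'_{\epsilon}] \subset \mathrm{In}[\sigma_{\epsilon}]$. I would then apply strong subadditivity in submodularity form to $A = \mathrm{In}[\sigma_{\epsilon}]$ and $B = \mathrm{In}[\sigma']$:
\begin{equation}
S_{\mathrm{vN}}\big[\mathrm{In}[\sigma_{\epsilon}]\big] + S_{\mathrm{vN}}\big[\mathrm{In}[\sigma']\big] \;\ge\; S_{\mathrm{vN}}[A\cup B] + S_{\mathrm{vN}}[A\cap B]\,.
\end{equation}
A short local calculation in $U$ shows $A\cup B = \mathrm{In}[\sigma]$ and $A\cap B = \mathrm{In}[\sigma'_{\epsilon}]$ up to a ``lens'' region concentrated at $p$ of transverse extent $O(\sqrt{\epsilon})$ and normal thickness $O(\epsilon)$, hence of measure $O(\epsilon^{3/2})$. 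Rearranging and dividing by $\epsilon$ yields, in the limit $\epsilon\to 0$, the inequality $\ell^{a}\,\mathcal{D}S_{\mathrm{vN}}/\mathcal{D}X^{a}|_{\sigma}(p) \ge \ell^{a}\,\mathcal{D}S_{\mathrm{vN}}/\mathcal{D}X^{a}|_{\sigma'}(p)$, which combined with Galloway gives the lemma.

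The main obstacle is justifying that the $O(\epsilon^{3/2})$ lens produces only an $o(\epsilon)$ correction to $S_{\mathrm{vN}}$, so that it drops out of the derivative. Von Neumann entropy in a QFT is not simply a function of region measure, and one genuinely needs input from the short-distance structure of the theory. I would close this gap by invoking the same smearing and finiteness framework that underlies the distributional definition of the quantum mean curvature itself (cf.~\cite{EngFis19}), where entropy variations under small region modifications are controlled by the codimension-two area on which they are supported. Since the lens is concentrated on a codimension-two locus of transverse size $\sqrt{\epsilon}$ along $\sigma$, any such bound kills the subleading correction. A variant strategy that may simplify the error analysis is to write the inequality as a three-region SSA with $A = \mathrm{In}[\sigma']\setminus\mathrm{In}[\sigma'_{\epsilon}]$, $B = \mathrm{In}[\sigma'_{\epsilon}]$, $C = \mathrm{In}[\sigma_{\epsilon}]\setminus\mathrm{In}[\sigma'_{\epsilon}]$, so that $B$, $BA$, and $BC$ are exactly the four surfaces whose entropies we want to compare and only $ABC$ misses $\mathrm{In}[\sigma]$ by the lens; pushing the deformation of one surface slightly harder than the other absorbs the mismatch without affecting the leading order in $\epsilon$.
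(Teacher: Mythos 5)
Note first that the paper does not actually reprove this lemma: it quotes Wall's result and records only that the von Neumann piece rests on weak monotonicity of relative entropy (equivalently, strong subadditivity) and that the higher\hyp{}curvature terms are discarded as subleading in the semiclassical expansion. Your strategy for the entropy term --- deform both surfaces by a common null profile $\epsilon\phi\,\ell^a$ near the tangent point and feed the resulting regions into submodularity --- is the same mechanism as the cited proof, so that part of the proposal is on the right track. However, two of your steps fail as written. The first is the claim that the higher\hyp{}derivative contributions to $\braket{S_{\mathrm{grav}}}$ can be ``folded into'' the classical piece by a pointwise comparison of curvature invariants ``fixed by tangency plus containment.'' Tangency plus one\hyp{}sided containment orders only the second fundamental forms as quadratic forms, $r_c K^c_{ab}[\sigma']v^a v^b \le r_c K^c_{ab}[\sigma]v^a v^b$, and hence their traces; it does \emph{not} order cubic contractions such as $K_a K^{aij}K^k_{ij}$, the intrinsic curvature $r$, or ambient gradients $\nabla_k R$ appearing in the explicit expression for $\Theta$ in Section~\ref{sec:SemiclassicalPreliminaries}, and the couplings $\lambda_1,\lambda_2,\lambda_{\mathrm{GB}}$ have no fixed sign. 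The only available disposal of these terms --- the one used by Wall, by the paper here, and again in the proof of Lemma~\ref{SpacelikeLemma} --- is that they are subleading in the derivative expansion at the cutoff scale $\ell$ and therefore cannot overturn the inequality satisfied by the area and entropy terms.

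Second, the set\hyp{}theoretic bookkeeping in your submodularity step does not close at the order you need. With $A=\mathrm{In}[\sigma_\epsilon]$ and $B=\mathrm{In}[\sigma']$ one has $A\cup B=\mathrm{In}[\sigma_\epsilon]\cup\bigl(s_\sigma\cap \mathrm{In}[\sigma']\bigr)$, where $s_\sigma$ is the depth\hyp{}$\epsilon\phi$ sliver between $\sigma$ and $\sigma_\epsilon$; this differs from $\mathrm{In}[\sigma]$ by $s_\sigma\setminus \mathrm{In}[\sigma']$, which away from $p$ --- where the two surfaces have already separated by a finite amount on the fixed support of $\phi$ --- is the \emph{entire} sliver, a region of measure $O(\epsilon)$, not the $O(\epsilon^{3/2})$ lens you describe. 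Since the quantity being extracted is itself the $O(\epsilon)$ part of the entropy change, this error cannot be dropped. The repair is to localize before linearizing: take $\phi$ supported in a $\delta$\hyp{}neighborhood of $p$, where tangency guarantees the surfaces differ by only $O(\delta^2)$, derive the smeared inequality with errors controlled by $\delta$, and send $\delta\to 0$ after $\epsilon\to 0$; or, as in Wall's argument, append a single common pencil\hyp{}shaped sliver at $p$ to both nested regions and apply the tripartite form $S(AB)+S(BC)\ge S(B)+S(ABC)$. Relatedly, the null\hyp{}deformed regions leave the Cauchy slice, so all the entropies being compared must first be mapped back to a common slice using unitarity of the bulk evolution (or the machinery of Section~\ref{sec:boundaryconditions} for non\hyp{}reflecting boundary conditions); the paper flags exactly this step when adapting the argument in the proof of Lemma~\ref{SpacelikeLemma}.
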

The proof of this quantum result uses weak monotonicity of relative entropy to handle the von Neumann entropy, and it uses the semiclassical expansion to justify ignoring the higher-curvature corrections \cite{EngWal14}.

In fact, we will also need a spacelike analogue of this lemma. Since this spacelike result has not appeared in the literature thus far, we will give a proof of it below.

\begin{lem} \label{SpacelikeLemma}
Let $\sigma$ and $\sigma'$ be two codimension-two surfaces that lie on a single Cauchy slice $C$ where $\sigma'$ lies entirely on one side of $\sigma$ and is tangent to $\sigma$ at a point $p$ (or multiple points). We assume that $C$ is acausal in a neighborhood of $p$. Let $r^{a}$ be a vector field on $C$ which is normal to $\sigma$ and $\sigma'$ and points towards In$[\sigma]$ on $\sigma$ and towards In$[\sigma']$ on $\sigma'$ (where In and Out are defined as above). Then:
\be
\left. r_{a} {\cal K}^{a}[\sigma] \right |_{p} \geq \left . r_{a}{\cal K}^{a}[\sigma'] \right |_{p}.
\ee
\end{lem}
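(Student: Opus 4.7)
The plan is to split the quantum mean curvature into its classical, higher-curvature, and bulk entropy contributions,
\[
\mathcal{K}^a = K^a + (\text{higher-curvature terms}) + 4 G_N \hbar \, \frac{\delta S_{\mathrm{vN}}}{\delta X^a},
\]
and establish the inequality $r_a \mathcal{K}^a[\sigma]|_p \geq r_a \mathcal{K}^a[\sigma']|_p$ contribution by contribution. Working in local coordinates on $C$ near $p$ in which $\sigma = \{y = 0\}$ and $r^a|_p = (\partial_y)^a$, the hypothesis that $\sigma'$ lies entirely in $\overline{\mathrm{In}[\sigma]}$ and is tangent to $\sigma$ at $p$ implies that, near $p$, $\sigma'$ is a graph $\{y = f(x)\}$ with $f \geq 0$, $f(p) = 0$, and $\nabla f(p) = 0$; hence $p$ is a local minimum of $f$ and the Hessian $\mathrm{Hess}(f)|_p$ is positive semi-definite.

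For the area term, a direct first-variation calculation gives $r_a K^a[\sigma]|_p - r_a K^a[\sigma']|_p = \Delta_\sigma f(p) \geq 0$, the standard second-derivative test at a minimum; the sub-leading terms drop out because $\nabla f(p) = 0$. The higher-curvature pieces are sub-leading to the area in the semiclassical expansion and reduce to analogous inequalities on derivatives of $f$ and ambient curvatures at $p$.

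For the bulk von Neumann contribution, I would extract the pointwise functional derivatives via bump deformations and invoke strong subadditivity of the bulk state. Let $\chi \geq 0$ be a bump of width $\delta$ supported near $p$, and let $\sigma_\epsilon$ and $\sigma'_\epsilon$ be the surfaces obtained by displacement $\epsilon\chi r^a$, with homology regions $H_\epsilon = \mathrm{In}[\sigma_\epsilon]$ and $H'_\epsilon = \mathrm{In}[\sigma'_\epsilon]$. Apply SSA with $A = H_\epsilon$ and $B = H' = \mathrm{In}[\sigma']$:
\[
S_{\mathrm{vN}}(H_\epsilon) + S_{\mathrm{vN}}(H') \;\geq\; S_{\mathrm{vN}}(H_\epsilon \cup H') + S_{\mathrm{vN}}(H_\epsilon \cap H').
\]
Because $\sigma'$ osculates $\sigma$ only quadratically at $p$ (so $f \sim O(|x|^2)$), choosing $\delta$ to scale like a suitable fractional power of $\epsilon$ (e.g.\ $\delta \sim \epsilon^{3/4}$) makes $H_\epsilon \cup H'$ and $H_\epsilon \cap H'$ agree with $\mathrm{In}[\sigma]$ and $H'_\epsilon$ up to sliver-shaped regions of volume $O(\delta^3)$ --- parametrically smaller than the bump weight $O(\epsilon\delta)$ that controls the leading entropy variation. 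Rearranging SSA, dividing through by the bump weight, and letting $\epsilon \to 0$ yields $r_a(\delta S_{\mathrm{vN}}/\delta X^a)[\sigma]|_p \geq r_a(\delta S_{\mathrm{vN}}/\delta X^a)[\sigma']|_p$, and summing with the previous paragraph completes the proof.

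The main obstacle is this last step: converting the volume bound on the error regions into a bound on the corresponding entropy difference, which is not automatic in QFT. I would handle this by invoking the paper's standing assumption that the generalized entropy is UV-finite and that $\mathcal{K}^a$ is well-defined as a distribution, so that the entropy response to small deformations of the homology region is controlled by the bump weight (rather than by UV-divergent contributions from the sliver boundaries). A secondary subtlety is higher-order tangencies (where $\mathrm{Hess}(f)|_p$ itself vanishes, or where $\sigma \cap \sigma'$ is higher-dimensional), which is handled by applying the argument pointwise on the tangency locus and passing to the next nonvanishing order in $f$.
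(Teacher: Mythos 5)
Your proposal is correct and follows essentially the same route as the paper's proof: the paper likewise splits ${\cal K}^a$ into the classical, higher-curvature, and entropy pieces, handles the classical piece by comparing the second fundamental forms of the two surfaces at the interior tangency (your coordinate computation with the positive semi-definite Hessian of $f$ is the same statement phrased geometrically in the paper), treats the higher-curvature terms as subleading, and disposes of the von Neumann term by the same strong-subadditivity/monotonicity argument used in the null version of Wall's lemma. Your write-up is in fact more explicit than the paper's on the entropy step (the paper simply cites the reasoning of Lemma~\ref{Wallemma}), and the sliver-scaling you introduce to control the quadratic separation of the surfaces is a reasonable way to make that cited argument precise.
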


\begin{proof}
We begin with the classical extrinsic curvature component. Because within $C$, $\sigma$ and $\sigma'$ are both codimension-one, the classical extrinsic curvature $K_{ab}[\sigma]$ and $K_{ab}[\sigma']$ is a rank-two tensor. Recall that, when measured with respect to a normal vector pointing towards the \textit{exterior}, $K_{ab}[\sigma]v^{a}v^{b} |_{p}$ measures how much $\sigma$ curves away from its tangent plane at $p$ with motion along $v^{a}$. Because $\sigma$ lies outside of $\sigma'$, $\sigma'$ must curve away from its tangent plane more than $\sigma$ does. So for any vector field $v^{a}$, this implies that with respect to a normal vector field pointing towards In$[\sigma]$ and In$[\sigma']$:
\be
r_{c}K^{c}_{ab}[\sigma']v^{a}v^{b}|_{p} \leq r_{c}K^{c}_{ab}[\sigma]v^{a}v^{b}|_{p}
\ee
This then immediately implies that
\be
r_{c} K^{c}_{ab}[\sigma']q^{ab}|_{p} \leq r_{c} K^{c}_{ab}[\sigma]q^{ab}|_{p}
\ee
where $q^{ab}$ is the metric on $C$ (in this case induced from the Lorentzian spacetime metric, but for our purposes in this proof that is irrelevant). This immediately shows that the classical component of $r_{a}{\cal K}^{a}$ satisfies the desired inequality.

The proof that the quantum contributions to ${\cal K}^{a}$, $\frac{{\cal D} S_{\mathrm{vN}}}{\mathcal{D} \sigma^{a}}$, follows exactly the same line of reasoning as the null proof in Lemma~\ref{Wallemma} (in fact, it is simpler: the null proof requires mapping back to a Cauchy slice; here we simply begin with all regions already lying on a Cauchy slice).  

Finally, we must deal with the higher curvature corrections. We expect that these are in general -- as was assumed in~\cite{Wal10QST} -- subleading to the mean curvature and the entropy contributions, and thus do not alter the conclusions. 

\end{proof}

\begin{figure}
\centering
\includegraphics{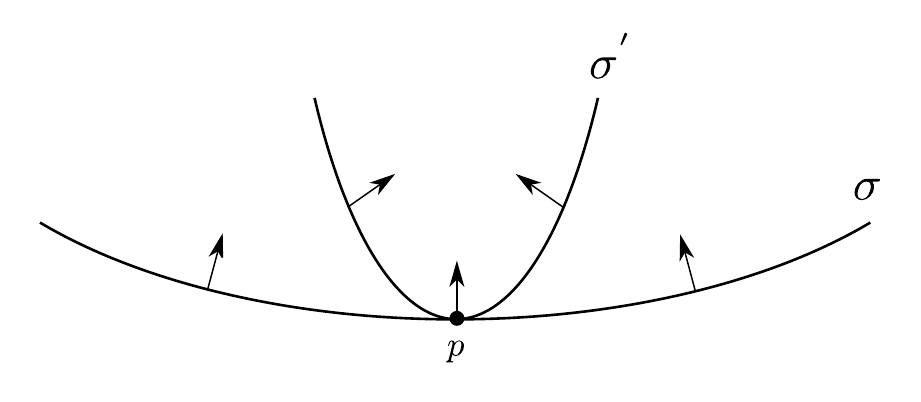}
\caption{The surfaces $\sigma$ and $\sigma'$ are shown tangent at the point $p$. The arrows illustrate the projection of the null orthogonal vectors onto the Cauchy slice. The associated null expansion of $\sigma'$ lowerbounds that of $\sigma$.}
\label{fig:WallsLemma}
\end{figure}

\vspace{3mm}

Finally, let us comment on how various powerful tools in the arsenal of General Relativity and causal structure are generalized to include quantum corrections. One of the most powerful pillars of classical causal structure is the Raychaudhuri equation for null geodesics, which for spacetimes satisfying the NCC implies the focusing theorem:
\be
\frac{d\theta}{d\lambda}\leq - \frac{\theta^{2}}{D-2},
\ee
where  $\lambda$ is an affine parameter along the null geodesics. For $\theta=0$ (e.g. for extremal surfaces), this means that $\theta$ is nonpositive away from the surface along $k^{a}$. This is used in the content of many proofs in gravity and holography in the $G_{N}\hbar\rightarrow 0$ limit: entanglement wedge nesting, strong subadditivity of the HRT proposal, and the fact that entanglement wedges include causal wedges~\cite{Wal12}. To prove analogous statements in the quantum regime, we will need the quantum analogue of the focusing theorem, known as the quantum focusing conjecture (QFC)~\cite{BouFis15}:
\be\label{eqn:qfc}
0 \ge \frac{ {\cal D} \Theta[\sigma,y']_{k}}{ {\cal D}\lambda} = \int_\sigma d^{d-2}y \frac{\delta \Theta[\sigma,y]_k}{\delta X^a(y')}k^a(y') ~.
\ee
Some comments are required. 
First, the functional derivative contains both local terms -- i.e. terms proportional to $\delta(y-y')$ and derivatives of $\delta(y-y')$ -- as well as terms non-local in $y,y'$. 
The gravitational terms only have contributions of the local type (the classical expansion at one point on the surface does not care if you deform the surface somewhere else).
The von Neumann entropy will give terms of both types. The non-local terms are shown to be non-positive from the strong subadditivity of von Neumann entropy \cite{BouFis15}.
The local terms from $S_\mathrm{vN,ren}$ are generally hard to compute, and are related to the quantum null energy condition \cite{BouFis15b,KoeLei15,FuKoe17b,AkeCha17,BalFau17} (see \cite{LeiLev18,BalCha19} for discussion specifically of the local aspects of the quantum null energy condition). Second, it is vital that all geometric quantities be understood as quantum operators, which must be smeared at the effective field theory cutoff scale \cite{FuKoe17,Lei17}.  So really it is the expectation value $\braket{S_\mathrm{grav}}$ that contributes to $\Theta$, and in the inequality \eqref{eqn:qfc} we are implicitly smearing in $y$ all geometric terms over some profile.

Finally, note that when we impose reflecting boundary conditions at infinity, the choice of Cauchy slice is immaterial for evaluating $S_{\mathrm{gen}}$. This is not the case when we impose more general boundary conditions, which we shall now discuss.

\subsection{Boundary Conditions} \label{sec:boundaryconditions}
We close this section with a discussion of boundary conditions. Recent renewed interest in the quantum extremal surface prescription has featured setups with absorbing (or, more generally, coupled) boundary conditions on the asymptotic boundary. With these boundary conditions, it is clear that the Cauchy slice on which the $S_{\mathrm{vN}}$ component of $S_{\mathrm{gen}}$ is evaluated on is important because degrees of freedom can enter and exit the bulk, and so the modes on a Cauchy slice at time $t_1$ will be different from the modes at time $t_2$. See Figure~\ref{fig:boundaryconditions} for an illustration. 

\begin{figure}
\centering
\includegraphics{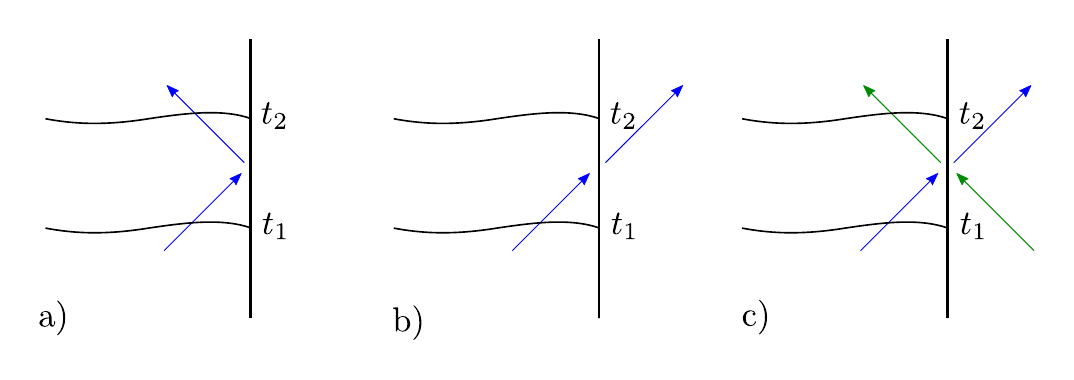}
\caption{We consider three types of boundary conditions: a) reflecting, b) absorbing, and c) open. For absorbing and open boundary conditions, the choice of the Cauchy slice for evaluating the von Neumann entropy is no longer immaterial, as degrees of freedom can enter and exit the bulk.}
\label{fig:boundaryconditions}
\end{figure}     

All of our results in this paper will be valid for arbitrary boundary conditions. We therefore now review the new subtleties and considerations that appear when the boundary conditions are not fixed.

First, with \textit{non-reflecting} boundary conditions, comparison of the generalized entropy of a QES ${\cal X}$ with that of a surface evolved further from it along $k^{a}$, necessitates use of the same boundary Cauchy slice. That is, to properly compare the generalized entropy of different surfaces we need to consider the same configuration of modes in the bulk when computing the von Neumann entropy, and so we must work with a fixed boundary Cauchy slice when the bulk evolution is not unitary. 

To be explicit: with reflecting boundary conditions, the generalized entropy of a surface $\sigma_{R}$ homologous to $R$ may be evaluated along any Cauchy slice containing $\partial R$. This is because the bulk entropy is \emph{independent} of the boundary slice we choose; bulk degrees of freedom may have either reflected off the boundary or not, depending on the slice we use, but the same degrees of freedom will still contribute to the bulk entropy. The generalized entropy only depends on the boundary domain of dependence $D[R]$ of the region $R$, as we would hope given that the reduced states on any two boundary regions with the same domain of dependence are related by a unitary transformation in the CFT. In fact, we can even define bulk entropies for surfaces that are timelike separated from the boundary region $R$, simply by using a boundary Cauchy slice that does not contain $R$. 

By contrast, for general coupled boundary conditions, it does not make sense to compare the bulk entropy (or generalized entropy) of surfaces that are timelike separated from the boundary region $R$, since we cannot use a boundary Cauchy slice that does not contain $R$ without changing the bulk degrees of freedom included in the generalized entropy.

For \emph{absorbing} boundary conditions, where information can escape the spacetime but no new information can enter, this situation is somewhat ameliorated. Specifically, by absorbing boundary conditions, we mean that the bulk evolution from the original Cauchy slice $C$ to a new Cauchy slice $C'$ whose boundary is in the future of the boundary of $C$ is given by a \emph{fixed} quantum channel (independent of the state of any other quantum system).\footnote{For the transparent boundary conditions considered in~\cite{AEMM,AMMZ}, this is not strictly true even when the ingoing modes are fixed to be in the vacuum state, since the vacuum is entangled. However, in practice, we can treat these boundary conditions as purely absorbing for most purposes. For example, we can use the techniques from Section \ref{sec:nonholographic} to consider quantum extremal surfaces for the boundary region $R$ -- plus all ingoing modes that will enter the spacetimes within the future causal diamond of $R$. This is sufficient to make the forwards time evolution to any future slice of $D[R]$ be a fixed quantum channel, and will generally only have a very small effect on the location of the QES. However, see~\cite{AMM} for an example where not including future ingoing (in this case thermal) modes is able to push the extremal surface from exactly on the event horizon to outside the event horizon.} If this is true for \emph{all} forwards boundary evolutions, the system is said to obey a Markovian master equation.

In this case, we can consider surfaces that are in the future of the boundary region $R$, so long as they spacelike separated from some future boundary region $R' \subset D^{+}[R]$ with $\partial R' = \partial R$. The evolution from $R$ to $R'$ will be a quantum channel, which will have a unique purification $E$ (up to isomorphism). We then simply define the bulk von Neumann entropy to be the von Neumann entropy of fields in the bulk region $H_{R'}$ satisfying $\partial H_{R'} = \sigma \cup R'$, together with the purifying system $E$. Since the evolution from $R$ to $R' \cup E$ is unitary, this will be independent of the choice of $R'$, as one would hope. See Figure~\ref{fig:absorbingboundaryconditions}.
This will be useful in discussions of the entropy of representatives of surfaces on other Cauchy slices in later sections.

\begin{figure}
\centering
\includegraphics{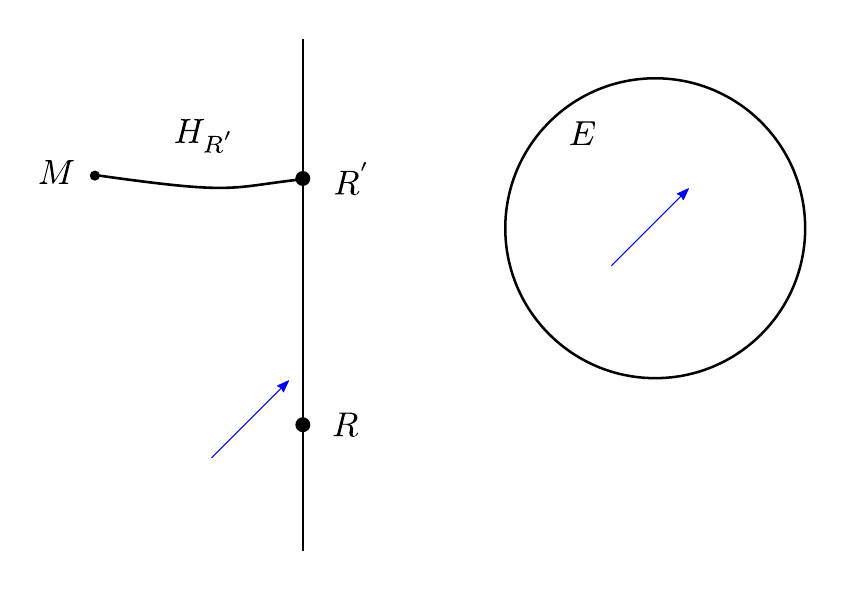}
\caption{Two timelike-separated regions of the boundary are depicted, $R$ and $R'$, with $R' \subset D^+[R]$. The boundary conditions allow modes to pass from the bulk, through the boundary, into the system $E$. The bulk surface $M$ is anchored to $\partial R = \partial R'$ and timelike related to $R$ but spacelike related to $R'$. While $\sigma$ does not lie on any Cauchy slice anchored to $R$, we can nevertheless define an entropy of $\sigma$ with respect to $R$ as the entropy of $H_{R'}\cup E$.}
\label{fig:absorbingboundaryconditions}
\end{figure}   

\section{Quantum Maximin}\label{sec:3}

We begin with a review of the classical maximin prescription of~\cite{Wal12}. For comments on some mathematical subtleties in the classical maximin construction see~\cite{EngHarTA}.

\subsection{Classical Maximin}


Consider a boundary subregion $R$ with boundary $\partial R$. The maximin surface $M_R$ associated to $R$ is defined as follows. Consider all Cauchy slices $C$ such that $\partial R \in C$. On each Cauchy slice $C$ the claim is that there exists a (globally) minimal area codimension-two surface $\text{min}[C,R]$ homologous to $R$. A maximin surface is defined as a surface with (globally) maximal area from the set of minimum area surfaces

\begin{equation}
    M_R = \max_{\mathrm{over\ all \ }C}~ \min_{\mathrm{on \ }C}[C,R]~.
\end{equation}

If multiple such surfaces exist, we say that they are all maximin surfaces. We denote the Cauchy slice on which $M_R$ is minimal as $C[M_R]$. Formally, we also restrict ourselves to stable maximin surfaces as defined in \cite{MarWal19}. A maximin surface $M_R$ is defined as stable if under any (boundary vanishing) infinitesimal deformation of $C[M_R]$ to $C^{'}$ there exists a locally minimal area surface $\sigma$ on $C^{'}$, within a tubular neighborhood of $M_R$, such that $\text{Area}[\sigma] < \text{Area}[M_{R}]$. For now we will take the existence of at least one stable classical maximin surface as a given, returning to the question of existence in Section~\ref{sec:existence}. 

Once existence is established, Wall argues that a maximin surface $M_R$ is an extremal surface of minimal area, or HRT surface, for the region $R$. The basic intuition is that $M_R$ is extremal under variations on the Cauchy slice on which it is minimal, and extremal under variation of the Cauchy slice. Therefore, by the linearity of first order perturbations, $M_R$ is an extremal surface.

To show that it is a \emph{minimal area} extremal surface (or HRT surface), Wall introduces the notion of a ``representative'' of a surface on some Cauchy slice. Let $\sigma$ be some surface in our spacetime, and let $C$ be a Cauchy slice not containing $\sigma$. We define the representatives of $\sigma$ on $C$ as:
\be
\widetilde{\sigma}[C] \equiv \partial J[\sigma]\cap C~,
\ee
where $J[\sigma]\equiv J^{+}[\sigma]\cup J^{-}[\sigma]$. This defines up to two nonempty representatives of $\sigma$; it is possible for the number of nontrivial representatives to be smaller than two, since $\partial J[\sigma]$ can hit the boundary before intersecting $C$, reducing the number of representatives. If $X_{R}$ is homologous to a region with an empty boundary, it can have no nontrivial representatives; if $X_{R}$ is homologous to a region with a boundary, then its representatives must also be homologous to that region and are therefore nonempty. In the case that two representatives exist we take either one of them as the representative and define it as $\widetilde{\sigma}$. See Figure~\ref{fig:representative}. 

\begin{figure}
\centering
\includegraphics{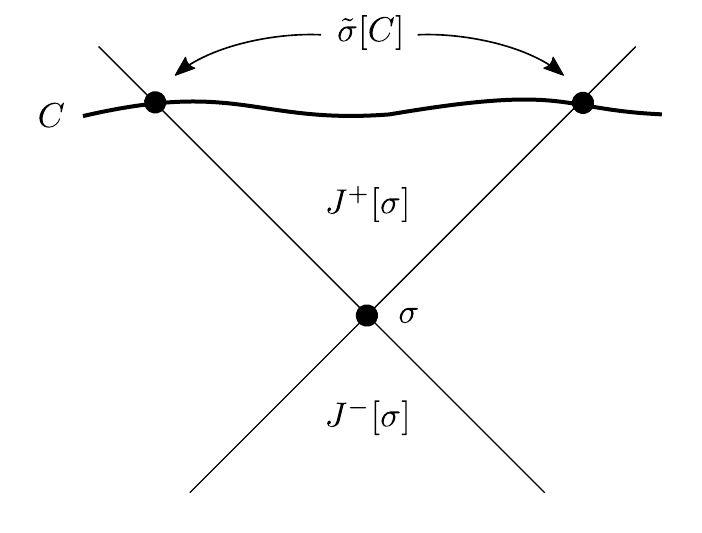}
\caption{We define the representatives of a surface $\sigma$ on Cauchy slice $C$ as the intersection of $C$ with the boundary of $J[\sigma]\equiv J^{+}[\sigma]\cup J^{-}[\sigma]$. We display the case where there are two representatives of $\sigma$ on $C$.}
\label{fig:representative}
\end{figure}     

If $\sigma$ happens to be an extremal surface $X_{R}$, then -- assuming the NCC -- the focusing theorem implies that the area of cross-sections of $\partial J[X_{R}]$ decreases with evolution away from $X_{R}$. 

This is used in Wall's argument in the following way: suppose that $X_{R}$ is an extremal surface; by earlier arguments so is $M_{R}$, so we need to determine which is minimal. To do this, we find the representative of $X_{R}$ on the Cauchy slice $C$ on which $M_{R}$ is minimal (this exists by definition of the maximin). Because $X_{R}$ and $M_{R}$ are both homologous to $R$, the representative must exist. Then we find:
\begin{equation}
\text{Area}[X_R] \geq \text{Area}[\widetilde{X_{R}}] \geq 
\text{Area}[M_R]~,
\end{equation}
where the first inequality follows from focusing, and the second follows from minimality on a Cauchy slice. This implies that $M_{R}$ either has strictly less area than $X_{R}$, in which case it is the minimal area extremal surface and thus the HRT surface, or it has the same area, in which case either surface is admissible as the HRT surface. This establishes the desired result, that $M_R$ is HRT. 

As is evident from the above construction, much of the power of maximin comes from minimality of $M_R$ on a Cauchy slice. Nesting, SSA, and MMI for the covariant holographic entanglement entropy formula are all consequences of this property. Essentially, this permits the reduction of the time-dependent problem to a single time slice, at which point arguments that are used for the static Ryu-Takayanagi formula suffice. This concludes our review of the classical maximin prescription.

\subsection{Defining Quantum Maximin}
\label{sec:QuantumMaximinDefinitions}

We begin with a formal definition of the quantum maximin surface of a boundary subregion, as first described in the introduction. Let $R$ be a spacelike boundary subregion, and assume that $\partial R$ is boundary Cauchy-splitting. 

A quantum maximin surface $M_R$ is obtained  by the following maximinimization procedure: for every Cauchy slice containing $R$,\footnote{We use the term ``containing'' here loosely, meaning that the boundary of the Cauchy slice contains $R$, or equivalently the Cauchy slice of the conformal completion contains $R$.} we find the minimal $S_{\mathrm{gen}}$ surface homologous to $R$. We then look for the maximum $S_{\mathrm{gen}}$ surface among all of these minima. This we term the quantum maximin surface. We will argue that the quantum maximin surface is identical to the surface of the EW proposal, that is, that
\begin{align}
S_{\mathrm{vN}}[\rho_{R}] = \max_{C} \min_{\sigma_R \in C} S_{\text{gen}}[\sigma_R]
\end{align}
where the maximization is over Cauchy slices $C$ containing the boundary region $R$, the minimization is over surfaces $\sigma_R \in C$ that are homologous to the boundary region $R$, and $S_{\text{gen}}[M_R]$ is defined by equation~\ref{eq:Sgen}. 

We shall also use notation as above where $\text{min}[C,R]$ denotes the quantum minimal surface homologous to $R$ on $C$, and $C[M_R ]$ denotes the Cauchy slice on which the quantum maximin surface of $R$ is minimal.

The requirement that the Cauchy slice contain the entire boundary region $R$ is different from the original definition of a classical maximin surface, where the Cauchy slice only had to contain $\partial R$. However it was shown in \cite{MarWal19} that even fixing the entire intersection of the Cauchy slice with the boundary (which they called `restricted maximin') did not affect the surface. We include it here because it is necessary to make the generalized entropy well defined when the boundary conditions are nontrivial, as discussed in Section \ref{sec:boundaryconditions}.

An additional restriction which was imposed in the original formulation of maximin surfaces is ``stability''. Intuitively speaking, it is intended to capture the notion that the maximin surface is not accidental: i.e., that it is robust against small perturbations (of the Cauchy slice). To implement this,~\cite{MarWal19} gave a modification of the original formulation of stability in~\cite{Wal12}. In this definition, a maximin was said to be stable if any small perturbation of the Cauchy slice on which the maximin is minimal results in a Cauchy slice with a local minimum which is nearby the maximin (in spacetime) and has smaller area. This is weaker, but conceptually close to, requiring that the classical maximin be a local maximum over local minima. A type of maximin which can be ruled out by this stability criterion is one which is null-separated to itself or which lives on a null Cauchy slice.
See Figure~\ref{fig:UnstableMaximin} for an illustration of both types.
Due to the poorly-understood nature of entropy on surfaces with null boundaries (see~\cite{BouCas14a, BouCas14b} for special cases which are unfortunately not applicable to our setup), we will slightly strengthen the definition of stability for quantum maximin; note that it is possible in certain cases to relax it to an exact quantum generalization of the criterion of~\cite{MarWal19}:

\begin{figure}
\centering
\includegraphics{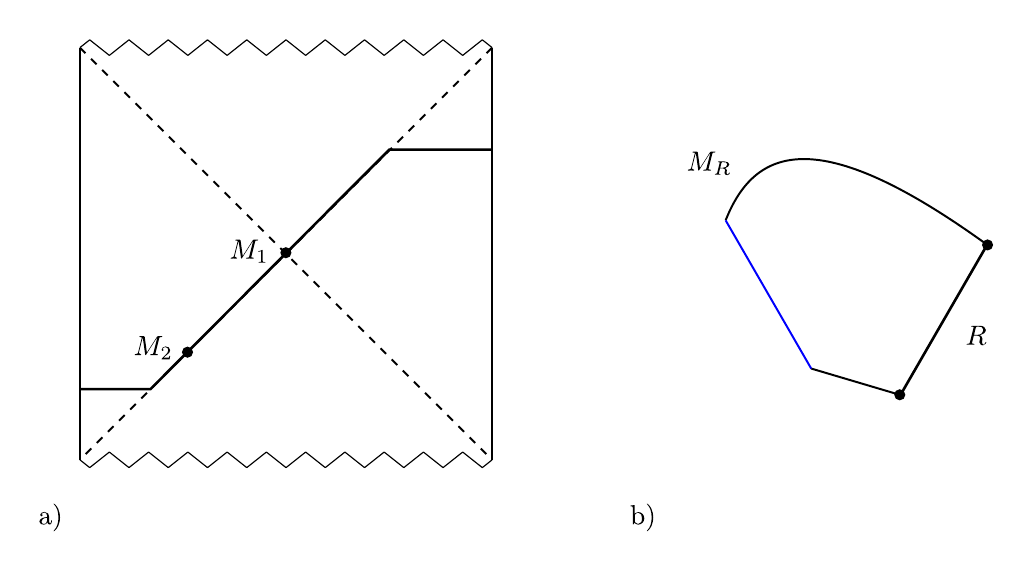}
\caption{The stability requirement rules out certain maximin surfaces. In a) we take our region $R$ to be one entire boundary in the eternal Schwarzschild-AdS geometry. All surfaces on the horizon have the same area, but only $M_1$ at the bifurcate horizon is stable. In b) We have a maximin surface with a null segment (blue), which is not stable.}
\label{fig:UnstableMaximin}
\end{figure}

\begin{defn}
A quantum maximin surface $M_R$ is said to be stable if (1) it is acausal, and (2) any variation of $C[M_R ]$ supported on a tubular neighborhood of $M_R$ has a locally minimal surface $\sigma$ homologous to $R$ with $S_{\mathrm{gen}} [\sigma]<S_{\mathrm{gen}} [M_R ]$.
\end{defn}

\noindent This assumption will be critical to the proof that quantum maximin surfaces are quantum extremal.  It is possible for more than one stable quantum maximin to exist, in which case either might be the quantum maximin. This is similar to the degeneracy in the classical case, where for non-generic configurations of boundary subregions, it is possible for two extremal surfaces to be simultaneously minimal. A small perturbation of the boundary region normally breaks the degeneracy between the two surfaces. Here the situation is somewhat worse: even if one quantum maximin surface has smaller $S_{\mathrm{gen}}$ than another, but the difference is $\mathcal{O}(1)$, it is not possible to determine which surface is the dominant contribution in the EW prescription since the surfaces themselves are only defined up to $\mathcal{O}(1)$ fuzziness. This means that small perturbations may not immediately break the degeneracy. Since the resolution of this lies in transplanckian physics, we will confine ourselves to boundary subregions that do not have this type of degeneracy. 

The proof that quantum maximin surfaces are the dominant quantum extremal surface homologous to a given boundary region requires the notion of a quantum representative of a surface. Let $\sigma$ be a surface, which is by definition Cauchy-splitting; let us denote one side of it on some Cauchy slice $\mathrm{In}[\sigma]$. For \textit{any} Cauchy slice $C$ we define the representatives of $\sigma$ on $C$ as
\begin{equation}
    \widetilde{\sigma}[C] \equiv \partial J[\sigma]\cap C,
\end{equation}
where $J[\sigma]\equiv J^{+}[\sigma]\cup J^{-}[\sigma]$. See Figure~\ref{fig:representative}. This defines either zero, one, or two representatives because $\partial J[\sigma]$ can hit the boundary and terminate, reducing the number of representatives. In the case where the initial surface is a quantum extremal surface ${\cal X}_R$ at least one representative will exist on a Cauchy slice that contains $R$. In the case that two representatives exist we take either one as the representative and denote it by $\widetilde{\sigma}$. The representative is Cauchy splitting, and we take $\mathrm{In}[\widetilde{\sigma}]$ to lie on the same 'side' as $\mathrm{In}[\sigma]$. 

If the bulk evolution is unitary then the von Neumann entropy of $\mathrm{In}[\sigma]$ can be evaluated on any unitarily equivalent Cauchy slice. If $\sigma$ is a quantum extremal surface ${\cal X}_R$ then by applying the quantum focusing conjecture we find that the representative $\widetilde{{\cal X}_R}$ will satisfy $S_{\mathrm{gen}}[\widetilde{{\cal X}_R}] \leq S_{\mathrm{gen}}[{\cal X}_R]$. 

Since by definition the quantum extremal surface must be anchored to the boundary subregion $R$, the above argument goes through without modification for representatives of quantum extremal surfaces on Cauchy slices that contain $R$ in the case of absorbing and coupled boundary conditions.

\subsection{Existence} \label{sec:existence}
 There are several questions that are naturally raised by this definition. Most pressingly, we want to know whether such a surface even has to exist at all. Even classically, this question is exceedingly subtle, and answering it with complete rigor would require mathematical tools far beyond the scope of this paper. Once quantum corrections are included, such rigor would be impossible. That said, in this section we give arguments, at a physics level of rigor, that at least one quantum maximin surface should always exist.

Let us first focus on the question of whether a minimal generalized entropy surface should exist within any fixed Cauchy slice $C$. 

Classically, as argued in \cite{Wal12}, we can choose a topology with respect to which the space of surfaces within $C$ satisfying the homology constraint is compact, and with respect to which area is a lower semicontinuous function. This implies that a minimal area `surface' exists in the completion of the space of surfaces in $C$ (with respect to the appropriate topology). Of course, this argument gives no guarantee that the resulting `surface' (or, more formally, limit of surfaces) is well behaved.

More powerful arguments require the tools of geometric measure theory. The idea is then to  argue (at least for spacelike Cauchy slices) that minimal area surfaces exist as varifolds, which are nonsingular submanifolds as long as the spacetime has dimension at most seven~\cite{bombieri2011geometric}. 

Of course, to define a finite area for a surface, we need to regulate the bulk theory by cutting off the spacetime at some finite bulk radius. Even for minimal area surfaces, we are unaware of a fully mathematically rigorous proof that the minimal area surface is well defined in the limit where the boundary cut-off is taken to infinity.\footnote{However, see~\cite{Sor19} for an argument that the minimal area extremal surface is cut-off independent in this limit.}

Similarly, we do not believe that existence has been demonstrated for entropy functionals that include higher-curvature corrections, even though the functional is still well defined for any given surface in the semiclassical regime (e.g. the derivation in~\cite{Don13} only applies to surfaces with nice curvatures, and probably breaks down for surfaces outside this semiclassical regime). Once the bulk von Neumann entropy term is included in the generalized entropy, even the generalized entropy functional cannot be rigorously defined;  so it is pointless to hope to make any mathematically rigorous argument.  Nonetheless, since $S_{\mathrm{gen}}$ is a well-behaved functional that has had great success in quantum-correcting the area, we still expect that a similar story should hold and that a minimal generalized entropy surface should both exist and be reasonably well behaved. 

In particular, we find it plausible that oscillations in the surface at scales much smaller than the bulk field theory length scale,\footnote{By this we mean the smallest length scale at which bulk excitations exist in the bulk effective field theory.} but larger than the Planck and string length scales, should always increase the generalized entropy.
The rationale for this is that the increase in area from the oscillations should dominate over all other effects. 

If the surface oscillates at transplanckian scales, higher curvature corrections can become larger than the area term. In fact, as discussed in Appendix \ref{appendix}, if such oscillations exist, the sum over higher curvature corrections will generally be divergent, as terms with higher derivatives of the extrinsic curvature will give increasingly large contributions. 

It is unclear how, or even whether, generalized entropy can be defined for such surfaces, since the classical geometry should not be well defined at transplanckian scales. It would therefore maybe be natural to restrict our minimization to only consider surfaces that do \emph{not} oscillate at transplanckian scales. However, we expect that, \emph{if} generalized entropy is indeed well defined for such surfaces, they should never have minimal generalized entropy.
For this reason, we shall not by-hand restrict the surfaces over which we minimize; instead we shall just not worry about such transplanckian oscillations.

The maximization of the minimal generalized entropy over the space of Cauchy slices follows a similar story.
As argued in~\cite{Wal12}, minimal area is an upper semicontinuous function of the Cauchy slice with respect to a topology where the space of Cauchy slices is compact. Hence a maximum exists in the completion of the space of Cauchy slices with respect to this topology. Note that even in classical geometries, however, there are known examples where this maximum fails to exist, since the maximum need not live on an actual Cauchy slice -- only on the limit of one (see~\cite{FisMar14} for some examples). 

For generalized entropy, we again cannot make such precise statements. However, we know that oscillations in the Cauchy slice at small scales will decrease the area of the minimal generalized entropy surface. If the oscillations are at scales that are small compared to the bulk field theory scale, but much larger than the Planck and string scales, this change in area will again be the dominant effect. Transplanckian oscillations will make the higher curvature expansion diverge, and so it is natural to disallow them when we do the maximization. However, it seems reasonable to expect that, if well defined, they should still decrease the minimal generalized entropy and so will not effect the definition of the maximin surface.
So like before, we do not by-hand restrict the Cauchy surfaces over which we maximize. (Note that since we expect that making a spacelike non-acausal surface acausal will only result in an increase in both area and entropy, we also expect that whenever a quantum maximin surface does exist, at least one stable quantum maximin exists as well.) 

Having hopefully convinced the reader that quantum effects most likely do not constitute an obstacle to existence of the quantum maximin, we briefly comment on obstacles to existences which do not result from quantum effects, as alluded to above. It is now understood that \textit{classical} maximin surfaces do not always exist (e.g. in the presence of a dS boundary behind the black hole horizon, as in~\cite{FisMar14}, where the HRT surface had to be complex). Indeed, classical maximin surfaces have only been argued to exist in horizonless spacetimes or spacetimes with certain types of singularities (e.g. Kasner)~\cite{Wal12, MarWal19}. It is thus reasonable to expect that quantum maximin surfaces likely suffer from the same problem. Interestingly, it is in principle possible that a quantum maximin surface could exist where no classical one does (in parallel with the nucleation of a new QES in~\cite{Pen19, AEMM}). We leave an investigation of this possibility to future work. 

\subsection{Equivalence to the Minimal Entropy Quantum Extremal Surface}
\label{sec:MaximinEquivalentQES}

In this section, we show that a stable quantum maximin surface $M_{R}$ is the quantum extremal surface of minimal $S_{\mathrm{gen}}$, appropriately regulated, homologous to $R$. Our derivation is somewhat similar to that of~\cite{Wal12} for the classical maximin, though our assumption of stability is stronger.

We will first prove that $M_R$ is quantum extremal.  When the quantum maximin is unique, we do not need the full power of stability to prove extremality:

\begin{thm} Let $M_{R}$ be the unique quantum maximin surface of $R$, and assume only condition (1) of stability. Then $M_R$ is quantum extremal.\end{thm}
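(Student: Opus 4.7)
The plan is to establish $\mathcal{K}^{a}[M_R]=0$ by decomposing the normal bundle of $M_R$ into a spacelike direction tangent to $C[M_R]$ and a timelike direction orthogonal to $C[M_R]$ -- condition (1) of stability (acausality) ensures these two directions are genuine, so the decomposition is non-degenerate -- and then showing that the first variation of $S_{\mathrm{gen}}$ vanishes in each piece separately.

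The spacelike piece follows immediately from the definition of $M_R$. Since $M_R$ globally minimizes $S_{\mathrm{gen}}$ on $C[M_R]$ over surfaces homologous to $R$, the first variation
\be
\delta_{\eta} S_{\mathrm{gen}} = \frac{1}{4 G_{N}\hbar}\int_{M_{R}} \eta_{a}\,\mathcal{K}^{a}
\ee
vanishes for every smooth, boundary-vanishing normal deformation $\eta^{a}$ supported within $C[M_R]$. Since such $\eta^{a}$ can be chosen freely, the component of $\mathcal{K}^{a}$ along the spacelike normal to $M_R$ within $C[M_R]$ must vanish pointwise.

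For the timelike piece, proceed by contradiction. Suppose $u_{a}\mathcal{K}^{a}\neq 0$ somewhere, where $u^{a}$ is a future-directed unit timelike normal to $C[M_R]$; after possibly flipping the sign of $u^{a}$, assume $u_{a}\mathcal{K}^{a}>0$ on an open set $U\subset M_{R}$. Build a one-parameter family $\{C_{\tau}\}_{\tau\geq 0}$ of Cauchy slices with $C_{0}=C[M_R]$, which agree with $C_{0}$ outside a tubular neighborhood of $M_R$ and push $M_R$ by $\tau u^{a}$ (suitably smoothed) inside that neighborhood. The deformation is bulk-supported, so every $C_{\tau}$ still contains $R$ on the boundary and is admissible in the maximin. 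Let $\sigma_{\tau}\subset C_{\tau}$ be the displaced image of $M_R$; then to first order,
\be
S_{\mathrm{gen}}[\sigma_{\tau}] = S_{\mathrm{gen}}[M_{R}] + c\,\tau + \mathcal{O}(\tau^{2}), \qquad c := \frac{1}{4G_{N}\hbar}\int_{U} u_{a}\,\mathcal{K}^{a} > 0.
\ee
The maximin property requires $S_{\mathrm{gen}}[\min[C_{\tau},R]]\leq S_{\mathrm{gen}}[M_R]$, so a contradiction will follow from the strict lower bound $S_{\mathrm{gen}}[\min[C_{\tau},R]]\geq S_{\mathrm{gen}}[M_R]+(c/2)\tau$ for all sufficiently small $\tau>0$.

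The hard step is precisely this last lower bound: one must exclude a jump of $\min[C_{\tau},R]$ to a distant competitor whose generalized entropy dips below $S_{\mathrm{gen}}[\sigma_{\tau}]$ after the push. Here the \emph{uniqueness} hypothesis does essential work. Uniqueness of $M_R$ as a quantum maximin implies in particular that $M_R$ is the unique global minimizer of $S_{\mathrm{gen}}$ on $C_{0}$, so every other homologous surface $\sigma_{0}\subset C_{0}$ satisfies $S_{\mathrm{gen}}[\sigma_{0}]>S_{\mathrm{gen}}[M_R]$ strictly. Combined with the (physics-level) compactness and lower-semicontinuity considerations for $S_{\mathrm{gen}}$ sketched in Section~\ref{sec:existence}, this gives a uniform gap $\epsilon>0$ above $S_{\mathrm{gen}}[M_R]$ for all surfaces lying outside any fixed tubular neighborhood of $M_R$; surfaces within the neighborhood are instead controlled by the near-$M_R$ Taylor expansion above. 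Taking $\tau$ small enough that the $\mathcal{O}(\tau)$ corrections are below $\epsilon$, both regimes are bounded below by $S_{\mathrm{gen}}[M_R]+(c/2)\tau$, contradicting maximin. Hence $u_{a}\mathcal{K}^{a}\equiv 0$ on $M_R$ as well, yielding $\mathcal{K}^{a}[M_R]=0$ and so $M_R$ is quantum extremal.
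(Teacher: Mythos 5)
Your proof is correct and takes essentially the same route as the paper: minimality on $C[M_R]$ kills the component of ${\cal K}^a$ tangent to the slice, and uniqueness plus maximality under a small timelike deformation of the slice kills the timelike component --- the paper merely runs this second step as a pair of inequalities (future and past deformations giving ${\cal K}_a t^a \le 0$ and ${\cal K}_a t^a \ge 0$) rather than your single reductio, and it invokes uniqueness in exactly the same way to localize $\min[C_\tau,R]$ near $M_R$. One small correction: stability condition (1) is really needed to guarantee that the deformed slice remains achronal (hence an admissible Cauchy slice), not merely to make the normal decomposition non-degenerate.
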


\begin{proof}
Let us first consider the case where the Cauchy slice on which $M_R$ is minimal, $C[M_{R}]$, has continuous first derivatives in all directions in a tubular neighborhood of $M_R$. Let $r^{a}$ be the normal to $M_R$ on $C[M_R ]$. By definition, $M_R$ is a global minimum of $S_{\mathrm{gen}}$ on $C[M_R ]$. As discussed in Section~\ref{sec:existence}, we expect that the minimal surface on a Cauchy slice will have a continuous first derivative, so it can be continuously varied infinitesimally along a vector field $q^{a}$. Such small variations will result in surfaces with continuous first derivative and an $S_{\mathrm{gen}}$ which is either larger than or identical to $S_{\mathrm{gen}}[M_R ]$. Thus to first order in variations along $q^{a}$, the change in $S_{\mathrm{gen}}[M_R ]$ is either zero or positive. In terms of the quantum mean curvature ${\cal K}_{a}$~\eqref{eqn:qmc}, this immediately implies ${\cal K}_{a}q^{a}\geq 0$. Since the statement is true for $q^{a}\rightarrow -q^{a}$, and since any such variation can be decomposed $q^{a}=s^{a}+r^{a}$, where ${\cal K}_{a}s^{a}=0$ and $r^{a}$ is normal to $M_R$ on $C[M_R ]$, we immediately find ${\cal K}^{a}r_{a}=0$. Because by stability, $M_R$ is not null-separated to itself, there exists a tubular neighborhood $U$ of $M_R$ on $C[M_R ]$ such that no two points on $U$ are null-separated. We can therefore vary $C[M_R ]$ by varying $U$ infinitesimally along the future-directed timelike vector $t^{a}$ normal to $C[M_R ]$; call this deformed slice $C'$. Let $\text{min}[C',R]$ be the minimal $S_{\mathrm{gen}}$ surface on the deformed slice. By maximality, $S_{\mathrm{gen}}[M_R ]> S_{\mathrm{gen}}[ \text{min}[C',R]]$. Because $M_R$ is unique, there exists a sufficiently small choice of $U$ such that $\text{min}[C',R]$ lies in a small tubular spacetime neighborhood of $M_R$; that is min$[C',R]$ may be obtained by an infinitesimal variation of $M_R$ along some vector field $m^{a}$ (which we may assume by above has no components along $M_R$). By maximality, we find that ${\cal K}_{a}m^{a}\leq 0$. We may repeat this argument with a past-directed timelike vector, obtaining ${\cal K}_{a}n^{a}\leq0$ for some $n^{a}$ with no components along $M_R$. Because $M_R$ is codimension-two, its normal bundle has only two independent sections: we may therefore decompose $m^{a}=a r^{a}+b t^{a}$ and $n^{a} = \alpha r_{a} + \beta t_{a}$  where $b>0$ and $\beta<0$. It then immediately follows that ${\cal K}_{a}t^{a}=0$, and therefore ${\cal K}_{a}=0$, and $M_R$ is extremal. 

We now consider the situation where $C[M_R ]$ is not smooth, although strictly speaking, we smear over any Planckian neighborhoods, so this treatment is not entirely necessary; we include it for completeness. This portion of the proof will use Definition 13 of~\cite{Wal12} for tangent vectors to a Cauchy slice which has a discontinuous first derivative. The argument is nearly identical; variations of the surface along the outwards direction tangent to $C[M_R ]$, $r^{a}$, yield ${\cal K}_{a}r^{a}\geq 0$. Variations of $M_R$ along the inwards direction tangent to $C[M_R ]$, which we will call $p^{a}$ and no longer assume that $p^{a}=-q^{a}$, also yield ${\cal K}_{a}p^{a}\geq 0$. Similarly, we still obtain ${\cal K}_{a}m^{a}\leq0$ and ${\cal K}_{a}n^{a}\leq0$, although we no longer assume that $m^{a}$ and $n^{a}$ are obtained by varying along $t^{a}$ and $-t^{a}$. However, we may now easily obtain two vector fields on $M_R$ $\{w^{a},y^{a}\}$ such that ${\cal K}_{a}w^{a}=0={\cal K}_{a}y^{a}$. Then either $w^{a}$ and $y^{a}$ are linearly independent and $M_R$ is extremal, or $w^{a}$ and $y^{a}$ are not linearly independent, in which case two of the deformation vectors are diametrically opposed. But then we know that along that direction, say $r^{a}$, ${\cal K}_{a}r^{a}=0$; we can then decompose $m^{a}$ and $n^{a}$ in terms of $r^{a}$ and $t^{a}$ as above, which again shows that $M_R$ is extremal.
\end{proof}

In the event that the quantum maximin is degenerate, we prove that the stable one(s) is (are) extremal:
\begin{thm} Let $M_R$ be a stable quantum maximin surface of $R$. Then $M_R$ is extremal.
\label{thm:stableisextremal_true}
\end{thm}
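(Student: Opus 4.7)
The plan is to mimic the structure of the proof of the previous theorem, but replace the uniqueness input, which was used to keep the minimum on the perturbed Cauchy slice close to $M_R$, with the strengthened stability condition (2), which guarantees a nearby local minimum of strictly smaller generalized entropy after any tubular-neighborhood variation of $C[M_R]$.

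First I would handle extremality within $C[M_R]$ exactly as in the unique case: since $M_R$ is by definition a global minimum of $S_{\mathrm{gen}}$ on $C[M_R]$, any tangent variation along a vector field $q^a$ that stays on $C[M_R]$ leaves $S_{\mathrm{gen}}$ nondecreasing to first order. Condition (1) of stability (acausality of $M_R$) ensures we can define a spacelike normal $r^a$ to $M_R$ within $C[M_R]$ on a genuine tubular neighborhood. Decomposing $q^a = s^a + c\, r^a$ with $s^a$ tangent to $M_R$ and applying the argument to both $q^a$ and $-q^a$ yields $\mathcal{K}_a r^a = 0$.

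Next I would argue extremality in the transverse (timelike) direction. Pick a future-directed timelike vector field $t^a$ normal to $C[M_R]$ and deform $C[M_R]$ along $t^a$, with support in a tubular neighborhood $U$ of $M_R$; call the deformed slice $C'$. In the unique case one used global maximinimality plus uniqueness to conclude that $\min[C',R]$ itself lives near $M_R$. Here, instead, I invoke stability condition (2): for sufficiently small deformations there exists a locally minimal surface $\sigma \subset C'$ homologous to $R$ and lying in the tubular neighborhood of $M_R$, with $S_{\mathrm{gen}}[\sigma]<S_{\mathrm{gen}}[M_R]$. Because $\sigma$ sits in the tubular neighborhood and agrees with a first-order deformation of $M_R$ along some vector field $m^a$, the first-order change $\mathcal{K}_a m^a$ must be nonpositive. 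Repeating with a past-directed timelike deformation produces a second vector field $n^a$ with $\mathcal{K}_a n^a \leq 0$ whose timelike component points oppositely to that of $m^a$.

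Finally I would package these inputs in the same algebraic way as the previous proof: decompose $m^a = a\, r^a + b\, t^a$ and $n^a = \alpha\, r^a + \beta\, t^a$ with $b>0$ and $\beta<0$ (and analogous vectors tangent to $M_R$ which automatically contract to zero with $\mathcal{K}_a$). Combined with $\mathcal{K}_a r^a = 0$, the pair of inequalities $\mathcal{K}_a m^a \leq 0$ and $\mathcal{K}_a n^a \leq 0$ force $\mathcal{K}_a t^a = 0$, and since the normal bundle of the codimension-two surface $M_R$ is spanned by $r^a$ and $t^a$, we conclude $\mathcal{K}_a = 0$. The non-smooth case where $C[M_R]$ fails to be $C^1$ at $M_R$ is handled exactly as in the previous theorem, by replacing $r^a, -r^a$ with two independent inward-tangent vectors, and similarly on the timelike side, then observing that either the resulting four deformation vectors span the normal bundle (giving extremality directly) or two of them are antiparallel, which reduces to the smooth case along that axis.

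The main obstacle, and the reason condition (2) of stability was strengthened relative to \cite{MarWal19}, is precisely the need to replace the uniqueness-based localization step: without (2), a timelike deformation of $C[M_R]$ could push the minimum of $S_{\mathrm{gen}}$ onto a far-away degenerate quantum maximin, and no first-order statement about $\mathcal{K}_a$ at $M_R$ could be extracted. Once stability provides a nearby local minimum of smaller generalized entropy, the rest is a direct reprise of the unique case.
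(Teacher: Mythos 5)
Your proposal is correct and follows essentially the same route as the paper's proof: it reuses the tangential-minimality argument to get $\mathcal{K}_a r^a = 0$ and replaces the uniqueness-based localization of $\min[C',R]$ with stability condition (2), which supplies a nearby local minimum $\sigma$ of strictly smaller $S_{\mathrm{gen}}$ and hence the signs $\mathcal{K}_a m^a \le 0$, $\mathcal{K}_a n^a \le 0$ needed to conclude $\mathcal{K}_a = 0$. The paper states this in one sentence by reference to the preceding theorem; you have simply written out the same steps in full.
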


Note that since, in the regime that the area variation contributes at leading order to the quantum mean curvature while the entropy variation contributes only at subleading order, condition (1) of stability follows from condition (2), in such a regime it is sufficient to assume condition (2) only. 

\begin{proof}
The proof is almost identical, with the exception of the component that relies on the uniqueness of the maximin to argue that min$[C',R]$ can be obtained by a small deformation of $M_R$ along a direction which does not live in the tangent bundle of $C[M_R ]$. With assumption (2) of stability, however, this becomes unnecessary: maximality guarantees that $S_{\mathrm{gen}}[M_R ] \geq S_{\mathrm{gen}}[\sigma]$, which gives the requisite sign for ${\cal K}_{a}m^{a}$ and ${\cal K}_{a}n^{a}$. 
\end{proof}

We now proceed to the desired result: a stable quantum maximin is the quantum extremal surface of minimal $S_{\mathrm{gen}}$ homologous to a boundary subregion. 

\begin{thm} Let $M_{R}$ be a stable quantum maximin of $R$. Then $M_{R}$ has minimal $S_{\mathrm{gen}}$ over all quantum extremal surfaces homologous to $R$. \label{thm:stableisextremal}
\end{thm}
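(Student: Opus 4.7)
The plan is to mimic Wall's classical maximin argument, but with area replaced by generalized entropy and the classical focusing theorem replaced by the quantum focusing conjecture \eqref{eqn:qfc}. Let ${\cal X}_R$ denote any quantum extremal surface homologous to $R$; we want to show $S_{\mathrm{gen}}[M_R] \leq S_{\mathrm{gen}}[{\cal X}_R]$. The three main ingredients are (i) the existence of a representative of ${\cal X}_R$ on $C[M_R]$, (ii) a QFC-based bound that the representative's generalized entropy lower-bounds that of ${\cal X}_R$, and (iii) the minimality of $M_R$ on $C[M_R]$.

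First, form the null congruence $\partial J[{\cal X}_R]$ and intersect it with $C[M_R]$ to produce the quantum representative $\widetilde{{\cal X}_R}$, as defined in Section~\ref{sec:QuantumMaximinDefinitions}. Since ${\cal X}_R$ is homologous to $R$ and $C[M_R]$ is a Cauchy slice containing $R$, the representative is nonempty and homologous to $R$. Choose the side (i.e. pick $\mathrm{In}[\widetilde{{\cal X}_R}]$) consistently with $\mathrm{In}[{\cal X}_R]$, so that $\widetilde{{\cal X}_R}$ is an admissible competitor in the minimization defining $\min[C[M_R],R]$.

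Second, apply the QFC along the null geodesics ruling $\partial J[{\cal X}_R]$ from ${\cal X}_R$ to $\widetilde{{\cal X}_R}$. Because ${\cal X}_R$ is quantum extremal, $\Theta = 0$ on ${\cal X}_R$; the QFC forces $\Theta \leq 0$ further along the congruence. Integrating $\Theta$ against the affine parameter of the null generators yields
\begin{equation}
S_{\mathrm{gen}}[\widetilde{{\cal X}_R}] \;\leq\; S_{\mathrm{gen}}[{\cal X}_R].
\end{equation}
For coupled or absorbing boundary conditions one must be careful that both $S_{\mathrm{gen}}$'s are evaluated with respect to the same boundary Cauchy slice, so that the bulk von Neumann entropies refer to the same set of modes; since $R \subset \partial C[M_R]$ and ${\cal X}_R$ is anchored to $\partial R$, this is guaranteed by the prescription in Section~\ref{sec:boundaryconditions}.

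Third, invoke minimality: $M_R$ minimizes $S_{\mathrm{gen}}$ on $C[M_R]$ among surfaces homologous to $R$, so $S_{\mathrm{gen}}[M_R] \leq S_{\mathrm{gen}}[\widetilde{{\cal X}_R}]$. Chaining the two inequalities gives $S_{\mathrm{gen}}[M_R] \leq S_{\mathrm{gen}}[{\cal X}_R]$. Since by Theorem~\ref{thm:stableisextremal_true} the stable quantum maximin $M_R$ is itself a quantum extremal surface, it achieves the minimum among QESs homologous to $R$. The main obstacle is the careful handling of the QFC step in the presence of non-reflecting boundary conditions, including the possibility that $\partial J[{\cal X}_R]$ terminates at the boundary before reaching $C[M_R]$; this is controlled by the restriction that the maximin's Cauchy slice contains all of $R$ (not just $\partial R$), together with the absorbing-boundary prescription of Section~\ref{sec:boundaryconditions}, which ensures that the von Neumann entropy of the truncated representative is unambiguously defined via its purification. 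Degenerate equalities $S_{\mathrm{gen}}[M_R] = S_{\mathrm{gen}}[{\cal X}_R]$ correspond to the usual tie between distinct minimal QESs and are innocuous.
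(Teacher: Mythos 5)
Your proof is correct and follows essentially the same route as the paper: construct the representative $\widetilde{{\cal X}_R}$ of an arbitrary QES on $C[M_R]$, use the QFC to conclude $S_{\mathrm{gen}}[\widetilde{{\cal X}_R}] \le S_{\mathrm{gen}}[{\cal X}_R]$, and then combine minimality of $M_R$ on $C[M_R]$ with Theorem~\ref{thm:stableisextremal_true}. The one step you pass over is the existence and homology of the representative when ${\cal X}_R$ intersects $C[M_R]$ in the bulk interior (i.e.\ $\partial R \subsetneq {\cal X}_R \cap C[M_R]$); the paper treats this case separately, using the fact that $\partial D^{+}[H_R]\cup \partial J^{-}[H_R]$ forms a single null congruence of which ${\cal X}_R$ is one slice, so that any slice --- in particular the representative on $C[M_R]$ --- remains homologous to $R$.
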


\begin{proof}
Let ${\cal X}_{R}$ be an $S_{\mathrm{gen}}$-minimizing quantum extremal surface homologous to $R$. We would like to show that $S_{\mathrm{gen}}[{\cal X}_{R}]=S_{\mathrm{gen}}[M_{R}]$. If ${\cal X}_{R}\subset C[M_R ]$, then by minimality of the quantum maximin, $S_{\mathrm{gen}}[M_{R}]\leq S_{\mathrm{gen}}[{\cal X}_{R}]$; however, since $M_{R}$ is quantum extremal by Theorem~\ref{thm:stableisextremal_true}, this means that ${\cal X}_{R}$ cannot be the minimal quantum extremal surface homologous to $R$ unless $S_{\mathrm{gen}}[{\cal X}_{R}]=S_{\mathrm{gen}}[M_R]$, which proves the desired result when ${\cal X}_{R}\subset C[M_{R}]$. 

We now consider the case where ${\cal X}_{R}$ does not lie on $C[M_R]$. We will prove that it has a representative on $C[M_R]$ that is homologous to $R$. This we do in two cases: ${\cal X}_{R} \cap C[M_R]=\partial R$ and $\partial R \subsetneq {\cal X}_{R} \cap C[M_R]$. In the former case, ${\cal X}_{R}$ lies on one side of $C[M_R]$. We consider firing a null congruence $N$ from ${\cal X}_{R}$ towards $C[M_R]$. Then because ${\cal X}_{R}$ lies outside of the domain of influence of $R$, $I^{+}_{bulk}[R]\cup I^{-}_{bulk}[R]$, so must $N$. Thus there exists a representative $\widetilde{{\cal X}_{R}}$ of ${\cal X}_{R}$ on $C[M_R]$. Because  $\widetilde{{\cal X}_{R}}$ is homologous to ${\cal X}_{R}$ and ${\cal X}_{R}$ is homologous to $R$,  $\widetilde{{\cal X}_{R}}$ is also homologous to $R$. 

In the latter case,\footnote{This case was not considered in~\cite{Wal12}, but we see no reason why it can be obviously excluded.} ${\cal X}_{R}$ intersects $C[M_R]$ in the bulk interior. By~\cite{BouEng15b}, $\partial D^{+}[H_R]\cup \partial J^{-}[H_R]$, where $H_R$ is the homology hypersurface of ${\cal X}_{R}$ (and the opposite combination as well) forms a single  null congruence of which ${\cal X}_{R}$ is one slice. Any slice of this null congruence is homologous to ${\cal X}_{R}$, and therefore also to $R$. By the same argument above, the representative  $\widetilde{{\cal X}_{R}}$ exists, and is thus also homologous to $R$.

As shown in Sec.~\ref{sec:QuantumMaximinDefinitions}, $S_{\mathrm{gen}}$ of a representative is smaller than that of ${\cal X}_{R}$. We thus find:

\be
S_{\mathrm{gen}}[{\cal X}_{R}]\geq S_{\mathrm{gen}}[\widetilde{{\cal X}_{R}}] \geq S_{\mathrm{gen}}[M_R]~.
\ee

We thus immediately find that $M_R$ is the minimal $S_{\mathrm{gen}}$ quantum extremal surface homologous to $R$. 

\end{proof}

\section{Applications}\label{sec:applications}

We now discuss a few important applications of the equivalence between quantum maximin surfaces and QESs. 
The first is entanglement wedge nesting, which heuristically is the property that a smaller boundary region has a smaller entanglement wedge. 
While interesting in its own right, this feature is vital for proving other properties of quantum maximin surfaces. 
In particular, we use it to prove that the EW prescription always obeys the strong subadditivity inequality -- a crucial consistency check of any proposal for calculating entropies.

As a warm up, we first review another important property of the EW prescription: the entanglement wedge of a boundary region $R$ always contains the causal wedge of $R$, defined as the intersection of the bulk past and future of the boundary domain of dependence of $R$. This property does not require the power of quantum maximin and was proved using just the Generalized Second Law in \cite{EngWal14}. Our main purpose in reviewing it here is to comment on its application to spacetimes with non-reflecting boundary conditions, as discussed in Section \ref{sec:boundaryconditions}.

\paragraph{Entanglement Wedges Contain Causal Wedges}
\begin{thm}\label{thm:causal}
	Let $R$ be a boundary region, with domain of dependence $D[R]$, and homology hypersurface $H_R$. Then the causal wedge $W_C[R] = J^-[D[R]] \cap J^+[D[R]]$  is entirely contained in the entanglement wedge $D[H_R]$, where $J^-[D[R]]$ (resp. $J^+[D[R]]$) is the bulk past (resp. bulk future) of $D[R]$ viewed as a bulk submanifold. 
\end{thm}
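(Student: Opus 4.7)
The plan is to prove the containment by contradiction, using the Generalized Second Law (GSL) on the causal horizon and quantum extremality of $\mathcal{X}_R$ on the entanglement wedge horizon as the two essential ingredients. Suppose that there exists a point $p \in W_C[R]$ with $p \notin D[H_R]$. Since $p$ lies in both $J^+[D[R]]$ and $J^-[D[R]]$ but outside the domain of dependence of $H_R$, any causal curve from $p$ to $D[R]$ must cross $\partial D[H_R]$. By a connectedness argument on the family of such curves, I would locate a bulk point $q$ at which a null generator of the causal horizon $\mathcal{H}=\partial J^+[D[R]]\cup\partial J^-[D[R]]$ meets one of the two null congruences $N^\pm(\mathcal{X}_R)$ that, together with $\mathcal{X}_R$, generate $\partial D[H_R]$, with $\partial D[H_R]$ locally on the interior side of $\mathcal{H}$ in a neighborhood of $q$.

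At $q$, I would compare the quantum expansions of the two null hypersurfaces along the shared null direction $k^a$. On the side of $\mathcal{H}$, the GSL applied to the causal horizon anchored at $\partial D[R]$ gives $\Theta_{\mathcal{H}}\ge 0$, since $S_{\mathrm{gen}}$ of bulk slices must be nondecreasing towards the future on a causal horizon. On the side of $\partial D[H_R]$, quantum extremality of $\mathcal{X}_R$ gives $\Theta=0$ on $\mathcal{X}_R$, and the QFC then implies $\Theta_{\partial D[H_R]}\le 0$ as one moves along $N^\pm(\mathcal{X}_R)$ away from $\mathcal{X}_R$, strictly so beyond $\mathcal{X}_R$ by the quantum null generic condition. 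Wall's Lemma (Lemma~\ref{Wallemma}), applied to the tangent configuration at $q$, produces an inequality between these two quantum expansions in a fixed direction, which combined with the two sign constraints above forces both to vanish; the quantum null generic condition then rules out the marginal case along the generator, yielding the desired contradiction and hence $W_C[R]\subset D[H_R]$.

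The main novel element relative to \cite{EngWal14} — and the reason the theorem is being re-examined here — is the extension to non-reflecting boundary conditions. With absorbing or coupled boundary conditions, as in Section~\ref{sec:boundaryconditions}, the bulk von Neumann entropy entering $S_\mathrm{gen}$ depends on the choice of boundary Cauchy slice, and both the GSL and the QFC must be interpreted with care. I would adopt the purification prescription of Section~\ref{sec:boundaryconditions}, adjoining an auxiliary system $E$ that absorbs the outgoing modes, and defining the generalized entropy of every bulk slice used in the argument relative to the same choice of $E$ and the same boundary Cauchy slice. The standard GSL and QFC then apply to the unitarized total system. The main obstacle is precisely this consistency: the two inequalities from the GSL and the QFC are combined at a single bulk point $q$ via Wall's Lemma, and if the purification or the anchoring slice used to evaluate them differed, the entropies being compared would refer to inequivalent states and the chain of inequalities would break. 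Ensuring that the homology hypersurfaces and the purifier are matched on both sides of the comparison is therefore the one place where the proof in the non-reflecting setting genuinely differs from the original reflecting-boundary argument.
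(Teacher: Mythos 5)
Your overall strategy---the GSL on the causal horizon versus quantum extremality of ${\cal X}_R$ on the entanglement horizon, glued together by Wall's Lemma at a point of tangency---is exactly the paper's. The genuine gap is in producing the tangency point $q$. A ``connectedness argument on the family of causal curves'' only shows that $\partial D[H_R]$ and the causal horizon ${\cal H}$ intersect; generically they intersect \emph{transversally}, and at a transversal intersection neither hypersurface lies locally on one side of the other, so the hypotheses of Lemma~\ref{Wallemma} are simply not met and no expansion inequality follows. The paper manufactures the tangency by continuously shrinking $D[R]$ to a smaller boundary set $D'\subseteq D[R]$ until $\partial J^{\pm}[D']$ no longer exits $D[H_R]$ but still touches ${\cal X}_R$; this deformation simultaneously (i) produces a genuine one-sided tangency and (ii) places the touching point on ${\cal X}_R$ itself, where $\Theta[{\cal X}_R]=0$ holds directly by extremality.

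Point (ii) is not cosmetic. In your version $q$ may lie on $N^{\pm}({\cal X}_R)$ away from ${\cal X}_R$, so you must invoke the QFC to propagate $\Theta\le 0$ from ${\cal X}_R$ out to $q$, and you must assign a generalized entropy to cross-sections of $N^{\pm}$ beyond ${\cal X}_R$ that, for non-reflecting boundary conditions, can be timelike separated from $R$---precisely the configuration Section~\ref{sec:boundaryconditions} identifies as problematic. The paper's arrangement keeps the whole comparison on a Cauchy slice through ${\cal X}_R$ and needs only the GSL, which is a strictly weaker assumption than the QFC. Your purification discussion for absorbing boundary conditions is in line with the paper's; note only that for general \emph{coupled} boundary conditions the paper sidesteps the entire argument by observing that the boundary domain of dependence collapses to $D[R]=R$, so the causal wedge is empty (or $R$ itself) and the containment is trivial---a case your treatment does not separately address.
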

Let us review the proof of this result from \cite{EngWal14}. Suppose, by way of contradiction, that the causal wedge $W_C[R]$ is not contained in the entanglement wedge $D[H_R]$. By continuity, we could then deform $D[R]$ to some smaller boundary spacetime region $D' \subseteq D[R]$ that is not necessarily a domain of dependence such that either the past or future causal boundary $\partial J^\pm[D']$ intersects the quantum extremal surface ${\cal X}_R$ but does not intersect the exterior of $D[H_R]$. Let $M_{D'}$ be the intersection of $\partial J^\pm[D']$ with a spacelike Cauchy slice containing ${\cal X}_R$ and $x$ be some point in the intersection $M_{D'} \cap {\cal X}_R$. By Wall's lemma (lemma~\ref{Wallemma}), we have
\begin{align}
    \Theta[M_{D'},x] \leq \Theta[{\cal X}_R, x] = 0.
\end{align}
However,  the generalized second law (GSL) implies that
\begin{align}
    \Theta[M_{D'},x] \geq 0,
\end{align}
with equality nongeneric.\footnote{This can be more directly derived from the QFC; however, since the QFC is a strictly stronger assumption than the GSL (the QFC implies the GSL), we prefer to use the GSL where it is sufficient.} We have therefore derived a contradiction in the generic case, and, by continuity, we have also proved the result in the nongeneric case.

To extend this result to general boundary conditions, the main point to check is that the generalized second law still applies. In particular, the causal horizon may be timelike separated from the boundary region $R$. As discussed in Section \ref{sec:boundaryconditions}, this makes it impossible to define generalized entropies for arbitrary slices of the causal horizon without knowing anything about the boundary conditions.

Fortunately, as discussed in \cite{Pen19}, the boundary domain of dependence $D[R]$ of a region $R$, evolving using absorbing boundary conditions, is only the \emph{future} causal diamond of $R$, since the backwards time evolution is nondeterministic. Since $D' \subseteq D[R]$, it follows that both the past and future causal horizons $\partial J^\pm[D']$ are either spacelike or future (timelike or lightlike) separated from $R$, and spacelike separated from some slice of $D[R]$. We can therefore use the definition of generalized entropy for absorbing boundary conditions from Section \ref{sec:boundaryconditions} to define generalized second laws for these horizons. Assuming this version of the GSL, we can prove Theorem \ref{thm:causal} for spacetimes with absorbing boundary conditions by exactly the same arguments given above.\footnote{See also~\cite{AMM} for an example of this in AdS$_{2}$.}

What about general coupled boundary conditions? In this case, we cannot define a bulk entropy for any surfaces that are timelike separated from $R$. However, the coupling means that the boundary domain of dependence of $R$ is simply $R$ itself. Consequently, the causal wedge of $R$ is empty,\footnote{Or the causal wedge is $R$ itself, if the asymptotic boundary is included in the bulk spacetime.} and is trivially contained in the entanglement wedge.

\subsection{Nesting}

\begin{thm}\label{thm:Nesting}
	Let $R_1 \subset D[R_2]$ be a boundary region contained inside the domain of dependence $D[R_2]$ of a boundary region $R_2$. Let $M_{R_1}$ and $M_{R_2}$ be their respective quantum maximin surfaces, and $H_{R_1}$ and $H_{R_2}$ be their homology surfaces. Then the domain of dependence of $H_{R_1}$ is contained in that of $H_{R_2}$, with $M_{R_1}$ spacelike from $M_{R_2}$. Furthermore, $M_{R_1}$ and $M_{R_2}$ are minimal on the same time slice. 
\end{thm}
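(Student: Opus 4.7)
The plan is to adapt Wall's classical nesting argument to the quantum setting, making crucial use of the quantum extremality of stable maximin surfaces (Theorems~\ref{thm:stableisextremal_true} and~\ref{thm:stableisextremal}), strong subadditivity of the bulk von Neumann entropy, and the spacelike and null Wall lemmas (Lemmas~\ref{SpacelikeLemma} and~\ref{Wallemma}). The argument has three stages: (i) find a common Cauchy slice on which both $M_{R_1}$ and $M_{R_2}$ are minimal; (ii) establish no-crossing of the two surfaces on that slice; (iii) promote the slice-level nesting to full entanglement-wedge nesting using the QFC~\eqref{eqn:qfc}. Without loss of generality I take $R_1$ and $R_2$ to share a common boundary Cauchy slice, replacing $R_1$ by a representative in $D[R_2]$ if necessary (this is unambiguous for reflecting boundary conditions and accommodated by the extensions of Section~\ref{sec:boundaryconditions} for absorbing and coupled conditions).

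For step (i), I would start from $C_2 \equiv C[M_{R_2}]$ and define $\tilde{M}_{R_1} \equiv \min[C_2, R_1]$, which exists by the arguments of Section~\ref{sec:existence}. The central claim is that $\tilde{M}_{R_1}$ cannot cross $M_{R_2}$ transversally. Suppose it did; I would then perform a swap by taking the boundaries of $\mathrm{In}[\tilde{M}_{R_1}]\cup\mathrm{In}[M_{R_2}]$ and $\mathrm{In}[\tilde{M}_{R_1}]\cap\mathrm{In}[M_{R_2}]$, which are homologous to $R_2$ and $R_1$ respectively once their corners at the crossing locus are smoothed. Strong subadditivity of bulk entropy ensures that the sum of bulk-entropy contributions does not increase under the swap, while the spacelike Wall lemma together with corner-smoothing gives a strict decrease of the geometric piece. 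Therefore one of the two swapped surfaces strictly beats either $\tilde{M}_{R_1}$ or $M_{R_2}$ in $S_{\mathrm{gen}}$ on $C_2$, contradicting minimality on that slice. Hence $\tilde{M}_{R_1}$ is spacelike from $M_{R_2}$ on $C_2$, and by the homology structure induced from $R_1 \subset D[R_2]$ it must lie inside $H_{R_2}$. The maximin inequality $S_{\mathrm{gen}}[\tilde{M}_{R_1}] \leq S_{\mathrm{gen}}[M_{R_1}]$, combined with the fact that $\tilde{M}_{R_1}$ is itself a candidate in the maximin for $R_1$ (being minimal on $C_2$), forces equality of the $S_{\mathrm{gen}}$ values modulo the $\mathcal{O}(1)$ degeneracies discussed in Section~\ref{sec:QuantumMaximinDefinitions}; I can then identify $\tilde{M}_{R_1}$ with $M_{R_1}$ and take $C_2$ as the common Cauchy slice on which both are minimal.

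For step (iii), from $M_{R_1} \subset H_{R_2}$ on the common slice I would extend to $D[H_{R_1}] \subset D[H_{R_2}]$ by examining the boundaries of the two entanglement wedges, which are ruled by null congruences fired from $M_{R_1}$ and $M_{R_2}$. Any tangency of these null hypersurfaces would, by Wall's null lemma, yield an inequality on their quantum expansions $\Theta$; extremality $\Theta[M_{R_i}]=0$ at the base surfaces, combined with the monotonicity along generators guaranteed by the QFC, rules out re-crossings. This yields the desired inclusion $D[H_{R_1}] \subset D[H_{R_2}]$, and in particular spacelike separation of $M_{R_1}$ and $M_{R_2}$.

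The hard part is the swap argument in step (ii). Classically it needs only additivity of area and corner-smoothing; here it requires synchronizing three ingredients --- the spacelike Wall lemma for the geometric piece, strong subadditivity of bulk entropy, and the subleading role of higher-derivative corrections (Section~\ref{sec:SemiclassicalPreliminaries}) --- while respecting the boundary-condition subtleties of Section~\ref{sec:boundaryconditions}. Ensuring a \emph{strict} decrease of $S_{\mathrm{gen}}$ in the generic quantum regime, and handling Planck-scale fuzziness of the candidate surfaces near their intersection, is where the quantum generalization is most delicate; the discussions in Section~\ref{sec:existence} are what make this tolerable at the physics level of rigor adopted throughout the paper.
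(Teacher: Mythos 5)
Your step (ii) swap argument on a fixed slice is essentially the paper's, but your step (i) contains a genuine gap that the paper's proof is specifically structured to avoid. You set $C_2=C[M_{R_2}]$ and define $\tilde M_{R_1}=\min[C_2,R_1]$, then try to identify $\tilde M_{R_1}$ with $M_{R_1}$. The maximin definition only gives $S_{\mathrm{gen}}[\min[C_2,R_1]]\le \max_C S_{\mathrm{gen}}[\min[C,R_1]]=S_{\mathrm{gen}}[M_{R_1}]$; being ``a candidate in the maximin'' is true of $\min[C,R_1]$ for \emph{every} $C$ and only reproduces this same inequality, so equality is not forced and cannot be dismissed as an $\mathcal{O}(1)$ degeneracy --- the discrepancy can be macroscopic if the maximin for $R_1$ is attained on a very different Cauchy slice. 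Without equality you cannot conclude that $M_{R_1}$ is minimal on $C_2$ (which is part of the theorem's statement), nor that $M_{R_1}$ itself, rather than $\tilde M_{R_1}$, lies inside $H_{R_2}$ and is spacelike to $M_{R_2}$. The paper instead \emph{jointly} maximinimizes the weighted sum $\alpha S_{\mathrm{gen}}[M_1]+\beta S_{\mathrm{gen}}[M_2]$ over Cauchy slices, producing a pair $(M_1,M_2)$ minimal on the same slice by construction; the swap and no-crossing arguments (which you have, including the null-separation step via Wall's lemma and the QFC) then give $H_1\subseteq H_2$, and the remaining work is to show that $M_1$ and $M_2$ are each \emph{separately} quantum extremal and each the minimal quantum extremal surface (via representatives), hence equal to $M_{R_1}$ and $M_{R_2}$.

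A second, related omission: when $M_1$ and $M_2$ coincide on a connected component, it is not automatic that the joint maximinimization makes each surface independently maximal under deformations of the slice there, because the bulk von Neumann entropy is non-local. The paper closes this with an additional argument --- using bulk strong subadditivity together with $H_1\subseteq H_2$ to show ${\cal K}_a(M_1,y)={\cal K}_a(M_2,y)$ on the shared component --- and explicitly flags it as the key departure from the classical nesting proof. Your proposal does not engage with this issue; any correct route to the theorem must confront it, or the equivalent difficulty in your framing, which is precisely the unproven identification of $\tilde M_{R_1}$ with $M_{R_1}$.
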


\begin{proof}
    Here we generalize the arguments of~\cite{HeaTak} in the static case and~\cite{EngHarTA} in the covariant case to prove Nesting. For clarity and since~\cite{EngHarTA} is still in preparation, we emphasize where this proof differs from the naive quantum generalization of Wall's proof in~\cite{Wal12}.

        Let $R_1 \subset R_2$ be boundary regions.
        We consider maximinimizing the quantity $\alpha S_\mathrm{gen}[M_1] + \beta S_\mathrm{gen}[M_2]$, for acausal $M_1,M_2$ homologous to $R_1, R_2$ respectively and $\alpha,\beta$ arbitrary positive real numbers (we assume that these surfaces exist based on the intuition discussed in Section \ref{sec:3}).
	Let $H_{1},H_{2}$ be the homology hypersurfaces of $M_1,M_2$ respectively. The surfaces $M_1$, $M_2$ found this way are both minimal $S_\mathrm{gen}$ surfaces defined on the same time slice $C$. We will eventually show that they are in fact the maximin surfaces $M_{R_1}$ and $M_{R_2}$.
	
\begin{figure}
\centering
\includegraphics{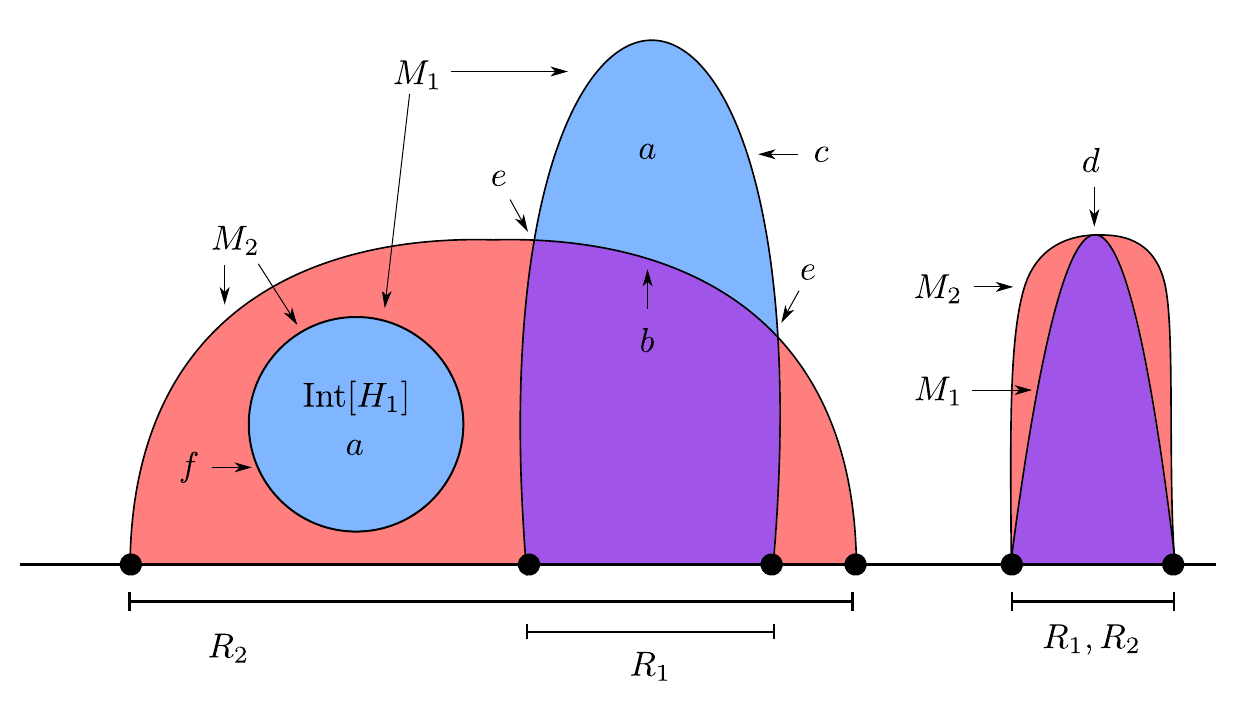}
\caption{A generic setup that violates entanglement wedge nesting. Two boundary regions are depicted, $R_2$ and $R_1 \subset D[R_2]$, with their respective quantum maximin surfaces $M_2$ and $M_1$. All types of points that would violate nesting are labelled with a letter. Points of type $a$ are codimension one subregions, while all other points are subregions of the maximin surfaces. Regions shaded light blue correspond to $\mathrm{Int}[H_1]$ while regions shaded red correspond to $\mathrm{Int}[H_2]$. Regions which are contained by both homology hypersurface, $\mathrm{Int}[H_1] \cap \mathrm{Int}[H_2]$, are shaded purple.}
\label{fig:EWNesting}
\end{figure}
	
We therefore want to show that $H_{1} \subset H_{2}$. We will also need to show that $M_1 \cap M_2$ is a closed and open subset of $M_1$ and $M_2$ (i.e. $M_1$ and $M_2$ only intersect on entire connected components). To do so, there are exactly six types of bulk points that we need to rule out, as enumerated in \cite{EngHarTA}:
			\begin{align}
				a =&\mathrm{Int}[H_1] \cap \mathrm{Ext}[H_2] \label{pt:a}~,\\
				b =& \mathrm{Int}[H_1]\cap M_2  \label{pt:b}~,\\
				c =& \mathrm{Ext}[H_2] \cap M_{1}\label{pt:c}~,\\
				d =& M_1 \cap M_{2} \text{ (same boundary anchors, do not everywhere coincide)}\label{pt:d}~,\\
				e =& M_1 \cap M_{2} \text{ (different boundary anchors)}\label{pt:e}~,\\
				f =& M_1 \cap M_{2} \text{ (floating components, opposite orientation)}\label{pt:f}~.
			\end{align}
Define $\mathrm{Int}[h]$, where $h$ is a manifold with boundary, to be $h - \partial h$. 
	Similarly, $\mathrm{Ext}[h] \equiv C - h$.
         
Define $H_1' = \mathrm{Closure} (\mathrm{Int}(H_1) \cap \mathrm{Int}(H_2))$ and $H_2' = H_1 \cup H_2$ and define surfaces $M_1',M_2'$ by $\partial H_1' = R_1 \cup M_1'$ and $\partial H_2' = R_2 \cup M_2'$. See Figure~\ref{fig:EWNesting}.
By strong subadditivity of bulk von Neumann entropy, 
        \begin{align}
            S_{\mathrm{vN}}[H_1'] + S_{\mathrm{vN}}[H_2'] \le S_{\mathrm{vN}}[H_1] + S_{\mathrm{vN}}[H_2]~.
        \end{align}
        Furthermore, 
	\begin{align}\label{eqn:nestingareas}
            A[M_1'] + A[M_2'] \le A[M_1] + A[M_2]~,
        \end{align}
        where equality holds unless there are points of type $f$, in which case the inequality is strict.
Note now that the surfaces $M_1',M_2'$ will in general have corners, which we must treat carefully because they have ill-defined extrinsic curvatures and therefore poorly defined higher derivative corrections to the geometric part of the generalized entropy. 
		To handle this, we define $M_1',M_2'$ with these corners ``smoothed out'' at a scale large relative to the Planck length and small compared to the bulk field theory scale.  
		Smoothing the corners at a scale small relative to the bulk field theory scale means that the von Neumann entropy will not change appreciably. 
		All changes to the generalized entropy will come from the effect on the geometric part of the generalized entropy.
		Furthermore, smoothing will decrease the area term, while the higher derivative terms will become subdominant to the area term, because we are smoothing at a scale that is large relative to the Planck length and so the semiclassical expansion is valid. \label{remark:corners}
 Therefore, if there are points of type $a$ then
		\begin{align}	
			S_\mathrm{gen}[M_1'] + S_\mathrm{gen}[M_2'] < S_\mathrm{gen}[M_1] + S_\mathrm{gen}[M_2]~,
		\end{align}
		where the inequality is strict because one of two things must be true to have type $a$ points.
		Either there are points of type $f$ or points of type $e$. 
		If there are points of type $f$, then the inequality is strict because Eqn.~(\ref{eqn:nestingareas}) is strict.
		If there are points of type $e$, then the inequality is strict because smoothing the corners strictly reduces the generalized entropy by the previous remark.
 The minimality of $M_1,M_2$ therefore means that there are no points of type $a$ and $H_1 \subseteq H_2$.
		We can also rule out points of type $b,c$ and $f$, because their existence implies the existence of points of type $a$. It also rules out points of type $e$ for the same reason, unless the surfaces are tangent at that point.
		
		We now need to rule out points of type $d$ and points of type $e$ where the surfaces are tangent (i.e. show that $M_1$ and $M_2$ only intersect on entire connected components). 
        Suppose $M_1$ and $M_2$ intersect and are tangent at some point $x$, but do not coincide in an open neighbourhood of $x$ (i.e. they diverge at $x$). Because $H_1 \subseteq H_2$, we can apply Lemma~\ref{SpacelikeLemma} to bound ${\cal K}_2 > {\cal K}_1$ at some point $y$ near $x$, where ${\cal K}_i$ is the quantum mean curvature of surface $M_i$ contracted along the same spacelike orthogonal direction tangent to the time slice.
        Hence either $S_\mathrm{gen}[M_2]$ could be decreased by deforming $M_2$ outwards (making $H_2$ bigger) or $S_\mathrm{gen}[M_1]$ could be increased by deforming $M_1$ inwards (making $H_1$ smaller). 
        Therefore where $M_1$ and $M_2$ have points that coincide, the surfaces match for the entire connected component.
\begin{figure}
\centering
\includegraphics{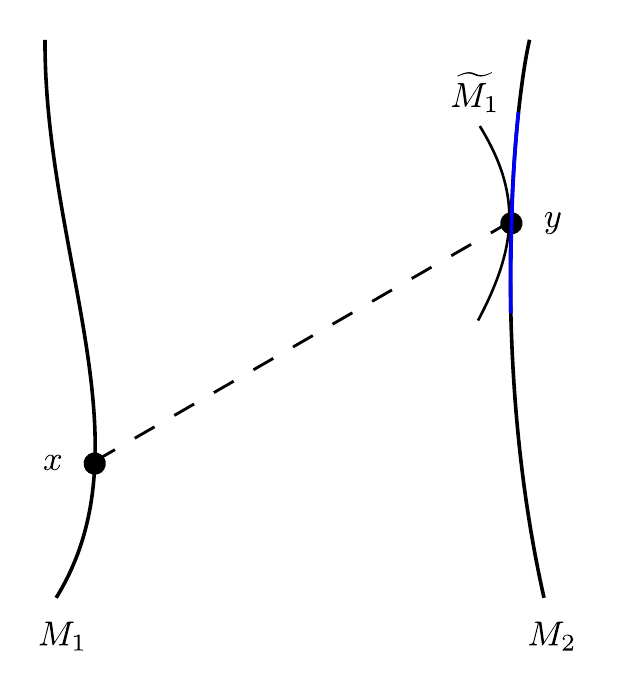}
\caption{We assume that the points $x,y$ on $M_1,M_2$ respectively are null separated (dashed line). Part of the representative $\widetilde{M_1}$ is depicted in blue and is tangent to $M_2$. By applying the touching lemma to the pictured setup we arrive at a contradiction, proving that no two points on $M_1,M_2$ are null separated.}
\label{fig:TouchingLemma}
\end{figure}        

   We now argue that points on $M_1$ cannot be null separated from points on $M_2$.
			For contradiction and without loss of generality, let $x$ be a point on $M_{1}$ null-separated from a point $y$ on $M_{2}$ to its future, i.e. $y \in \dot{J}^+(x)$. 
			Consider shooting a null congruence out from $M_1$ near $x$ towards $y$. Let $\widetilde{M}_1$ be a deformation of $M_1$ near $x$ along this congruence that is tangent to $M_{2}$ at $y$. We can always choose this deformation so that the homology hypersurface $\widetilde{H}_1$ of $\widetilde{M}_1$ in any Cauchy slice is always contained in the domain of dependence $D(H_2)$ of the homology hypersurface $H_2$.
			Because $M_1$ is nowhere timelike related to $M_{2}$, $\widetilde{M}_1$ must ``bend away'' from $M_2$ at $y$ as in Figure~\ref{fig:TouchingLemma}. Since we also have $\widetilde{H}_1 \subseteq D(H_2)$, Wall's lemma then implies
			\begin{align}
				\Theta[\widetilde{M}_1, y] \ge \Theta[M_2, y]~.
			\end{align}
			It must be the case that $\Theta(M_2, y) \ge 0$, or else we could decrease $S_\mathrm{gen}[M_2]$ by deforming it at $y$ to the past along the null generator connecting $x$ and $y$. 
			Moreover, the QFC and quantum null generic condition together imply $\Theta[\widetilde{M}_1, y] < \Theta[M_1, x]$.
			Therefore,
			\begin{align}
				\Theta[M_1, x] > 0~.
			\end{align}
			Consider the quantum-mean curvature ${\cal K}^a$ defined in Eqn.~(\ref{eqn:qmc}),
			\begin{align}
				{\cal K}_a[\sigma] =  4 G_N \hbar \frac{{\cal D} S_\mathrm{gen}[\sigma]}{{\cal D} X^a}~.
			\end{align}
			Let $k^a$ be the null tangent vector to $M_1$ at $x$.
			Then by definition, $k^a {\cal K}_a[M_1,x] = \Theta[M_1, x]$.
			There are two normal vectors to $M_1$ at $x$, and $k^a$ is one of them.
			Since $M_1$ has minimal $S_\mathrm{gen}$ on $C$, there must exist a spacelike tangent vector $v^a$, tangent to $C$, with
			$v^a {\cal K}_a \geq 0$. 
			On the other hand, deforming $M_1$ in a future timelike direction $t^a$ must decrease $S_\mathrm{gen}$, since we can deform $C$ forwards in time at $x$ without affecting $M_2$. Hence $t^a {\cal K}_a \leq 0$. 
			This is a contradiction, since $t^a$ is a positive linear combination of $v^a$ and $k^a$.
	 
	 On components where they do not touch, we can therefore freely vary $C$ in the neighborhood of one surface without affecting the $S_\mathrm{gen}$ of the other surface. 
Where they do touch, they coincide exactly on some connected component $M_C$ (as argued above), and we can take them to continue to coincide on this component for small deformations of $C$.
At this point, we would {\it like} to conclude that $M_1$ and $M_2$ are each quantum extremal, using the arguments in Section~\ref{sec:3}.
However, it is not obvious that the maximinization of $\alpha S_\mathrm{gen}[M_1] + \beta S_\mathrm{gen}[M_2]$ found surfaces that are {\it independently} maximal under deformations of $C$ near $M_C$. 
We now prove that it did.\footnote{Note that the necessity of this part of the proof is a key difference from the classical nesting proof. While the $S_\mathrm{grav}$ term in $S_\mathrm{gen}$ is clearly independently maximal for the two surfaces, things are less obvious for the von Neumann entropy term because it is non-local.} 

The surfaces $M_1$ and $M_2$ can only coincide if 
		\begin{align}
		   v^a {\cal K}_a (M_1, y) = v^a {\cal K}_a (M_2, y) = 0
		\end{align}
		for all $y \in M_C$ and $v^a$ tangent to $C$, since $M_1$ and $M_2$ are separately minimal in $C$. 
		Because $S_\mathrm{grav}$ is local, it contributes equally to both sides, and hence
		\begin{align}
		   v^a \frac{{\cal D}  S_\mathrm{vN}(M_1)}{{\cal D} X^a[y]} = v^a \frac{{\cal D}  S_\mathrm{vN}(M_2)}{{\cal D} X^a[y]}~.
		\end{align}
		Now assume for contradiction that there exists some timelike vector $t^a$ such that
		\begin{align}
		    t^a\frac{{\cal D}  S_\mathrm{vN}(M_1)}{{\cal D} X^a[y]} - t^a\frac{{\cal D}  S_\mathrm{vN}(M_2)}{{\cal D} X^a[y]} =
		    \kappa~,
		\end{align}
		where $\kappa$ is some non-zero real number.
		Then we can consider a spacelike vector $\omega^a = v^a - \mathrm{sign}(\kappa)\epsilon t^a$, where $v^a$ points in the direction away from $H_1$ and $H_2$, and $\epsilon$ is some small positive constant. 
		This would satisfy
		\begin{align}
		   w^a \frac{{\cal D}  S_\mathrm{vN}(M_1)}{{\cal D} X^a[y]} < w^a \frac{{\cal D} S_\mathrm{vN}(M_2)}{{\cal D} X^a[y]}~,
		\end{align}
		which violates strong subadditivity because $H_1 \subseteq H_2$.
		Therefore
		\begin{align}
		{\cal K}_a (M_1, y) = {\cal K}_a (M_2, y)~,
		\end{align}
		for all $y \in M_c$.
		 Because $M_1$ and $M_2$ are acausal we therefore conclude that $M_1$ and $M_{2}$ are each quantum extremal surfaces. 
		 Finally we need to show that $M_1$ and $M_{2}$ are both minimal generalized entropy quantum extremal surfaces and hence maximin surfaces. If they were not, then some other pair ${\cal X}_{1},{\cal X}_{2}$ with less weighted $S_\mathrm{gen}$ would be the minimal quantum extremal surfaces. Then $S_\mathrm{gen}[M_1] > S_\mathrm{gen}[{\cal X}_{1}]$ or $S_\mathrm{gen}[M_{2}] > S_\mathrm{gen}[{\cal X}_{2}]$. But in the former case, the representative $\widetilde{{\cal X}_{1}}$ would have less $S_\mathrm{gen}$ on $C$ than $M_1$, contradicting the minimality of $M_1$. A similar contradiction is reached in the latter case. 
\end{proof}

An immediate corollary follows

 \begin{cor} for any set of disjoint spacelike regions $R_n, n \in 1...N$, all the $M_{R_n}$ are minimal on the same slice $C$. To prove this, construct surfaces $M_n$ by minimizing then maximizing a quantity $Z = \sum_n c_n S_\mathrm{gen}[M_n] $. Because these regions are disjoint, each $M_n$ is spacelike related to any other. The proof that the $M_n$s are the minimal quantum extremal surfaces proceeds in the same way as above.
\end{cor}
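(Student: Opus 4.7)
The plan is to mirror the weighted-sum maximin construction of Theorem~\ref{thm:Nesting}, now applied to $N$ disjoint spacelike boundary regions. Fix positive weights $c_1, \ldots, c_N > 0$ and maximinimize the functional $Z = \sum_n c_n S_{\mathrm{gen}}[M_n]$ over all tuples of acausal surfaces $(M_1, \ldots, M_N)$ with $M_n$ homologous to $R_n$ and all $M_n$ lying on a common Cauchy slice $C \supseteq \bigcup_n R_n$. Existence of a maximinimizer on some slice $C$ follows from the physics-level semicontinuity/compactness arguments sketched in Section~\ref{sec:existence}.

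The next step is to show pairwise spacelike separation of the resulting $M_n$. Because the boundary regions $R_i, R_j$ are disjoint and mutually spacelike, their homology hypersurfaces $H_i, H_j$ on $C$ do not share any boundary portion on $\partial C$: one has $R_i \cap H_j = R_j \cap H_i = \emptyset$. This lets the same surgery-plus-smoothing arguments used inside the proof of Theorem~\ref{thm:Nesting} be run pair by pair to rule out bulk overlap between $H_i$ and $H_j$ (the analog of type-$a$ points), invoking strong subadditivity of bulk von Neumann entropy and the strict decrease in the geometric part of $Z$ produced by smoothing corners at scales large compared to the Planck length but small compared to the bulk field-theory scale. A direct copy of the touching-lemma argument from the nesting proof, using Wall's Lemma, the QFC, and the quantum null generic condition, then rules out any null separation between pairs of the $M_n$, so that on the maximinimizing slice the surfaces are all pairwise spacelike.

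With pairwise spacelike separation in hand, each $M_n$ lies in its own tubular neighborhood, and $C$ can be varied in a neighborhood of any one $M_n$ without affecting any other. The proofs of Theorems~\ref{thm:stableisextremal_true} and~\ref{thm:stableisextremal} then apply verbatim to each $M_n$ in isolation, identifying it as the minimal-$S_{\mathrm{gen}}$ quantum extremal surface homologous to $R_n$ and hence as $M_{R_n}$ by the equivalence established earlier. Since all the $M_n$ were constructed on the common slice $C$ by the outer maximization, the conclusion of the corollary follows. The main obstacle is the surgery step: unlike the nesting proof, where the inclusion $H_1 \subseteq H_2$ supplied a canonical surgery pair $(H_1 \cap H_2, H_1 \cup H_2)$ for which bulk SSA directly bounds the change in von Neumann entropy, for disjoint regions no such ordering is available, and the surgery together with the corresponding SSA-type inequality controlling the bulk entropy change must be chosen with care so that the individual homology constraints are preserved and $Z$ strictly decreases whenever $H_i$ and $H_j$ genuinely overlap.
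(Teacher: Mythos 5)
Your proposal is correct and follows essentially the same route the paper takes: maximinimize the weighted sum $Z=\sum_n c_n S_{\mathrm{gen}}[M_n]$ on a common slice, use the touching-lemma/QFC argument to get pairwise spacelike separation, and then run the extremality and representative arguments componentwise. The surgery obstacle you flag is real but resolvable: for disjoint $R_i,R_j$ the correct surgery is the \emph{difference} pair $(H_i\setminus H_j,\,H_j\setminus H_i)$, which preserves each individual homology constraint since $R_i\subset H_i\setminus H_j$, and the bulk entropy change is controlled by weak monotonicity, $S_{\mathrm{vN}}[AB]+S_{\mathrm{vN}}[BC]\ge S_{\mathrm{vN}}[A]+S_{\mathrm{vN}}[C]$ with $B=H_i\cap H_j$, which is equivalent to SSA.
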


\subsection{Strong Subadditivity}

We now use this theorem to prove strong subadditivity. 
There are two features of this proof that are different than the classical one in \cite{Wal12}. 
The first is that we must use {\it bulk} strong subadditivity. 
The second is that we must be especially careful dealing with large extrinsic curvatures in our surfaces, because of their explicit appearance in $S_\mathrm{gen}$. 

\begin{thm}
	Let $R_1,R_2,R_3$ be disjoint boundary regions. Let $M_{i j...k}$ denote the QMM surface for $R_i \cup R_j \cup...\cup R_k$. Then strong subadditivity holds:
	\begin{equation}
		S_\mathrm{gen}[M_{12}] +  S_\mathrm{gen}[M_{23}] \ge S_\mathrm{gen}[M_{123}] + S_\mathrm{gen}[M_{2}]~.
	\end{equation}
\end{thm}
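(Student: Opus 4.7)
The plan is to follow Wall's classical strategy: transport all relevant maximin surfaces onto a single Cauchy slice, cut and paste to produce trial surfaces for $R_2$ and $R_1R_2R_3$, and invoke minimality to close the chain. The two quantum-specific modifications, which the authors flag before the statement, are that bulk strong subadditivity will have to replace pure area accounting for the von Neumann piece, and that corner smoothing will be needed to keep the higher-curvature part of $S_{\mathrm{grav}}$ under control.

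First I would invoke Theorem~\ref{thm:Nesting} together with its corollary (applied to the disjoint family $\{R_1, R_2, R_3\}$ and the unions $R_{12}, R_{23}, R_{123}$) to place $M_{12}, M_{23}, M_2, M_{123}$ simultaneously on a single Cauchy slice $C$, each minimizing $S_{\mathrm{gen}}$ in its own homology class. Writing $H_{12}, H_{23}$ for the induced homology hypersurfaces on $C$, nesting also yields the inclusions $H_{12}\cap H_{23}\supseteq H_2$ and $H_{12}\cup H_{23}\subseteq H_{123}$. Next I would define trial hypersurfaces
\[
H_2' = H_{12}\cap H_{23}, \qquad H_{123}' = H_{12}\cup H_{23},
\]
whose bulk boundaries $N_2$ and $N_{123}$ are obtained by swapping the relevant pieces of $M_{12}$ and $M_{23}$. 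By construction $N_2$ is homologous to $R_2$, $N_{123}$ is homologous to $R_1R_2R_3$, and the area term is exactly reshuffled:
\[
A[N_2] + A[N_{123}] = A[M_{12}] + A[M_{23}].
\]

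For the entropy term, decompose the union into three disjoint bulk regions $\alpha = H_{12}\setminus H_{23}$, $\beta = H_{12}\cap H_{23}$, $\gamma = H_{23}\setminus H_{12}$ and apply bulk strong subadditivity to $(\alpha,\beta,\gamma)$:
\[
S_{\mathrm{vN}}[H_{12}] + S_{\mathrm{vN}}[H_{23}] \ \ge\ S_{\mathrm{vN}}[H_{12}\cup H_{23}] + S_{\mathrm{vN}}[H_{12}\cap H_{23}].
\]
As in the nesting proof, the surfaces $N_2, N_{123}$ acquire corners wherever $M_{12}$ crosses $M_{23}$, so one must round them at a scale much larger than the Planck length but much smaller than the bulk effective field theory scale; this strictly decreases the area (which only helps the inequality we want), renders the higher-curvature terms subdominant, and changes $S_{\mathrm{vN}}$ only negligibly. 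Combining the area reshuffle, the subdominant smoothed higher-curvature pieces, and bulk SSA yields
\[
S_{\mathrm{gen}}[M_{12}] + S_{\mathrm{gen}}[M_{23}] \ \ge\ S_{\mathrm{gen}}[N_{123}] + S_{\mathrm{gen}}[N_2].
\]
Finally, minimality of $M_{123}$ and $M_2$ on $C$ within their respective homology classes gives $S_{\mathrm{gen}}[N_{123}]\ge S_{\mathrm{gen}}[M_{123}]$ and $S_{\mathrm{gen}}[N_2]\ge S_{\mathrm{gen}}[M_2]$, chaining to the desired inequality.

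The main obstacle is the careful handling of the extrinsic-curvature contributions at the corners where $M_{12}$ meets $M_{23}$: without the semiclassical smoothing argument, $S_{\mathrm{grav}}$ of the glued surfaces is not even well defined, and one must verify that the smoothing suppresses the higher-derivative corrections without disturbing the area and entropy balance in a way that could spoil the inequality. A secondary subtlety, inherited from the boundary-conditions discussion in Section~\ref{sec:boundaryconditions}, is that for absorbing or coupled boundary conditions all four bulk entropies must be evaluated with respect to a common boundary Cauchy slice containing $R_1\cup R_2\cup R_3$, so that bulk SSA is being applied to compatible reduced states.
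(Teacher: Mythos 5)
Your overall architecture (cut-and-paste on a single slice, bulk SSA for the von Neumann piece, corner smoothing for the higher-curvature terms, then minimality of $M_2$ and $M_{123}$) matches the paper's proof. However, there is a genuine gap in your very first step: you cannot place $M_{12}$ and $M_{23}$ on a common Cauchy slice by invoking Theorem~\ref{thm:Nesting} or its corollary. The corollary applies only to \emph{disjoint} regions, and the nesting theorem only to \emph{nested} ones; $R_1\cup R_2$ and $R_2\cup R_3$ are neither disjoint nor nested (they overlap on $R_2$), and their maximin surfaces generically live on \emph{different} Cauchy slices. This is precisely the difficulty that forces the representative construction in both Wall's classical argument and the quantum proof here. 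Without a common slice, the regions $H_{12}\cap H_{23}$ and $H_{12}\cup H_{23}$ are not well-defined subsets of a single Cauchy slice, bulk SSA cannot be applied (the three regions $\alpha,\beta,\gamma$ must carry a joint reduced state on one slice), and the minimality of $M_2$ and $M_{123}$ on $C$ cannot be used to compare against your glued surfaces.

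The missing ingredient is the following two-step replacement for your first paragraph. Nesting applied to $R_2\subset D[R_1\cup R_2\cup R_3]$ places only $M_2$ and $M_{123}$ on a common slice $C$. One then constructs the \emph{representatives} $\widetilde{M}_{12}=\partial J[M_{12}]\cap C$ and $\widetilde{M}_{23}=\partial J[M_{23}]\cap C$, which are homologous to $R_1\cup R_2$ and $R_2\cup R_3$ on $C$, and invokes the quantum focusing conjecture (as in Section~\ref{sec:QuantumMaximinDefinitions}) to get $S_\mathrm{gen}[\widetilde{M}_{12}]\le S_\mathrm{gen}[M_{12}]$ and $S_\mathrm{gen}[\widetilde{M}_{23}]\le S_\mathrm{gen}[M_{23}]$. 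The remainder of your argument then goes through verbatim with $\widetilde{M}_{12},\widetilde{M}_{23}$ in place of $M_{12},M_{23}$; note this makes the QFC an essential hypothesis of the theorem, which your write-up as it stands never uses. Your closing remarks about corner smoothing and about evaluating all entropies with respect to a common boundary slice under non-reflecting boundary conditions are both correct and consistent with the paper.
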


\begin{proof}
	Our strategy is to first find representatives of $M_{12}$ and $M_{23}$ on the same time slice on which $M_{123}$ and $M_2$ lie. Then the minimality of $M_{123}$ and $M_2$, combined with bulk strong subadditivity, will imply the strong subadditivity inequality above. 

 By Theorem \ref{thm:Nesting}, $M_{123}$ and $M_2$ lie on the same time slice $C$. 
 Moreover, $M_{12}$ and $M_{23}$ have homologous representatives $\widetilde{M}_{12}$ and $\widetilde{M}_{23}$ on $C$, and these representatives have smaller $S_\mathrm{gen}$ than $M_{12}$ and $M_{23}$ respectively.
 Define four bulk regions as follows. Let $\partial \widetilde{H}_{12} = R_1 \cup R_2 \cup \widetilde{M}_{12}$ and $\partial \widetilde{H}_{23} = R_2 \cup R_3 \cup \widetilde{M}_{23}$. Let $H'_{2} = \mathrm{Closure}(\mathrm{Int}(\widetilde{H}_{12}) \cap \mathrm{Int}(\widetilde{H}_{23}))$ and $ H'_{123} = \widetilde{H}_{12} \cup \widetilde{H}_{23}$. 
		The boundaries of these latter regions define new bulk surfaces: let $M'_2$ be defined by $\partial H'_2 = R_2 \cup M'_2$. Similarly, let $\partial H'_{123} = R_1 \cup R_2 \cup R_3 \cup M'_{123}$.
		As in remark \ref{remark:corners} of the Nesting proof, we define $M'_2$ and $M'_{123}$ with corners smoothed out. 
We now claim that bulk strong subadditivity implies 
		\begin{equation}
			S_\mathrm{gen}[\widetilde{M}_{12}] + S_\mathrm{gen}[\widetilde{M}_{23}] > S_\mathrm{gen}[M'_{123}] + S_\mathrm{gen}[M'_{2}]~.
		\end{equation}
		To prove this, expand out $S_\mathrm{gen}$ such that the inequality reads 
		\begin{equation}
			\frac{A[\widetilde{M}_{12}]}{4 G_N} +  S_{\mathrm{vN}}[\widetilde{H}_{12}] + \frac{A[\widetilde{M}_{23}]}{4 G_N} + S_{\mathrm{vN}}[\widetilde{H}_{23}] > 
			\frac{A[M'_{123}]}{4 G_N} +  S_{\mathrm{vN}}[H'_{123}] + 
			\frac{A[M'_{2}]}{4 G_N}  + S_{\mathrm{vN}}[H'_{2}]~,
		\end{equation}
		where here we lump all geometric terms together under the label $A/4G_N$.
		As noted above, the corner-smoothing involved in defining $M'_{123}$ and $M'_2$ overall decreases their $A/4G_N$. Therefore the $A/4G_N$ part of the left-hand side is strictly greater than that on the right-hand side. What remains is implied by bulk strong subadditivity. Therefore the entire inequality is satisfied.
 Minimality of $M_{123}$ and $M_2$ on $C$ implies $S_\mathrm{gen}[M'_{123}] + S_\mathrm{gen}[M'_{2}] \ge S_\mathrm{gen}[M_{123}] + S_\mathrm{gen}[M_2]$. Combining this with the above inequalities concludes the proof.

\end{proof}

\subsection{MMI and other inequalities}

The classical maximin construction allowed the proof of inequalities that do not hold for general von Neumann entropies. 
One example is the proof of Monogamy of Mutual Information (MMI)~\cite{HayHeaMal13}, though there are many others (see literature starting with \cite{BaoNez15}). 
We expect in general that these inequalities can be violated at next-to-leading order by QMM surfaces, because there are generally configurations that saturate the inequality at leading order and will not hold at next-to-leading-order, if the entropy inequality is not obeyed by the bulk fields. 

We consider the MMI inequality
\begin{equation}\label{eqn:mmi}
	S_\mathrm{gen}[M_{12}] + S_\mathrm{gen}[M_{13}] + S_\mathrm{gen}[M_{23}] \ge S_\mathrm{gen}[M_{1}] + S_\mathrm{gen}[M_{2}] + S_\mathrm{gen}[M_{3}] + S_\mathrm{gen}[M_{123}]~,
\end{equation}
for some boundary regions $R_1,R_2,R_3$.
Let these regions be small and sufficiently separated such that the entanglement wedge of any joint region is the union of each constituent's entanglement wedge.
Furthermore, let $H_1, H_2,$ and $H_3$ be their respective homology hypersurfaces. 
In this case, the geometric part of MMI is saturated, and therefore MMI is satisfied for these boundary regions {\it if and only if} MMI is true for the three parties $H_1,H_2,H_3$: 
\begin{equation}
	S_{\mathrm{vN}}[H_1 H_2] + S_{\mathrm{vN}}[H_1 H_3] + S_{\mathrm{vN}}[H_2 H_3] \ge S_{\mathrm{vN}}[H_1] + S_{\mathrm{vN}}[H_2] + S_{\mathrm{vN}}[H_3] + S_{\mathrm{vN}}[H_1 H_2 H_3]~.	
\end{equation}
Bulk MMI is a necessary condition for boundary MMI to hold in general.

There is no reason to believe that MMI holds in general in the bulk.
For example, one can prepare four qubits in a four-party GHZ state
\begin{align}
\frac{1}{\sqrt{2}} \left( \ket{0000} + \ket{1111} \right).
\end{align}
MMI does not hold if each of the three parties are one of the parties in this state. 
We can therefore violate MMI by choosing very small boundary regions $R_i$, so that the bulk fields in each entanglement wedge are essentially uncorrelated in the vacuum state, and then placing one qubit from this four-party GHZ state in each of the three entanglement wedges.
It follows that holographic quantum states will not satisfy MMI in general, once quantum corrections are included.

For strong subaddivity, we could also prove a converse of this statement. Bulk SSA was a sufficient condition to imply boundary SSA. However, the same proof strategy that we used for SSA requires a much more complicated inequality than bulk MMI to prove boundary MMI.

By a simple generalization of the proof of entanglement wedge nesting, one can show that there exists a single Cauchy slice $C$ that contains the maximin surfaces $M_1$, $M_2$, $M_3$ and $M_{123}$ for the regions $R_1$, $R_2$, $R_3$ and $R_1 \cup R_2 \cup R_3$ and within which all those maximin surfaces have minimal generalized entropy.

The representatives $\tilde M_{12}$, $\tilde M_{13}$ and $\tilde M_{23}$ of the maximin surfaces for $R_1 \cup R_2$, $R_1 \cup R_3$ and $R_2 \cup R_3$  divide this Cauchy slice $C$ into \emph{eight} disjoint regions, labelled by whether or not they are contained in each of the three homology hypersurfaces $\tilde{H}_{ij}$. We label these regions $H_{\pm \pm \pm}$ where the three subscripts indicated whether the region is contained in ($+$) or not contained in ($-$) the homology hypersurfaces $\tilde{H}_{12}$, $\tilde{H}_{13}$ and $\tilde{H}_{23}$ respectively.

We can now construct new homology hypersurfaces
\begin{align}
    \begin{split}
        & H'_1 = \tilde{H}_{++-}, \\
        & H'_2 = \tilde{H}_{+-+}, \\
        & H'_3 = \tilde{H}_{-++}, \\
        H'_{123} = \tilde{H}_{+++} \cup \tilde{H}_{-++}& \cup \tilde{H}_{+-+} \cup \tilde{H}_{++-} \cup \tilde{H}_{--+} \cup \tilde{H}_{-+-} \cup \tilde{H}_{+--},
    \end{split}
\end{align}
which define surfaces $M_1'$, $M_2'$, $M_3'$ and $M_{123}'$ homologous to $R_1$, $R_2$, $R_3$ and $R_{1} \cup R_2 \cup R_3$ respectively.

By the minimality of the maximin surfaces within $C$, these new surfaces have larger generalized entropy than the maximin surfaces. By the quantum focussing conjecture, the representatives $\tilde{M}_{ij}$ have smaller generalized entropy than the maximin surfaces $M_{ij}$. Hence boundary MMI would follow if we had
\begin{align} \label{eq:needed}
	S_\mathrm{gen}(\tilde{M}_{12}) + S_\mathrm{gen}(\tilde{M}_{23}) + S_\mathrm{gen}(\tilde{M}_{13}) \geq S_\mathrm{gen}(M_{1}') + S_\mathrm{gen}(M_{2}') + S_\mathrm{gen}(M_{3}') + S_\mathrm{gen}(M_{123}').
\end{align}
The classical terms $S_\mathrm{grav}$ do indeed satisfy this inequality. Furthermore, if the representatives $\tilde{M}_{ij}$
have nontrivial transverse intersection, smoothing corners will decrease $S_\mathrm{grav}$ at leading order and MMI should hold so long as the bulk von Neumann entropies $S_\mathrm{vN}$ are subleading. However, by constructing a sufficiently complicated system of multiboundary wormholes, we can make the surfaces $\tilde{M}_{ij}$ not intersect, even though all eight regions $\tilde{H}_{\pm \pm \pm}$ are nonempty. In this case, the classical $S_\mathrm{grav}$ terms will exactly saturate the inequality \eqref{eq:needed}.

What about the von Neumann entropies? For \eqref{eq:needed} to hold we require
\begin{align} \label{eq:sevenregions}
\begin{split}
    S_\mathrm{vN}(\tilde{H}_{+++}& \cup \tilde{H}_{++-} \cup \tilde{H}_{+-+} \cup \tilde{H}_{+--}) + S_\mathrm{vN}(\tilde{H}_{+++} \cup \tilde{H}_{-++} \cup \tilde{H}_{++-} \cup \tilde{H}_{-+-})\\ +
    S_\mathrm{vN}&(\tilde{H}_{+++} \cup \tilde{H}_{-++} \cup \tilde{H}_{+-+} \cup \tilde{H}_{--+}) \geq S_\mathrm{vN}(\tilde{H}_{++-}) + S_\mathrm{vN}(\tilde{H}_{+-+}) + S_\mathrm{vN}(\tilde{H}_{-++}) \\&+ S_\mathrm{vN}(\tilde{H}_{+++} \cup \tilde{H}_{-++} \cup \tilde{H}_{+-+} \cup \tilde{H}_{++-} \cup \tilde{H}_{+--} \cup \tilde{H}_{-+-} \cup \tilde{H}_{--+}).
\end{split}
\end{align}
This is a significantly more complicated inequality than bulk MMI, involving seven regions rather than three.
However, in the special case where four of the seven regions are empty, it reduces to bulk MMI. It would be interesting to study whether \eqref{eq:sevenregions} is in fact implied by bulk MMI, or whether there exist quantum states that satisfy bulk MMI, for any choice of three regions, but do not satisfy \eqref{eq:sevenregions}. More generally, it would be interesting to know whether any inequality satisfied by the leading order holographic entropy cone extends to include quantum corrections so long as the same inequality is satisfied by the bulk fields themselves.

\section{Quantum Maximin for Nonholographic Quantum Subsystems} \label{sec:nonholographic}

It has recently been proposed that quantum extremal surfaces and entanglement wedges are well defined even for \emph{nonholographic} subsytems, such as a causal diamond in a quantum field theory, that are entangled with bulk degrees of freedom in a holographic theory or, more generally, for the combination of a holographic boundary region and additional entangled nonholographic subsystem. The nonholographic subsystem can even itself be in the bulk of a holographic theory, so long as it is in a bulk region where gravity can be ignored (for example near flat space asymptotic infinity). This idea was introduced in \cite{HayPen18} (for the combination of holographic and nonholographic systems) and \cite{Pen19} (for purely nonholographic systems) and was developed further (for purely nonholographic systems) in \cite{AMMZ,AMM}  where it was called the ``quantum extremal islands conjecture''.

Specifically, a quantum extremal surface $X_{R,Q}$ for the combination of a nonholographic subsystem $Q$ and a boundary region $R$ is defined to be a codimension-two surface, satisfying $X_{R,Q} \cup R = \partial H_{R,Q}$ for some acausal homology hypersurface $H_{R,Q}$, that is an extremum of
the generalized-entropy-like functional, which we name the hybrid entropy:
\begin{align} \label{gengenentropy}
    S_{\mathrm{hyb}}=S_{\mathrm{grav}}[X_{R,Q}] + S_{vN}[H_{R,Q} \cup Q],
\end{align}
where $S_{vN}[H_{R,Q} \cup Q]$ is the von Neumann entropy of the tensor product of $Q$ with the bulk fields in $H_{R,Q}$.\footnote{This formula generalises the formulas given in, for example, Eqn. 4.14 of \cite{HayPen18} and Eqn. 15 of ~\cite{AMMZ} to include higher derivative corrections.}  The entanglement wedge of the combination of $R$ and $Q$ is then defined to be the set of points which are completely determined by the data on $H_{R,Q} \cup Q$, which we shall call $D[H_{R,Q} \cup Q]$, in analogy with the domain of dependence, defined using the quantum extremal surface $X_{R,Q}$ that minimizes \eqref{gengenentropy}.\footnote{The original context for this rule was a derivation of the role of state dependence in entanglement wedge reconstruction. In this derivation, one has to consider bulk (code space) states that are entangled with an arbitrary (not necessarily holographic) reference system $Q$. State-independent reconstruction, on a boundary region $R$, is only possible if the bulk operator is not contained in $H_{\bar R, Q}$ for any such entangled state. (Here $\bar R$ is the complement of $R$.) This ends up being equivalent to the bulk operator being contained in $H_R$, for all \emph{mixed} states in the code space.}

If the boundary region $R$ is empty, then the homology constraint becomes $\partial H_{R,Q} = X_{R,Q}$. It follows $H_{R,Q}$ is an ``island'' in the bulk of the holographic theory, bounded entirely by $X_{R,Q}$. Somewhat remarkably, and unintuitively, if $Q$ contains the Hawking radiation of an old black hole, this island can become non-empty \cite{Pen19,AEMM}.

The conjecture in \cite{HayPen18,Pen19,AMMZ} is that the entropy of $Q \cup R$ is given by \eqref{gengenentropy}, evaluated on the minimal quantum extremal surface, and that bulk operators in the entanglement wedge can be reconstructed on $Q \cup R$. In \cite{HayPen18,Pen19}, this claim was justified by imagining throwing system $Q$ into the bulk of a very large holographic system, which cannot change the entropy, and then applying the standard quantum extremal surface prescription. 

In \cite{AMMZ}, an alternative perspective was proposed by considering the case where the bulk matter, and the `nonholographic' system $Q$ are both themselves holographic CFTs. In this case, it was hypothesized that the entropy of $Q$ can be found by using the classical HRT prescription in a `doubly holographic' description of the state, so long as the HRT surface is allowed to end on a surface in the original bulk geometry, which is interpreted as an end-of-the-world brane in this new description.

In most of the situations where quantum extremal surfaces have been considered for nonholographic systems, the system $Q$ has been the \emph{only} nonholographic system entangled with the holographic theory, and the overall state has been pure. In this case, the quantum  extremal surface has also been an ordinary quantum extremal surface for the complementary boundary region $\bar R$. However, when the state is mixed, or more than one nonholographic subsystem is entangled with the bulk fields, this is no longer true, and a genuinely new type of quantum extremal surface exists. See Section 3 of \cite{Pen19} for important cases where this is true. 

Our goal in this section is to define a quantum maximin prescription for the entropy of $Q \cup R$ and to show that it is equivalent to the minimal quantum extremal surface prescription discussed above, under the assumptions from Section~\ref{sec:3}. 

By doing so, we justify the argument from \cite{Pen19} that it is possible to  use maximin arguments to show that a non-empty island must exist for an old evaporating black hole, without having to actually find the quantum extremal surface.

We will then prove the same important consistency properties, such as nesting and SSA, for this prescription that we previously showed were satisfied for the standard quantum extremal surface prescription. 

The difference between these proofs and those of the previous sections arise because the quantity in~\eqref{gengenentropy} is not itself the generalized entropy of any \emph{bulk} region.
It is nearly the generalized entropy of the region $H_{R,Q}$, but with the additional nonholographic system $Q$ included in the entropy term. 
We will argue that nevertheless, the same assumptions can be used to construct a useful quantum maximin prescription for this entropy.

\subsection{Modifications of quantum maximin}
Let us begin with a formal definition of the quantum maximin prescription for the entropy of a nonholographic system $Q$, plus a boundary region $R$.
We define the quantum maximin surface $M_{R,Q}$ by the following maximinimization procedure
\begin{align} \label{genmaximin}
	\max_C \min_{M_{R,Q} \in C} \left[
	S_{\mathrm{grav}}[M_{R,Q}] + S_\mathrm{vN}[H_{R,Q} \cup Q] \right]~,
\end{align}
where the maximization is over time slices $C$ that are everywhere spacelike separated from the region $A$ and the minimization is over surfaces $M_{R,Q} \in C$ that satisfy $\partial H_{R,Q} = M_{R,Q} \cup R$ for some surface $\partial H_{R,Q}$. As before, $S_\mathrm{vN}[H_{R,Q} \cup Q]$ is the von Neumann entropy of $Q$ together with the bulk fields in $H_{R,Q}$. We will assume stability for $M_{R,Q}$ as well, where the definition is modified in the obvious way.

The proofs of existence etc. are all similar to their counterparts for ordinary quantum maximin surfaces.
Hence our strategy is to point out the differences between each of these proofs from the ones in sections~\ref{sec:existence} and \ref{sec:applications}.  

\subsubsection*{Existence}
One might worry that, since the hybrid entropy is not actually a generalized entropy, the arguments for the existence of maximin surface for the generalized entropy in Section \ref{sec:existence} may no longer apply.

However, maximinimizing the hybrid entropy entropy actually is the same as maximinimizing a generalized entropy. We simply use the trick from \cite{HayPen18,Pen19} of throwing the system $Q$ into an auxiliary holographic theory $S$ with sufficiently small gravitational coupling $G_N'$ (which should be different from the coupling $G_N$ in the original holographic system). Then \eqref{genmaximin} is simply the generalized entropy for the union of $H_{R,Q}$ and the entire bulk of $S$. Maximinimizing \eqref{genmaximin} is the same as doing ordinary maximin for the union of $R$ and the entire boundary of $S$, except that we aren't allowed to consider surfaces that have a nonempty component in the bulk of $S$. But if we take the limit $G_N' \to 0$ while keeping everything else fixed, the minimal generalized entropy surface in any Cauchy slice will never have a nonempty component in the bulk of $S$ and hence the maximin surface will be $M_{R,Q}$.

\subsubsection*{Equivalence with the Quantum Extremal Surface Prescription}
Crucial to the proof of equivalence is the use of the QFC to upperbound the generalized entropy of a representative surface by the original surface.
To do the same thing here, we now argue that the QFC implies the entropy Eqn.~(\ref{gengenentropy}) of a representative is upperbounded by the entropy Eqn.~(\ref{gengenentropy}) of the original surface.
As always, a representative $\tilde{X}$ of a codimension-two surface $X$ on a Cauchy slice $C$ is defined by releasing an orthogonal null congruence $N[X]$ from $X$ that intersects $C$, and defining $\tilde{X} = N[X] \cap C$. 
In general there are two representatives that fit this criterion -- we consider either one.

The application of the QFC is not completely immediate because the hybrid entropy is not a generalized entropy. However, it is still fairly straightforward. 
We just use our usual trick of throwing the system $Q$ into a gravitational theory and then take the limit where the gravitational coupling $G_N' \to 0$. We therefore conclude that a version of the QFC holds when $Q$ is a nongravitational theory and the generalized entropy is replaced with the hybrid entropy. 

This can also be argued more carefully by a separate analysis of each term in the quantum expansion. As explained in Section~\ref{sec:2}, ${\cal D}\Theta/{\cal D}\lambda$ has local terms that come from the gravitational entropy and the von Neumann entropy, and non-local terms that come only from the von Neumann entropy. Here ``local'' means proportional to a delta-function $\delta(y-y')$ or its derivatives, where $y$ and $y'$ are the locations of the two variations involved in ${\cal D}\Theta/{\cal D}\lambda$. The local, gravitational terms will be the same for variations of $M_{R,Q}$, regardless of whether the system includes $Q$ or not. The non-local terms in the von Neumann entropy are negative definite due to SSA, so they will not depend on whether $Q$ is non-gravitational. Only the more mysterious local terms from the von Neumann entropy could possibly care whether $Q$ is non-gravitational. The above argument suggests that they do not; one can construct a more careful argument using the fact that the diagonal terms usually amount to the statement of the so-called Quantum Null Energy Condition \cite{LeiLev18,BalCha19}. This condition depends only on data local to the point of variation. So, the inequality ${\cal D}\Theta/{\cal D}\lambda \le 0$ is just as likely to hold if a part of the system is non-gravitational, as long as that part is not being deformed. 

\subsubsection*{Entanglement Wedges Contain Causal Wedges}
Similar to the discussion above, we just need a version of the generalized second law (GSL) that applies to the hybrid entropy. We argue for such a law by exactly the same process of considering the hybrid entropy as the limit of a generalised entropy when a small gravitational coupling is removed.

\subsubsection*{Nesting}
We prove that the entanglement wedge, found using the quantum maximin prescription, does not increase in size when the boundary region $R$ is made smaller, and also when we only have access to a subsystem $Q'$ of the nonholographic quantum system $Q = Q'\otimes \bar{Q}'$; an example would be the restriction of a QFT state to a causal diamond $Q'$ that is entirely contained in the original causal diamond that defined $Q$. When $R$ is empty, this means that smaller nonholographic subsystems never have larger islands.
This is important in its own right, and also important for proving SSA in the following subsection.

\begin{thm}\label{thm:QEInesting}
	Let $R_1 \subseteq D[R_2]$ be a boundary region contained inside the domain of dependence $D[R_2]$ of the boundary region $R_2$ and let $Q_1$ be a subsystem of the nonholographic quantum system $Q_2 = Q_1 \otimes \bar{Q}_1$.
	Let $H_{R_1,Q_2}$ and $H_{R_2,Q_2}$ be the homology hypersurfaces associated to their respective quantum maximin surfaces $M_{R_1,Q_1}$ and $M_{R_2,Q_2}$.
	Then the domain of dependence of $H_{R_1,Q_1}$ is contained in that of $H_{R_2,Q_2}$, with $M_{R_1,Q_1}$ spacelike from $M_{R_2,Q_2}$.
	Furthermore, $M_{R_1,Q_1}$ and $M_{R_2,Q_2}$ are minimal on the same time slice.
\end{thm}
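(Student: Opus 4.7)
The plan is to mimic the proof of Theorem \ref{thm:Nesting} essentially verbatim, replacing the bulk generalized entropy by the hybrid entropy and tracking carefully where the extra nongravitational subsystems $Q_1 \subseteq Q_2$ enter. First I would set up a joint maximinimization of $\alpha S_{\mathrm{hyb}}[M_1;Q_1] + \beta S_{\mathrm{hyb}}[M_2;Q_2]$ over pairs of acausal surfaces $M_1,M_2$ on a common Cauchy slice $C$, homologous to $R_1,R_2$ respectively (with $\alpha,\beta>0$). Existence follows exactly as in Section~\ref{sec:existence}, using the trick of section~\ref{sec:nonholographic} of throwing $Q_1,Q_2$ into an auxiliary holographic theory with $G_N'\to 0$. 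The goal is to show the resulting $M_1,M_2$ are the genuine quantum maximin surfaces $M_{R_1,Q_1},M_{R_2,Q_2}$, and that their homology hypersurfaces satisfy $H_1\subseteq H_2$ and are spacelike.

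Next I would rule out the six nesting-violation point types (a)--(f) from the proof of Theorem~\ref{thm:Nesting}. Define $H_1'=\mathrm{Closure}(\mathrm{Int}(H_1)\cap\mathrm{Int}(H_2))$ and $H_2'=H_1\cup H_2$, with induced surfaces $M_1',M_2'$. Since $Q_1\subseteq Q_2$ one has the identifications $(H_1\cup Q_1)\cap(H_2\cup Q_2)=H_1'\cup Q_1$ and $(H_1\cup Q_1)\cup(H_2\cup Q_2)=H_2'\cup Q_2$, so bulk strong subadditivity applied to the systems $H_1\cup Q_1$ and $H_2\cup Q_2$ yields
\begin{equation}
S_{\mathrm{vN}}[H_1'\cup Q_1]+S_{\mathrm{vN}}[H_2'\cup Q_2]\le S_{\mathrm{vN}}[H_1\cup Q_1]+S_{\mathrm{vN}}[H_2\cup Q_2]~.
\end{equation}
Combining with the area inequality $A[M_1']+A[M_2']\le A[M_1]+A[M_2]$ (strict if points of type (f) exist) and the corner-smoothing argument of remark~\ref{remark:corners} (which is oblivious to the presence of $Q$), strict inequality in the sum of hybrid entropies obtains whenever points of type (a)--(f) other than tangential intersections exist. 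Minimality on $C$ then forces $H_1\subseteq H_2$ and $M_1\cap M_2$ to consist only of tangential coincident components.

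Ruling out null separation proceeds exactly as in Theorem~\ref{thm:Nesting} using Wall's lemma (Lemma~\ref{Wallemma}) plus the quantum null generic condition; the version of the QFC needed to upper bound $S_{\mathrm{hyb}}$ on a representative is justified in Section~\ref{sec:nonholographic} by the same auxiliary-gravity limit. The ruling out of tangential divergences uses the spacelike Lemma~\ref{SpacelikeLemma}, whose extension to hybrid entropy again follows from the $G_N'\to 0$ limit together with the observation that any timelike deformation of either surface away from the touching locus preserves $Q_1\subseteq Q_2$ and therefore leaves the SSA input intact.

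The main obstacle, and the place where I would spend the most care, is the final step showing independent quantum extremality on the coincident component $M_C$, since this is where the nonlocality of $S_{\mathrm{vN}}$ mixes with the fact that $Q_1\neq Q_2$. Following Theorem~\ref{thm:Nesting}, on $M_C$ the combined minimality forces $v^a {\cal K}_a[M_1;Q_1]=v^a {\cal K}_a[M_2;Q_2]=0$ for every $v^a$ tangent to $C$; since $S_{\mathrm{grav}}$ is local and contributes identically to both sides, this becomes a statement about the nonlocal entropy variations $v^a{\cal D}S_{\mathrm{vN}}[H_i\cup Q_i]/{\cal D}X^a$. Assuming for contradiction that they differ along some timelike $t^a$ by $\kappa\neq 0$, I would construct $w^a=v^a-\mathrm{sign}(\kappa)\epsilon\, t^a$ pointing outward from both $H_1$ and $H_2$, and show that the resulting strict inequality of spacelike entropy derivatives contradicts strong subadditivity applied to the pair $(H_1\cup Q_1,\,H_2\cup Q_2)$; the inclusion $Q_1\subseteq Q_2$ is exactly what makes SSA applicable here. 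This yields ${\cal K}_a[M_1;Q_1]={\cal K}_a[M_2;Q_2]$ on $M_C$, hence independent quantum extremality, and the standard representative argument (using QFC as above) then identifies $M_1,M_2$ with the minimal-hybrid-entropy extremal surfaces $M_{R_1,Q_1}$ and $M_{R_2,Q_2}$.
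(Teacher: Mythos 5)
Your proposal is correct and follows essentially the same route as the paper's proof: a joint maximinimization of the weighted hybrid entropies, ruling out the nesting-violating point types via strong subadditivity applied to $H_1\cup Q_1$ and $H_2\cup Q_2$ (your explicit identification of the intersection and union as $H_1'\cup Q_1$ and $H_2'\cup Q_2$, using $Q_1\subseteq Q_2$, is exactly the input the paper invokes), followed by the same Wall's-lemma, QFC-via-$G_N'\to 0$, and coincident-component arguments. The only difference is one of exposition: you spell out the final independent-extremality step on $M_C$, which the paper compresses into ``arguments directly analogous to those in Section~\ref{sec:applications}.''
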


\begin{proof}
      This proof is nearly identical to the nesting proof in section \ref{sec:applications}. 
      The main difference is that the entropy inequalities now involve the additional systems $Q_1,Q_2$.
      We demonstrate that, nevertheless, nesting of the quantum maximin surface follows from strong subadditivity of the von Neumann entropy of the bulk fields and $Q_1,Q_2$ degrees of freedom. 
      
    Because of the similarity, here we only include the main points.
    We refer the reader to the proof of nesting in Section \ref{sec:applications} for additional details.
        We consider maximinimizing the quantity 
        $\alpha (S_\mathrm{grav}[M_1] + S_\mathrm{vN}[H_1 \cup Q_1]) + \beta (S_\mathrm{grav}[M_2] + S_\mathrm{vN}[H_2 \cup Q_2]))$, 
        for $M_{1}$ and $M_{2}$ acausal and homologous to $R_1, R_2$ respectively and $\alpha,\beta$ arbitrary positive real numbers.
        Here $H_1$ and $H_2$ are the homology hypersurfaces associated to $M_1$ and $M_2$ respectively.
        The surfaces $M_1$, $M_2$ found this way are both minimal hybrid entropy surfaces defined on the same time slice $C$. We will eventually show that they are in fact the quantum maximin surfaces $M_{R_1,Q_1}$ and $M_{R_2,Q_2}$.

We therefore want to show that $H_1 \subset H_2$. We will also need to show that $M_1 \cap M_2$ is a closed and open subset of $M_1$ and $M_2$ (i.e. they only intersect on entire connected components). To do so, there are exactly six types of bulk points we need to rule out, analogous to \eqref{pt:a}-\eqref{pt:f}.

Define $H_1' = \mathrm{Closure} (\mathrm{Int}(H_1) \cap \mathrm{Int}(H_2))$ and $H_2' = H_1 \cup H_2$ and define surfaces $M_1',M_2'$ by $\partial H_1' = R_1 \cup M_1'$ and $\partial H_2' = R_2 \cup M_2'$. See Figure~\ref{fig:HybridEWN}.

We now arrive at the primary difference between this proof and the one when $Q_1$ and $Q_2$ are trivial, as in Section~\ref{sec:applications}. 
We wish to write
\begin{align}
            S_{\mathrm{vN}}[H_1'\cup Q_1] + S_{\mathrm{vN}}[H_2' \cup Q_2] \le S_{\mathrm{vN}}[H_1 \cup Q_1] + S_{\mathrm{vN}}[H_2 \cup Q_2]~.
\end{align}
The analogous statement in Section~$\ref{sec:applications}$ followed from strong subadditivity of the bulk von Neumann entropy.
Here it follows from SSA of the von Neumann entropy of the bulk together with the systems $Q_1,Q_2$. 
That these systems together satisfy SSA is a very weak assumption -- so weak that it is hardly worth stating explicitly.

It is furthermore true that 
	\begin{align}\label{eqn:nestingareas_nonholographic}
            S_\mathrm{grav}[M_1'] + S_\mathrm{grav}[M_2'] \le S_\mathrm{grav}[M_1] + S_\mathrm{grav}[M_2]~.
    \end{align}
Note now that the surfaces $M_1',M_2'$ will in general have corners, which we must treat carefully because they have ill-defined extrinsic curvatures and therefore poorly defined higher derivative corrections to the geometric part of the generalized entropy. 
		To handle this, we define $M_1',M_2'$ with these corners ``smoothed out'' at a scale large relative to the Planck length and small compared to the bulk field theory scale.  
		As in section \ref{sec:applications}, we take this to reduce the geometric part of the generalized entropy without changing the renormalized von Neumann entropy part. 
		Hence, if there are any corners present then \eqref{eqn:nestingareas_nonholographic} is strict after smoothing.
		
 Therefore, in general
		\begin{align}	
			S_\mathrm{grav}[M_1'] &+ S_{\mathrm{vN}}[H_1'\cup Q_1] + S_\mathrm{grav}[M_2'] + S_{\mathrm{vN}}[H_2' \cup Q_2] \\
			&< S_\mathrm{grav}[M_1] + S_{\mathrm{vN}}[H_1 \cup Q_1] + S_\mathrm{grav}[M_2] + S_{\mathrm{vN}}[H_2 \cup Q_2]~,
		\end{align}
		where the inequality is strict if $(M_1',M_2') \neq (M_{1},M_{2})$.
		This contradicts the minimality of $M_{R',Q'}$ and $M_{R,Q}$, so it must be the case that $M_1' = M_{1}$ and $M_2' = M_{2}$.
		
Arguments directly analogous to those in section \ref{sec:applications} imply (1) $M_{R_1,Q_2}$ and $M_{R_2,Q_2}$ intersect only on entire connected components,
(2) points on $M_{R_1,Q_1}$ are not null-separated from points on $M_{R,Q}$, and therefore (3) $M_{R_1,Q_1}$ and $M_{R_2,Q_2}$ are both the minimal generalized entropy quantum extremal surfaces and hence maximin surfaces.
\end{proof}

\begin{figure}
\centering
\includegraphics{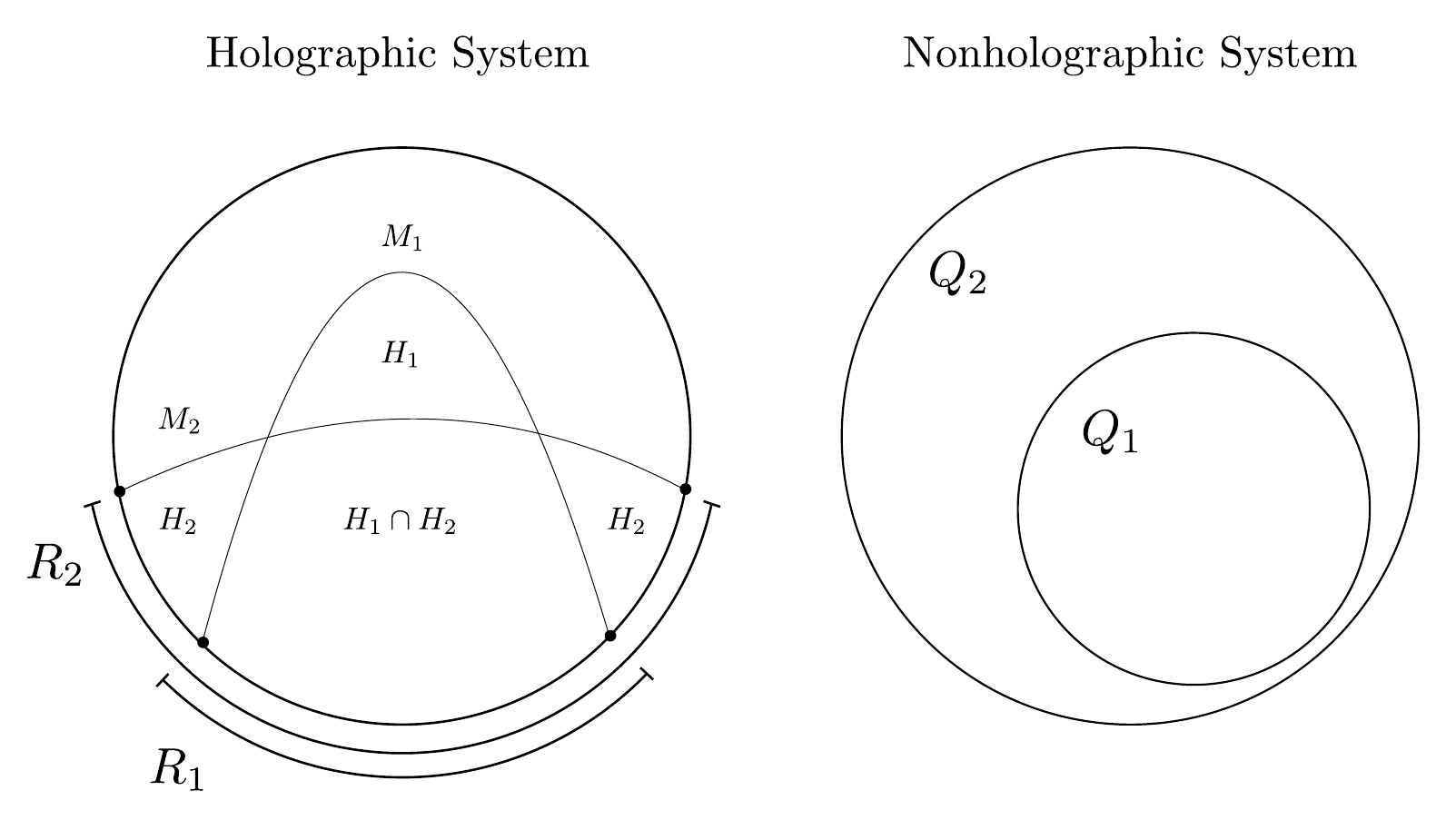}
\caption{The Hybrid entropy quantum maximin surfaces satisfy an appropriate extension of entanglement wedge nesting. It is shown by contradiction that we must have $H_{R_1,Q_1} \subset H_{R_2,Q_2}$ when $R_1 \subseteq D[R_2]$, and $Q_1 \subset Q_2$ thus proving entanglement wedge nesting for the hybrid entropy.}
\label{fig:HybridEWN}
\end{figure}

\subsubsection*{SSA}
We now prove that strong subadditivity is respected by quantum maximin surfaces of systems including nonholographic subsystems.

\begin{thm}
	Let $R_1$, $R_2$, $R_3$ be disjoint boundary regions and let $Q_1$, $Q_2$, $Q_3$ be nonholographic quantum subsystems.
	Let $M_{ij\dots}$ and $H_{ij\dots}$denote the quantum maximin surface and homology hypersurface respectively associated to the nonholographic subsystem $Q_i \otimes Q_j \dots$ together with the holographic boundary region $R_i \cup R_j \dots$.
	Then strong subadditivity holds:
	\begin{equation}
	\begin{split}
		\bigg(S_{\mathrm{grav}}[M_{12}&] + S_\mathrm{vN}[H_{12} \cup Q_1 Q_2]\bigg) + \bigg(S_{\mathrm{grav}}[M_{23}] + S_\mathrm{vN}[H_{23} \cup Q_2 Q_3]\bigg) 
		\ge \\ 
		\bigg(&S_{\mathrm{grav}}[M_{123}] + S_\mathrm{vN}[H_{123} \cup Q_1 Q_2 Q_3]\bigg) + \bigg(S_{\mathrm{grav}}[M_{2}] + S_\mathrm{vN}[H_{2} \cup Q_2]\bigg)~.
	\end{split}
	\end{equation}
\end{thm}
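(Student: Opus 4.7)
The plan is to mirror the SSA proof of the previous section as closely as possible, with the von Neumann entropies now including the nonholographic factors $Q_i$, and to check at each step that the extra $Q_i$ tensor factors do not obstruct the argument. First, I would invoke the nesting result (Theorem~\ref{thm:QEInesting}) applied to all pairs $(R_2,Q_2)\subset(R_1\cup R_2,Q_1\otimes Q_2)$, $(R_2,Q_2)\subset(R_2\cup R_3,Q_2\otimes Q_3)$, $(R_1\cup R_2,Q_1\otimes Q_2)\subset(R_1\cup R_2\cup R_3,Q_1\otimes Q_2\otimes Q_3)$, and similarly for $R_2\cup R_3$, in order to obtain a single Cauchy slice $C$ on which the surfaces $M_2$ and $M_{123}$ simultaneously achieve their minima.

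Next, I would use the hybrid version of the quantum focusing conjecture (established in the representative discussion above the theorem, via the $G_N'\to 0$ auxiliary-gravitating-$Q$ trick) to pull $M_{12}$ and $M_{23}$ onto $C$ along their orthogonal null congruences, producing representatives $\widetilde{M}_{12}$ and $\widetilde{M}_{23}$ on $C$ satisfying
\begin{align}
S_{\mathrm{grav}}[\widetilde{M}_{12}]+S_{\mathrm{vN}}[\widetilde{H}_{12}\cup Q_1Q_2] &\le S_{\mathrm{grav}}[M_{12}]+S_{\mathrm{vN}}[H_{12}\cup Q_1Q_2],\\
S_{\mathrm{grav}}[\widetilde{M}_{23}]+S_{\mathrm{vN}}[\widetilde{H}_{23}\cup Q_2Q_3] &\le S_{\mathrm{grav}}[M_{23}]+S_{\mathrm{vN}}[H_{23}\cup Q_2Q_3].
\end{align}
Then, in direct analogy with the bulk SSA proof, I would define the combined homology regions $H'_{2}=\mathrm{Closure}(\mathrm{Int}(\widetilde{H}_{12})\cap\mathrm{Int}(\widetilde{H}_{23}))$ and $H'_{123}=\widetilde{H}_{12}\cup\widetilde{H}_{23}$, with induced surfaces $M'_2$ and $M'_{123}$ (corners smoothed at a scale between the Planck and field-theory cutoffs, so that $S_{\mathrm{grav}}$ decreases and the renormalized $S_{\mathrm{vN}}$ is unaffected, as in the corner-smoothing remark of Section~\ref{sec:applications}).

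The key non-trivial step is to apply strong subadditivity to the \emph{total} state on bulk$\,\cup\,Q_1\cup Q_2\cup Q_3$, taking as the three ``parties'' the regions $\widetilde{H}_{12}\cup Q_1Q_2\setminus(H'_2\cup Q_2)$, $H'_2\cup Q_2$, and $\widetilde{H}_{23}\cup Q_2Q_3\setminus(H'_2\cup Q_2)$, which is legitimate because the $Q_i$ are disjoint tensor factors and the bulk regions are spacelike-separated on $C$. This gives
\begin{equation}
S_{\mathrm{vN}}[\widetilde{H}_{12}\cup Q_1Q_2]+S_{\mathrm{vN}}[\widetilde{H}_{23}\cup Q_2Q_3]\ge S_{\mathrm{vN}}[H'_{123}\cup Q_1Q_2Q_3]+S_{\mathrm{vN}}[H'_2\cup Q_2],
\end{equation}
which is the only place bulk SSA gets used and is as mild an assumption as in the pure-bulk case. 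Combined with the geometric inequality $S_{\mathrm{grav}}[\widetilde{M}_{12}]+S_{\mathrm{grav}}[\widetilde{M}_{23}]\ge S_{\mathrm{grav}}[M'_{123}]+S_{\mathrm{grav}}[M'_2]$ (strict when corners are present), this yields
\begin{equation}
S_{\mathrm{hyb}}[\widetilde{M}_{12}]+S_{\mathrm{hyb}}[\widetilde{M}_{23}]\ge S_{\mathrm{hyb}}[M'_{123}]+S_{\mathrm{hyb}}[M'_2].
\end{equation}
Finally, minimality of $M_{123}$ and $M_2$ on the common slice $C$ gives $S_{\mathrm{hyb}}[M'_{123}]\ge S_{\mathrm{hyb}}[M_{123}]$ and $S_{\mathrm{hyb}}[M'_2]\ge S_{\mathrm{hyb}}[M_2]$, and combining with the QFC inequalities from the second paragraph yields the claim.

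The main obstacle I anticipate is conceptual rather than computational: ensuring that the tripartite SSA decomposition really does apply to the hybrid Hilbert space, i.e.\ that the $Q_i$ factors can be cleanly partitioned alongside the bulk regions so that $Q_2$ lies in the intersection and $Q_1,Q_3$ in the exclusive parts. Given that the $Q_i$ are assumed disjoint nonholographic subsystems entering only through tensor product, this is automatic, but it is worth stating carefully. Everything else---existence of the common slice, the representative inequality, corner smoothing---has already been lifted to the hybrid setting earlier in this section.
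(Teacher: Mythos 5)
Your proposal is correct and follows essentially the same route as the paper's proof: nesting to place $M_2$ and $M_{123}$ on a common slice $C$, the hybrid QFC to pull $M_{12}$ and $M_{23}$ onto $C$ as representatives, SSA of the joint state on the bulk together with $Q_1,Q_2,Q_3$ applied to the intersection/union regions, the geometric inequality for the recombined surfaces, and minimality of $M_{123}$ and $M_2$ on $C$. Your explicit tripartite partition of the hybrid Hilbert space and the corner-smoothing remark are slightly more detailed than the paper's presentation but substantively identical.
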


\begin{proof}
	Our strategy is almost identical to the SSA proof in Section~\ref{sec:applications}.
	We first find representatives of $M_{12}$ and $M_{23}$ on the same time slice on which $M_{123}$ and $M_{2}$ lie.
	Then the minimality of $M_{123}$ and $M_2$, combined with strong subadditivity of von Neumann entropy, will imply strong subadditivity of the quantum maximin hybrid entropy, as written above.

		By theorem \ref{thm:QEInesting}, $M_{123}$ and $M_{2}$ lie on the same time slice $C$.
		Both $M_{12}$ and $M_{23}$ have representatives $\widetilde{M}_{12}$ and $\widetilde{M}_{23}$ on $C$, which bound regions $\widetilde{H}_{12},\widetilde{H}_{23} \subset C$ with smaller hybrid entropy than $M_{12},M_{23}$ respectively. 
		Let $c_2 = \widetilde{H}_{12} \cap \widetilde{H}_{23}$ and $c_{123} = \widetilde{H}_{12} \cup \widetilde{H}_{23}$. 
			Strong subadditivity of the von Neumann entropy implies
			\begin{align}
				S_\mathrm{vN}[\widetilde{H}_{12} \cup Q_{1}Q_2] + S_\mathrm{vN}[\widetilde{H}_{23} \cup Q_{2}Q_3] \ge S_\mathrm{vN}[c_{123} \cup Q_1 Q_2 Q_3 ] + S_\mathrm{vN}[c_2 \cup Q_2]~.
			\end{align}
			Add to this the inequality 
			\begin{align}
			    S_\mathrm{grav}[\widetilde{H}_{12}]+ S_\mathrm{grav}[\widetilde{H}_{23}] \ge S_\mathrm{grav}[c_{123}] + S_\mathrm{grav}[c_2]~.
			\end{align}
		    Minimality of $H_{123}$ and $H_{2}$ on $C$ implies
		\begin{equation}
		\begin{split}
			\bigg( S_\mathrm{grav}[c_{123}]&+ S_\mathrm{vN}[c_{123} \cup Q_1 Q_2 Q_3] \bigg) + \bigg( S_\mathrm{grav}[c_{2}] + S_\mathrm{vN}[c_{2} \cup Q_2] \bigg) \\
			\ge& 
			\bigg( S_\mathrm{grav}[M_{123}] + S_\mathrm{vN}[H_{123} \cup Q_1 Q_2 Q_3] \bigg) + \bigg( S_\mathrm{grav}[M_{2}] + S_\mathrm{vN}[H_{2} \cup Q_2] \bigg)~.
		\end{split}
		\end{equation}
		Combining all of these inequalities concludes the proof.

\end{proof}

\section{Conclusion}

A complete description of the quantum behavior of black holes is at the crux of the black hole information paradox in particular and quantum gravity in general. The recent discoveries that new and fundamentally quantum physics manifest already under the inclusion of \textit{perturbative} quantum corrections to the geometry provides an exciting and powerful approach towards developing a better understanding of black hole information conservation, the firewall problem~\cite{AMPS, AMPSS}, spacetime emergence, and the resolution of the black hole singularity. In this paper, we have provided a new arsenal of tools for investigations of holographic black holes in the perturbatively quantum regime in the form of a quantum maximin reformulation of quantum extremal surfaces. The tools are powerful enough to also be of utility in deriving consistency conditions for evaporation-inspired modifications of holographic entanglement entropy formulae. 

Let us first concretely enumerate the results of this article before speculating on further applications. We have presented a maximin construction of the minimal quantum extremal surface (QES) anchored to a boundary subregion.
We have also presented a generalization, in which the extremized quantity includes the entropy of the bulk region (bounded by the surface) union a (possibly) non-holographic system. This encompasses the ``quantum extremal islands'' (QEI) proposal.
Using these constructions, we proved that QESs and QEIs obey nesting and strong subadditivity (SSA).
Both of these properties are important consistency checks if QES and QEI are to compute entropies. 
We also found that while bulk SSA manifestly implies boundary SSA, boundary MMI requires a seven party inequality in the bulk. This indicates potential for an interesting new investigation of the quantum holographic entropy cone, with potential to shed light about the connections between inequalities obeyed by the bulk and inequalities obeyed by the boundary theory.

We have made powerful but common assumptions throughout.
Foremost is the Quantum Focusing Conjecture (QFC), which we used to prove the equivalence of quantum maximin surfaces and QES (and quantum maximin islands to QEI). 
Separately we used it to prove nesting and SSA. The novelty of our application of the QFC is twofold: first, in its use in non-reflecting boundary conditions (see also~\cite{AMM}), and second, in our application of it to the hybrid entropy. 

The behavior of QESs (and now QEIs) surely holds part -- though not all -- of the key to understanding the bulk mechanism that implements unitarity. Ultimately, however, the QES describes the evolution of the entropy as dictated by the dynamics in the theory; a QES in a non-unitary theory would have to give a non-unitary evolution of the entropy (as illustrated in~\cite{AkersEngelhardtHarlow}): the EW prescription would be a poor prescription indeed if it consistently gave a unitary answer no matter the dynamics of the theory. However, more control and a better understanding of QESs gives us the power to ask: how must unitary dynamics of quantum gravity behave if the entropy evolution they dictate are to be described by quantum extremal surfaces?~\cite{AkersEngelhardtHarlow} gave such a toy model; the tools and understanding of QESs developed in this article may well pave the way for more comprehensive investigations and greater enlightenment about the quantum nature of black holes.



\section*{Acknowledgments}
It is a pleasure to thank D. Harlow, T. Hartman, P. Rath, A. Shahbazi-Moghaddam, and X. Zhou for discussions. CA is supported by the US Department of Energy grants DE-SC0018944 and DE-SC0019127, and also the Simons foundation as a member of the It from Qubit collaboration. NE is supported by the MIT department of physics and for part of this collaboration was supported by the Princeton Gravity Initiative and NSF grant No. PHY-1620059. GP is supported in part by AFOSR award FA9550-16-1- 0082 and DOE award {DE-SC0019380}. MU is supported in part by the National Science Foundation Graduate Research Fellowship Program under Grant No. DGE 1752814; by the Berkeley Center for Theoretical Physics; by the Department of Energy, Office of Science, Office of High Energy Physics under QuantISED Award DE-SC0019380 and under contract DE-AC02-05CH11231; and by the National Science Foundation under grant PHY1820912.

\appendix

\section{Transplanckian oscillations and the existence of minimal generalized entropy surfaces}\label{appendix}

In this section we elaborate on the question of existence of quantum maximin surfaces as discussed in Section \ref{sec:existence}. As argued for classical maximin in~\cite{Wal12}, the area functional on the space of surfaces, homologous to a boundary subregion $R$, on a single Cauchy slice is lower semi-continuous. That is, a small surface deformation can arbitrarily increase the area but cannot arbitrarily decrease it. This implies that a minimal area surface exists on any given Cauchy slice.

For quantum maximin surfaces, we would like to analogously show that a minimum generalized entropy surface, homologous to the appropriate boundary subregion, exists on all Cauchy slices. Thus the natural question is whether the generalized entropy is a lower semi-continuous functional on the space of relevant surfaces. If the surface has transplanckian fluctuations then higher curvature corrections contribute to $S_{\mathrm{grav}}$ at the same order as the area term, and so the functional need not be lower semi-continuous. In fact, we will argue that $S_{\mathrm{grav}}$ will not converge for such surfaces, and so it is unclear if generalized entropy is well defined for surfaces with transplanckian fluctuations.

We will work with a specific Lagrangian and construct a surface deformation that naively reduces $S_{\mathrm{grav}}$. Any initial surface can then be deformed everywhere by the constructed deformation, which would decrease the entropy by a macroscopic amount. From this we can conclude that minimization over all surfaces will naively not converge to the extremal surface, instead favoring a surface with sharp oscillations. The resolution to this is that we are unfairly truncating the Lagrangian and gravitational entropy functional. We should include all possible higher derivative terms in the action with appropriate EFT couplings. When this is done we argue that the entropy functional will not converge, and so when minimizing over all surfaces we should not take surfaces with transplanckian fluctuations seriously.

We begin by introducing some notation. A codimension-two surface has $d-2$ tangent coordinates $y^i$ with embedding coordinates $X^\mu(y^i)$. The surface has induced metric $h_{ij}$ with associated Christoffel symbols $\gamma_{ij}^{k}$. Surfaces indices will be denoted by $i,j,k$ while $\mu, \nu, \sigma$ will be bulk spacetime indices. The bulk metric is $g_{\mu \nu}$ with Christoffel symbols $\Gamma^{\mu}_{\nu \sigma}$. The extrinsic curvature and normal projector to the surface are~\cite{AkeCha17}

\begin{equation}
    K^{\mu}_{ij} = \partial_{i} \partial_{j} X^{\mu} + \gamma_{ij}^{k} \partial_{k} X^{\mu} - \Gamma^{\mu}_{\nu \sigma} \partial_{i} X^{\nu} \partial_{j} X^{\sigma},
\end{equation}

\begin{equation}
    K^\mu = h^{ij}  K^{\mu}_{ij},
\end{equation}

\begin{equation}
    N_{\mu \nu} = g_{\mu \nu} - (\partial_{i} X_{\mu}) (\partial_{j} X_{\nu}) h^{ij}.
\end{equation}

\noindent We'll work with the following simple Lagrangian with $\lambda>0$ an order one dimensionless coupling and $l$ the EFT length scale, and we take our gravitational entropy functional to be the one computed by~\cite{Don13}:

\begin{equation}
    \mathcal{L} = \frac{1}{16 \pi G_{N}} (R + \lambda l^{2} R_{\mu \nu}^{2}),
\end{equation}

\begin{equation}
    S_{\mathrm{grav}} = \frac{1}{4 G_{N}} \int \sqrt{|h|} (1+ \lambda l^{2} N_{\mu \nu}(R^{\mu \nu}-\frac{1}{2} K^{\mu} K^{\nu})).
\end{equation}

For simplicity, we work in $2+1$d Minkowski space and take our initial surface to be a line. We take our surface deformation to be a sharp bump function. The embedding coordinates are $X^{\mu} (y) = (0,y,f(y))$ with $f(y)=L_{1} \exp (\frac{-\alpha}{1-y^{2}/L_{2}^{2}})$. $L_{1}$ and $L_{2}$ are the two length scales of the deformation and $\alpha$ is a dimensionless parameter. We find the extrinsic curvature term to be

\begin{equation}
    N_{\mu \nu}K^\mu K^\nu = \frac{\ddot{f}^2}{(1+\dot{f}^2)^3}.
\end{equation}

\noindent Since the derivative of the bump function is small we can expand the induced metric in the entropy functional

\begin{equation}
S_{\mathrm{grav}} \approx 
    \frac{1}{4 G_{N}} \int \sqrt{|h|} (1+\frac{1}{2}\dot{f}^2) (1-\frac{1}{2} \lambda l^2 K^2) \approx \frac{1}{4 G_{N}} \int \sqrt{|h|} (1+\frac{1}{2}\dot{f}^2 -\frac{1}{2}\lambda l^2 K^2),
\end{equation}

\noindent where we have dropped higher order terms. From the above it is clear that $S_{\text{Dong}}$ decreases when $\dot{f}^2 - \lambda l^{2} K^{2} < 0$. We set $\lambda=1$ to simplify the analysis and we work in units where the EFT length scale is $l=1$. Now since the bump function always has an inflection point where the extrinsic curvature vanishes we cannot always satisfy this inequality. However, there exist profiles that on average satisfy this inequality. As can be checked numerically for the exact functional, $L_{1} = L_{2} =5$ and $\alpha = 10$ gives a bump  function deformation of the line that decreases the entropy, and we expect such deformations to generically exist for all initial surfaces. Note that in this example the extrinsic curvature of the bump is small.

The resolution to the above is that we have ignored all higher order terms. While higher order corrections will be suppressed by additional factors of the EFT length scale $l$, they can also involve higher derivatives of the extrinsic curvature, which corresponds to higher derivatives of $f$. As an example of such a term, adding $\nabla_{\mu}R_{\nu \sigma} \nabla^{\mu}R^{\nu \sigma}$ to the Lagrangian will modify the entropy functional to include~\cite{GuoMia14} $(\nabla K)^2$, where we have omitted indices to simplify notation. 

For deformations which oscillate on scales that are smaller than the EFT scale, we expect that we will find a tower of corrections, each larger than the one before. Thus, we cannot say that the deformation 'decreases' $S_{\mathrm{gen}}$ since the entropy functional does not even converge.  

We expect surface deformations without such problems to never be able to decrease the generalized entropy by an unbounded amount, and so we expect a well behaved minimal generalized entropy surface to exist. 

\bibliographystyle{jhep}
\bibliography{all}

\providecommand{\href}[2]{#2}\begingroup\raggedright\begin{thebibliography}{10}

\bibitem{CzeKar12}
B.~Czech, J.~L. Karczmarek, F.~Nogueira, and M.~Van~Raamsdonk, {\it {The
  Gravity Dual of a Density Matrix}},  {\em Class.Quant.Grav.} {\bf 29} (2012)
  155009, [\href{http://arxiv.org/abs/1204.1330}{{\tt arXiv:1204.1330}}].

\bibitem{Wal12}
A.~C. Wall, {\it {Maximin Surfaces, and the Strong Subadditivity of the
  Covariant Holographic Entanglement Entropy}},  {\em Class.Quant.Grav.} {\bf
  31} (2014), no.~22 225007, [\href{http://arxiv.org/abs/1211.3494}{{\tt
  arXiv:1211.3494}}].

\bibitem{DonHar16}
X.~Dong, D.~Harlow, and A.~C. Wall, {\it {Reconstruction of Bulk Operators
  within the Entanglement Wedge in Gauge-Gravity Duality}},  {\em Phys. Rev.
  Lett.} {\bf 117} (2016), no.~2 021601,
  [\href{http://arxiv.org/abs/1601.05416}{{\tt arXiv:1601.05416}}].

\bibitem{AlmDon14}
A.~Almheiri, X.~Dong, and D.~Harlow, {\it {Bulk Locality and Quantum Error
  Correction in AdS/CFT}},  {\em JHEP} {\bf 04} (2015) 163,
  [\href{http://arxiv.org/abs/1411.7041}{{\tt arXiv:1411.7041}}].

\bibitem{Har16}
D.~Harlow, {\it {The Ryu-Takayanagi Formula from Quantum Error Correction}},
  {\em Commun. Math. Phys.} {\bf 354} (2017), no.~3 865--912,
  [\href{http://arxiv.org/abs/1607.03901}{{\tt arXiv:1607.03901}}].

\bibitem{HaPPY}
F.~Pastawski, B.~Yoshida, D.~Harlow, and J.~Preskill, {\it {Holographic quantum
  error-correcting codes: Toy models for the bulk/boundary correspondence}},
  {\em JHEP} {\bf 06} (2015) 149, [\href{http://arxiv.org/abs/1503.06237}{{\tt
  arXiv:1503.06237}}].

\bibitem{HayPen18}
P.~Hayden and G.~Penington, {\it {Learning the Alpha-bits of Black Holes}},
  \href{http://arxiv.org/abs/1807.06041}{{\tt arXiv:1807.06041}}.

\bibitem{AkeLei19}
C.~Akers, S.~Leichenauer, and A.~Levine, {\it Large breakdowns of entanglement
  wedge reconstruction},  {\em arXiv preprint arXiv:1908.03975} (2019).

\bibitem{Pen19}
G.~Penington, {\it {Entanglement Wedge Reconstruction and the Information
  Paradox}},  \href{http://arxiv.org/abs/1905.08255}{{\tt arXiv:1905.08255}}.

\bibitem{AEMM}
A.~Almheiri, N.~Engelhardt, D.~Marolf, and H.~Maxfield, {\it {The entropy of
  bulk quantum fields and the entanglement wedge of an evaporating black
  hole}},  \href{http://arxiv.org/abs/1905.08762}{{\tt arXiv:1905.08762}}.

\bibitem{EngWal14}
N.~Engelhardt and A.~C. Wall, {\it {Quantum Extremal Surfaces: Holographic
  Entanglement Entropy beyond the Classical Regime}},  {\em JHEP} {\bf 01}
  (2015) 073, [\href{http://arxiv.org/abs/1408.3203}{{\tt arXiv:1408.3203}}].

\bibitem{Bek72}
J.~D. Bekenstein, {\it Black holes and the second law},  {\em Nuovo Cim. Lett.}
  {\bf 4} (1972) 737--740.

\bibitem{IyeWal94}
V.~Iyer and R.~M. Wald, {\it Some properties of {N}oether charge and a proposal
  for dynamical black hole entropy},  {\em Phys. Rev. D} {\bf 50} (1994)
  846--864,
  [\href{http://arxiv.org/abs/http://arXiv.org/abs/gr-qc/9403028}{{\tt
  http://arXiv.org/abs/gr-qc/9403028}}].

\bibitem{Lei17}
S.~Leichenauer, {\it {The Quantum Focusing Conjecture Has Not Been Violated}},
  \href{http://arxiv.org/abs/1705.05469}{{\tt arXiv:1705.05469}}.

\bibitem{AkeCha17}
C.~Akers, V.~Chandrasekaran, S.~Leichenauer, A.~Levine, and
  A.~Shahbazi~Moghaddam, {\it {The Quantum Null Energy Condition, Entanglement
  Wedge Nesting, and Quantum Focusing}},
  \href{http://arxiv.org/abs/1706.04183}{{\tt arXiv:1706.04183}}.

\bibitem{Wal10Proofs}
A.~C. Wall, {\it {Ten Proofs of the Generalized Second Law}},  {\em JHEP} {\bf
  06} (2009) 021, [\href{http://arxiv.org/abs/0901.3865}{{\tt
  arXiv:0901.3865}}].

\bibitem{RyuTak06}
S.~Ryu and T.~Takayanagi, {\it {Holographic derivation of entanglement entropy
  from AdS/CFT}},  {\em Phys.Rev.Lett.} {\bf 96} (2006) 181602,
  [\href{http://arxiv.org/abs/hep-th/0603001}{{\tt hep-th/0603001}}].

\bibitem{HubRan07}
V.~E. Hubeny, M.~Rangamani, and T.~Takayanagi, {\it {A Covariant holographic
  entanglement entropy proposal}},  {\em JHEP} {\bf 0707} (2007) 062,
  [\href{http://arxiv.org/abs/0705.0016}{{\tt arXiv:0705.0016}}].

\bibitem{MarWal19}
D.~Marolf, A.~C. Wall, and Z.~Wang, {\it {Restricted Maximin surfaces and HRT
  in generic black hole spacetimes}},
  \href{http://arxiv.org/abs/1901.03879}{{\tt arXiv:1901.03879}}.

\bibitem{HeaTak}
M.~Headrick and T.~Takayanagi, {\it {A Holographic proof of the strong
  subadditivity of entanglement entropy}},  {\em Phys.Rev.} {\bf D76} (2007)
  106013, [\href{http://arxiv.org/abs/0704.3719}{{\tt arXiv:0704.3719}}].

\bibitem{EngHarTA}
N.~Engelhardt and D.~Harlow, to appear.

\bibitem{BouCha19}
R.~Bousso, V.~Chandrasekaran, and A.~Shahbazi-Moghaddam, {\it Ignorance is
  cheap: From black hole entropy to energy-minimizing states in qft},  {\em
  arXiv preprint arXiv:1906.05299} (2019).

\bibitem{BouFis15}
R.~Bousso, Z.~Fisher, S.~Leichenauer, and A.~C. Wall, {\it {Quantum focusing
  conjecture}},  {\em Phys. Rev.} {\bf D93} (2016), no.~6 064044,
  [\href{http://arxiv.org/abs/1506.02669}{{\tt arXiv:1506.02669}}].

\bibitem{HayHeaMal13}
P.~Hayden, M.~Headrick, and A.~Maloney, {\it Holographic mutual information is
  monogamous},  {\em Physical Review D} {\bf 87} (2013), no.~4 046003.

\bibitem{AMMZ}
A.~Almheiri, R.~Mahajan, J.~Maldacena, and Y.~Zhao, {\it {The Page curve of
  Hawking radiation from semiclassical geometry}},
  \href{http://arxiv.org/abs/1908.10996}{{\tt arXiv:1908.10996}}.

\bibitem{PenShe19}
G.~Penington, S.~H. Shenker, D.~Stanford, and Z.~Yang, {\it {Replica wormholes
  and the black hole interior}},  \href{http://arxiv.org/abs/1911.11977}{{\tt
  arXiv:1911.11977}}.

\bibitem{AlmHar19}
A.~Almheiri, T.~Hartman, J.~Maldacena, E.~Shaghoulian, and A.~Tajdini, {\it
  {Replica Wormholes and the Entropy of Hawking Radiation}},
  \href{http://arxiv.org/abs/1911.12333}{{\tt arXiv:1911.12333}}.

\bibitem{AlmMahSan19}
A.~Almheiri, R.~Mahajan, and J.~E. Santos, {\it {Entanglement islands in higher
  dimensions}},  \href{http://arxiv.org/abs/1911.09666}{{\tt
  arXiv:1911.09666}}.

\bibitem{Wald}
R.~M. Wald, {\em General Relativity}.
\newblock The University of Chicago Press, Chicago, 1984.

\bibitem{EngFis19}
N.~Engelhardt and S.~Fischetti, {\it Surface theory: the classical, the
  quantum, and the holographic},  {\em arXiv preprint arXiv:1904.08423} (2019).

\bibitem{Haw75}
S.~W. Hawking, {\it Particle creation by black holes},  {\em Commun. Math.
  Phys.} {\bf 43} (1975) 199.

\bibitem{EpsGla65}
H.~Epstein, V.~Glaser, and A.~Jaffe, {\it Nonpositivity of the energy density
  in quantized field theories},  {\em Il Nuovo Cimento Series 10} {\bf 36}
  (1965) 1016--1022.

\bibitem{Cas48}
H.~B.~G. Casimir, {\it {On the Attraction Between Two Perfectly Conducting
  Plates}},  {\em Indag. Math.} {\bf 10} (1948) 261--263. [Kon. Ned. Akad.
  Wetensch. Proc.100N3-4,61(1997)].

\bibitem{Dav76}
P.~C.~W. Davies and S.~A. Fulling, {\it {Radiation from a moving mirror in
  two-dimensional space-time conformal anomaly}},  {\em Proc. Roy. Soc. Lond.}
  {\bf A348} (1976) 393--414.

\bibitem{DavUnr76}
P.~C.~W. Davies, S.~A. Fulling, and W.~G. Unruh, {\it {Energy Momentum Tensor
  Near an Evaporating Black Hole}},  {\em Phys. Rev.} {\bf D13} (1976)
  2720--2723.

\bibitem{Dav77}
P.~C.~W. Davies and S.~A. Fulling, {\it {Radiation from Moving Mirrors and from
  Black Holes}},  {\em Proc. Roy. Soc. Lond.} {\bf A356} (1977) 237--257.

\bibitem{Bek73}
J.~D. Bekenstein, {\it Black holes and entropy},  {\em Phys. Rev. D} {\bf 7}
  (1973) 2333.

\bibitem{Wal10QST}
A.~C. Wall, {\it {The Generalized Second Law implies a Quantum Singularity
  Theorem}},  {\em Class.Quant.Grav.} {\bf 30} (2013) 165003,
  [\href{http://arxiv.org/abs/1010.5513}{{\tt arXiv:1010.5513}}].

\bibitem{BouEng15c}
R.~Bousso and N.~Engelhardt, {\it {Generalized Second Law for Cosmology}},
  {\em Phys. Rev.} {\bf D93} (2016), no.~2 024025,
  [\href{http://arxiv.org/abs/1510.02099}{{\tt arXiv:1510.02099}}].

\bibitem{Don13}
X.~Dong, {\it {Holographic Entanglement Entropy for General Higher Derivative
  Gravity}},  {\em JHEP} {\bf 01} (2014) 044,
  [\href{http://arxiv.org/abs/1310.5713}{{\tt arXiv:1310.5713}}].

\bibitem{GuoMia14}
R.-X. Miao and W.-z. Guo, {\it Holographic entanglement entropy for the most
  general higher derivative gravity},  {\em Journal of High Energy Physics}
  {\bf 2015} (2015), no.~8 31.

\bibitem{Cam13}
J.~Camps, {\it {Generalized entropy and higher derivative Gravity}},  {\em
  JHEP} {\bf 03} (2014) 070, [\href{http://arxiv.org/abs/1310.6659}{{\tt
  arXiv:1310.6659}}].

\bibitem{DonLew17}
X.~Dong and A.~Lewkowycz, {\it {Entropy, extremality, euclidean variations, and
  the equations of motion}},  \href{http://arxiv.org/abs/1705.08453}{{\tt
  arXiv:1705.08453}}.

\bibitem{DonMar19}
X.~Dong and D.~Marolf, {\it {One-loop universality of holographic codes}},
  \href{http://arxiv.org/abs/1910.06329}{{\tt arXiv:1910.06329}}.

\bibitem{FurSol94}
D.~V. Fursaev and S.~N. Solodukhin, {\it {On one loop renormalization of black
  hole entropy}},  {\em Phys. Lett.} {\bf B365} (1996) 51--55,
  [\href{http://arxiv.org/abs/hep-th/9412020}{{\tt hep-th/9412020}}].

\bibitem{Gal99}
G.~J. Galloway, {\it {Maximum principles for null hypersurfaces and null
  splitting theorems}},  {\em Annales Henri Poincare} {\bf 1} (2000) 543--567,
  [\href{http://arxiv.org/abs/math/9909158}{{\tt math/9909158}}].

\bibitem{BouFis15b}
R.~Bousso, Z.~Fisher, J.~Koeller, S.~Leichenauer, and A.~C. Wall, {\it {Proof
  of the Quantum Null Energy Condition}},  {\em Phys. Rev.} {\bf D93} (2016),
  no.~2 024017, [\href{http://arxiv.org/abs/1509.02542}{{\tt
  arXiv:1509.02542}}].

\bibitem{KoeLei15}
J.~Koeller and S.~Leichenauer, {\it {Holographic Proof of the Quantum Null
  Energy Condition}},  {\em Phys. Rev.} {\bf D94} (2016), no.~2 024026,
  [\href{http://arxiv.org/abs/1512.06109}{{\tt arXiv:1512.06109}}].

\bibitem{FuKoe17b}
Z.~Fu, J.~Koeller, and D.~Marolf, {\it {The Quantum Null Energy Condition in
  Curved Space}},  {\em Class. Quant. Grav.} {\bf 34} (2017), no.~22 225012,
  [\href{http://arxiv.org/abs/1706.01572}{{\tt arXiv:1706.01572}}]. [Erratum:
  Class. Quant. Grav.35,no.4,049501(2018)].

\bibitem{BalFau17}
S.~Balakrishnan, T.~Faulkner, Z.~U. Khandker, and H.~Wang, {\it {A General
  Proof of the Quantum Null Energy Condition}},
  \href{http://arxiv.org/abs/1706.09432}{{\tt arXiv:1706.09432}}.

\bibitem{LeiLev18}
S.~Leichenauer, A.~Levine, and A.~Shahbazi-Moghaddam, {\it {Energy density from
  second shape variations of the von Neumann entropy}},  {\em Phys. Rev.} {\bf
  D98} (2018), no.~8 086013, [\href{http://arxiv.org/abs/1802.02584}{{\tt
  arXiv:1802.02584}}].

\bibitem{BalCha19}
S.~Balakrishnan, V.~Chandrasekaran, T.~Faulkner, A.~Levine, and
  A.~Shahbazi-Moghaddam, {\it {Entropy Variations and Light Ray Operators from
  Replica Defects}},  \href{http://arxiv.org/abs/1906.08274}{{\tt
  arXiv:1906.08274}}.

\bibitem{FuKoe17}
Z.~Fu, J.~Koeller, and D.~Marolf, {\it {Violating the quantum focusing
  conjecture and quantum covariant entropy bound in $d\ge 5$ dimensions}},
  {\em Class. Quant. Grav.} {\bf 34} (2017), no.~17 175006,
  [\href{http://arxiv.org/abs/1705.03161}{{\tt arXiv:1705.03161}}].

\bibitem{AMM}
A.~Almheiri, R.~Mahajan, and J.~Maldacena, {\it Islands outside the horizon},
  \href{http://arxiv.org/abs/1910.11077}{{\tt arXiv:1910.11077}}.

\bibitem{BouCas14a}
R.~Bousso, H.~Casini, Z.~Fisher, and J.~Maldacena, {\it {Proof of a Quantum
  Bousso Bound}},  {\em Phys.Rev.} {\bf D90} (2014) 044002,
  [\href{http://arxiv.org/abs/1404.5635}{{\tt arXiv:1404.5635}}].

\bibitem{BouCas14b}
R.~Bousso, H.~Casini, Z.~Fisher, and J.~Maldacena, {\it {Entropy on a null
  surface for interacting quantum field theories and the Bousso bound}},
  \href{http://arxiv.org/abs/1406.4545}{{\tt arXiv:1406.4545}}.

\bibitem{bombieri2011geometric}
E.~Bombieri, {\em Geometric Measure Theory and Minimal Surfaces: Lectures given
  at a Summer School of the Centro Internazionale Matematico Estivo (C.I.M.E.)
  held in Varenna (Como), Italy, August 24 - September 2, 1972}.
\newblock C.I.M.E. Summer Schools. Springer Berlin Heidelberg, 2011.

\bibitem{Sor19}
J.~Sorce, {\it Holographic entanglement entropy is cutoff-covariant},  {\em
  Journal of High Energy Physics} {\bf 2019} (2019), no.~10 15.

\bibitem{FisMar14}
S.~Fischetti, D.~Marolf, and A.~C. Wall, {\it {A paucity of bulk entangling
  surfaces: AdS wormholes with de Sitter interiors}},  {\em Class.Quant.Grav.}
  {\bf 32} (2015), no.~6 065011, [\href{http://arxiv.org/abs/1409.6754}{{\tt
  arXiv:1409.6754}}].

\bibitem{BouEng15b}
R.~Bousso and N.~Engelhardt, {\it {Proof of a New Area Law in General
  Relativity}},  {\em Phys. Rev.} {\bf D92} (2015), no.~4 044031,
  [\href{http://arxiv.org/abs/1504.07660}{{\tt arXiv:1504.07660}}].

\bibitem{BaoNez15}
N.~Bao, S.~Nezami, H.~Ooguri, B.~Stoica, J.~Sully, and M.~Walter, {\it {The
  Holographic Entropy Cone}},  {\em JHEP} {\bf 09} (2015) 130,
  [\href{http://arxiv.org/abs/1505.07839}{{\tt arXiv:1505.07839}}].

\bibitem{AMPS}
A.~Almheiri, D.~Marolf, J.~Polchinski, and J.~Sully, {\it {Black Holes:
  Complementarity or Firewalls?}},  \href{http://arxiv.org/abs/1207.3123}{{\tt
  arXiv:1207.3123}}.

\bibitem{AMPSS}
A.~Almheiri, D.~Marolf, J.~Polchinski, D.~Stanford, and J.~Sully, {\it {An
  Apologia for Firewalls}},  \href{http://arxiv.org/abs/1304.6483}{{\tt
  arXiv:1304.6483}}.

\bibitem{AkersEngelhardtHarlow}
C.~Akers, N.~Engelhardt, and D.~Harlow, {\it {Simple holographic models of
  black hole evaporation}},  \href{http://arxiv.org/abs/1910.00972}{{\tt
  arXiv:1910.00972}}.

\end{thebibliography}\endgroup

\end{document}